\newtheorem{theorem}{Theorem}[section]
\newtheorem{lemma}[theorem]{Lemma}
\newtheorem{corollary}[theorem]{Corollary}
\theoremstyle{definition}
\newtheorem{example}[theorem]{Example}
\newtheorem{assumption}[theorem]{Assumption}
\theoremstyle{remark}
\newtheorem{remark}[theorem]{Remark}
\newcommand{\R}{\mathbb{R}}  
\newcommand{\Z}{\mathbb{Z}}
\newcommand{\E }{\mathbb{E}}  
\renewcommand{\P}{\mathbb{P}}  
\newcommand{\Var}{\operatorname{Var}}  
\newcommand{\eqdis}{\stackrel{d}{=}}
\newcommand{\distr}{\stackrel{d}{\to}}
\newcommand{\SC}{\operatorname{SC}} 
\newcommand{\SCB}{\operatorname{SCB}} 
\newcommand{\Pois}{\operatorname{Pois}} 
\newcommand{\NegBin}{\operatorname{NegBin}}
\newcommand{\calL}{\mathcal{L}}
\newcommand{\calE}{\mathcal{E}}
\newcommand{\calF}{\mathcal{F}}
\newcommand{\calG}{\mathcal{G}}
\newcommand{\calH}{\mathcal{H}}
\newcommand{\calI}{\mathcal{I}}
\def\bz{\boldsymbol{z}}
\def\b0{\boldsymbol{0}}
\title{Claims processing and costs under capacity constraints}
\author{Filip Lindskog and Mario V.~W\"uthrich}
\begin{document}
\maketitle

\begin{abstract}
Random delays between the occurrence of accident events and the corresponding reporting times of insurance claims is a standard feature of insurance data. The time lag between the reporting and the processing of a claim depends on whether the claim can be processed without delay as it arrives or whether it remains unprocessed for some time because of temporarily insufficient processing capacity that is shared between all incoming claims. We aim to explain and analyze the nature of processing delays and build-up of backlogs. 
We show how to select processing capacity optimally in order to minimize claims costs, taking delay-adjusted costs and fixed costs for claims settlement capacity into account. Theoretical results are combined with a large-scale numerical study that demonstrates practical usefulness
of our proposal.
\end{abstract}

\noindent {\bf Keywords}: claims processing, backlog, capacity constraints

\noindent {\bf JEL codes:} G22; G31

\section{Introduction}
Accident events give insurance policyholders the right to financial compensation. In most cases, the effective costs are not immediately known to the insurance company. Reporting delay is a standard feature of insurance data, and once a claim is reported, it is not necessarily settled quickly
since claims may take time to process. 
With an unlimited processing capacity, claims can be processed fast(er), however, economic reasons allow insurance companies
to only allocate a limited capacity to the claims settlement unit (process). Naturally, this
capacity should be bigger than the average claims volume, otherwise there
will be growing an infinitely large backlog of unprocessed claims. This is a consideration on average, which is essentially distorted by the fact
that claims occurrence and reporting can cluster, i.e., there may be
peaks of claims reportings, but also quiet periods where reportings are
below average. The question we study is how can the capacity be set
optimally so that backlogs of processing are not too large, and at the same time periods of low reportings do still not lead to very quiet times in
claims handling units. The latter implies high fixed costs, as an
inactive claims handling unit still needs to be compensated.
Generally too high backlogs also lead to additional costs, it is verified
that late claims settlements typically increase the claims costs. Thus, there
is a trade-off in costs between low and high claims processing capacity, and our aim is to study an optimal balance between the two.

The study of systems where constraints on processing capacity induces processing delays and dependence is at the heart of queueing theory. 
If we would be content with studying the system with incoming reported claims and outgoing processed claims without labeling the incoming reported claims, then our study could essentially be reduced to an application of standard queueing theory. However, insurance applications require the input to be labeled in the sense that incoming reported claims belong to different contract groups and we need to keep track of the evolution of processed claims for each such group that shares the processing capacity.    
This topic seems new to the actuarial literature because we did not find any literature on this topic. It has come to our attention because we have been approached by an insurer facing a backlog of unprocessed claims that needed to be worked off in a cost efficient way. This question is related to queueing theory from where we borrow some mathematical results. Nevertheless, many parts significantly differ from queueing theory, mathematically as well as from an interpretation and terminology point of view. 


Optimal capacity sizing for systems where users share the system's capacity has been studied in the operations research literature. Optimality may be considered in terms of stability of the system or in terms of maximization of profits generated by the system. The study by Maglaras and Zeevi \cite{Maglaras-Zeevi-2003} provides one example from this area of research.

In the actuarial literature there does not seem to exist works that study problems close to the one we consider. However, economic consequences of delays in claims settlement has indeed been studied. 
Boogaert and Haezendonck~\cite{Boogaert-Haezendonck-1989} consider claims arriving according to a homogeneous Poisson process. To the sequence of claims an i.i.d.~sequence of triplets $(X_n,H_n,V_n)$ is added, where $X_n$ is the claim size, $H_n$ the handling delay, and $V_n$ the payment delay. By considering an economic environment with time-varying inflation and interest rate, the present value of insurance liabilities is affected by possible dependence between the elements of the triplet $(X_n,H_n,V_n)$, such as positive dependence between claim size and handling delay.     
Huynh et al.~\cite{Huynh-et-al-2015} consider reported claims according to a compound Poisson model where incoming claims are either processed (and paid) immediately or investigated by the claim handler. Incoming claims are handled independently with equal probability of being investigated. An investigation causes a delay in the processing of the claim and also a claim cost whose distribution differs from that of claims that do not undergo investigation. The resulting surplus process can be seen as the output from a queueing system. The effects on the ruin probability of investigating claims, through delayed processing and modified claim cost, are investigated. 
Related studies of effects of delays in claims settling on ruin probabilities are 
Waters and Papatriandafylou \cite{Waters-Papatriandafylou-1985} and 
Albrecher et al.~\cite{Albrecher-et-al-2011}; see also references therein. 
Although there are studies on effects of processing and payment delays on liability values or ruin probabilities, we have not found literature that analyzes effects of capacity constraints on processing delays and delay-adjusted claims costs for contract groups that share processing capacity.     

The paper is organized as follows. 
Section \ref{sec:rep_pro_bac} defines the key variables for our study, motivates their relationships and explains the queueing system that arises. 
Section \ref{sec:costs} discusses costs for processing capacity, delay-adjusted claims costs, and introduces joint models for combined delay-adjusted claims costs and settlement costs that form the basis of the minimization problems we study. 
Section \ref{sec:processing_procedure} introduces the procedure for how claims are processed, taking into account that at each time different contract groups share the current processing capacity. 
Section \ref{sec:comp_backlog_exp} derives expressions for unconditional and conditional expectations of the number of claims in the backlog for different contract groups. These expectations are key ingredients in the cost minimization problems. 
Section \ref{sec:neg_bin_model} introduces the stochastic model that allows us to consider a realistic large-scale application of the framework and results presented earlier.     
Section \ref{sec:NN} explains in detail how we approximate terms appearing in the expressions for the backlog expectations by recurrent neural networks.  
Section \ref{sec:costs_negbin} solves the cost minimization problems numerically.  
Finally, we summarize in Section \ref{sec:summary}.

\section{Reported, processed, and backlog claims}\label{sec:rep_pro_bac}
\subsection{Definitions and assumptions}
We index the number of reported (R), processed (P), and backlog (B) claims by occurrence period and development period. 
Let $\{i_0,i_0+1,\dots,1,2,\dots\}$ be the index set for occurrence periods and let $\{0,1,\dots\}$ be the index set for development periods. 
Periods may refer to months, quarters, years, etc. The general index sets allow us to define and study the numbers of reported, processed and backlog claims as stochastic processes. 

Occurrence period $i$ starts at (calendar) time $i-1$ and ends at (calendar) time $i$.  
Let $R_{i,j}$ denote the number of reported claims due to accident events during occurrence period $i$ that are reported during development period $j$. Hence, these reportings occur between time $i-1+j$ and time $i+j$.     
We assume the existence of a non-random integer $J$ such that $R_{i,j}=0$ for any $j>J$. Hence, we assume a maximal reporting delay of $J+1$. 
Let $P_{i,j}$ denote the number of processed claims due to events during occurrence period $i$ that are processed during development period $j$. A claim cannot be processed before it has been reported. 
$R_{i,j}$ and $P_{i,j}$ are observable at time $i+j$. 
Let $B_{i,j}$ denote the number of backlog claims due to events during occurrence period $i$ that are already reported but in the backlog and, therefore, have not yet been processed by development period $j$. Set $B_{i,0}\equiv 0$, meaning that each new occurrence period starts without any backlog. 
$B_{i,j}$ is observable at time $i+j$ (and may be inferred at time $i+j-1$ depending on what other variables are observed, see \eqref{ax:Bdynamics} below). 
Let $C_t>0$ denote the number of claims that the insurer has the capacity to process during time period $t$, i.e., the time period between time $t-1$ and $t$. In general, $C_t$ is a random variable.    

The total number of reported claims during time period $t$ is 
\begin{align*}
R_t:=\sum_{i=t-J}^{t}R_{i,t-i}=\sum_{j=0}^{J}R_{t-j,j}.
\end{align*}
The total number of processed and backlog claims, respectively, during time period $t$ are  
\begin{align*}
P_t&:=\sum_{i=i_0}^{t}P_{i,t-i}=\sum_{j=0}^{t-i_0}P_{t-j,j}, \\ 
B_t&:=\sum_{i=i_0}^{t}B_{i,t-i}=\sum_{j=0}^{t-i_0}B_{t-j,j}. 
\end{align*} 
There is sufficient capacity during time period $t$ to process all not-yet-processed claims reported during period $t$ or earlier if the event $\SC_t$, given by 
\begin{align*}
\SC_t:=\big\{B_t+R_t\leq C_t\big\}, 
\end{align*}
is true. The inequality means that there is sufficient capacity to process both the backlog $B_t$ and the newly reported claims $R_t$. If there is not sufficient capacity, $\SC_t^c$, some backlog is forwarded to the next time period.   

We emphasize some general properties for reported, processed, and backlog claims. 
Each reported claim will at some point be processed, possibly by temporarily contributing to the backlog. A processed claim remains processed. A claim in the backlog either remains temporarily in the backlog or transitions into a processed claim (terminal state). In mathematical terms, any procedure for processing claims should satisfy the following axioms \eqref{ax:positive}-\eqref{ax:processed}:  
\begin{align}
& R_{i,j}\geq 0, \, B_{i,j}\geq 0, \, B_{i,0}=0, \, P_{i,j}\geq 0, \, \text{for all } i,j, \label{ax:positive} \\ 
& \sum_{j=0}^{\infty}R_{i,j}=\sum_{j=0}^{\infty}P_{i,j}, \, \text{for all } i, \label{ax:closed} \\
& B_{i,j+1}=B_{i,j}+R_{i,j}-P_{i,j}, \, \text{for all } i,j. \label{ax:Bdynamics}
\end{align}
An immediate consequence of \eqref{ax:positive} and \eqref{ax:Bdynamics} is 
\begin{align*}
\sum_{j=0}^{k}R_{i,j}=B_{i,k+1}+\sum_{j=0}^{k}P_{i,j}, \, \text{for all } i,k. 
\end{align*} 
The total number of processed claims during time period $t$ is 
\begin{align}\label{ax:processed}
P_t=\min(B_t+R_t,C_t), 
\end{align}
i.e., the sum of the numbers of backlog and reported claims if the sum does not exceed the capacity $C_t$ of period $t$, otherwise $C_t$. 
From \eqref{ax:Bdynamics} it follows that $B_{t+1}=B_t+R_t-P_t$ which together with \eqref{ax:processed} give 
 \begin{align}\label{eq:backlog_recursion}
B_{t+1}=\max\big(B_t+R_t-C_t,0\big).
\end{align}
The recursion \eqref{eq:backlog_recursion} for the total size of the backlog is an example of the Lindley recursion that is well studied in queueing theory, see Example I.5.7 and Chapter III.6 in \cite{Asmussen-2003}. If $(R_t)_{t=0}^{\infty}$ and $(C_t)_{t=0}^{\infty}$ are i.i.d.~sequences, the Lindley process $(B_t)_{t=0}^{\infty}$ is a Markov process given by 
\begin{align*}
B_0=b, \quad B_{t+1}=\max\big(B_t+R_t-C_t,0\big), \quad t \ge 0,
\end{align*}
which is the waiting time process for a GI/G/1 queue. If $\E[R_t]<\E[C_t]$ and $\E[(R_t-C_t)^2]<\infty$, there is a stationary distribution with finite mean to which $B_t$ converges in distribution as $t\to\infty$. 

\begin{assumption}\label{assump:axioms}
The stochastic system $\{(B_{i,j},R_{i,j},P_{i,j},C_{i+j}):i\geq i_0,j\geq 0\}$ satisfies \eqref{ax:positive}, \eqref{ax:closed}, \eqref{ax:Bdynamics} and \eqref{ax:processed}. 
\end{assumption}

To understand the stochastic system, we specify $\sigma$-algebras that play a natural role in conditional probabilities and expectations that will appear. 
Let
\begin{align*} 
\calE_t&:=\sigma\big(R_{s},B_{s}:s\leq t\big),\\
\calF_t&:=\sigma\big(R_{s},B_{s},P_{s}:s\leq t\big),\\
\calG_t&:=\sigma\big(R_{i,j},B_{i,j}:i+j\leq t, j\geq 0\big),\\
\calH_t&:=\sigma\big(R_{i,j},B_{i,j},P_{i,j}:i+j\leq t, j\geq 0\big).
\end{align*} 
By construction, the $\sigma$-algebras obviously satisfy $\calE_t\subset \calF_t\subset \calH_t$ and $\calE_t\subset \calG_t\subset \calH_t$. From \eqref{ax:Bdynamics}, it follows that $\calF_t \subset \calE_{t+1} \subset \calF_{t+1}$  and $\calH_t \subset \calG_{t+1} \subset \calH_{t+1}$. Note that $B_{t+1}$ is $\calF_t$-measurable but in general not $\calE_t$-measurable. Similarly, $B_{i,j+1}$ is $\calH_{i+j}$-measurable but in general not $\calG_{i+j}$-measurable. 
From \eqref{ax:Bdynamics} and \eqref{ax:processed}, it follows that $\SC_t\in\calF_t$. However, we emphasize that from Assumption \ref{assump:axioms} alone it does not follow that $C_t$ is measurable w.r.t.~any of the $\sigma$-algebras $\calE_t,\calF_t,\calG_t,\calH_t$. This is not surprising: if we are not observing the capacity $C_t$ but only its effect on the number of backlog claims and processed claims, then the actual capacity may be larger than the capacity used to the process claims. However, if $C_t$ is non-random, then \eqref{ax:Bdynamics} and \eqref{ax:processed} together imply that $B_{t+1}$ is $\calE_t$-measurable. 

\begin{remark}
In queueing theory, the Lindley recursion \eqref{eq:backlog_recursion} describes the waiting time $B_t$ for the $t$th customer arriving to a single service station, with $R_t$ the service time for the $t$th customer and $C_t$ the time between the arrival of the $t$th customer and that of the $(t+1)$th customer. If both $(R_t)$ and $(C_t)$ are i.i.d.~sequences,  the queueing system is denoted GI/G/1. The special case when both service times $R_t$ and inter-arrival times $C_t$ are exponentially distributed is denoted M/M/1. The special case when $C_t=c$ is constant is denoted D/G/1. We emphasize that the interpretation of the variables $B_t$, $R_t$ and $C_t$ in our setting is quite different although many results known for GI/G/1 queues are also useful to understand the dynamics of the total number of backlog claims. The expectation $\E[B]$ of the stationary distribution of $(B_t)$ can in general not be obtained explicitly. However, it can be approximated numerically; see, e.g., \cite{Janssen-Leeuwaarden-2005} and \cite{Janssen-Leeuwaarden-2018} for the analysis of the D/G/1 queue.     
\end{remark}

\subsection{Stationary behavior}\label{sec:stationary}
The recursion \eqref{eq:backlog_recursion} for the total size of the backlog is an example of the Lindley  recursion that is well studied in queueing theory, see Chapter III.6 \cite{Asmussen-2003}. 
As a direct consequence of the recursion \eqref{eq:backlog_recursion}, 
\begin{align*}
B_{t+1}=\max\bigg(B_{0}+\sum_{s=0}^{t}(R_s-C_s),\sum_{s=1}^{t}(R_s-C_s),\dots,R_t-C_t,0\bigg), \quad t \ge 0. 
\end{align*}
The asymptotic behavior of $B_{t+1}$ is well understood when 
$(R_s-C_s)_{s=0}^{\infty}$ 
is an i.i.d.~sequence with $\E[R_s]<\E[C_s]$. 
If $(C_t)_{t=0}^{\infty}$ is an i.i.d.~sequence, if all $R_{i,j}$ are independent, and if $R_{i',j}\eqdis R_{i,j}$ for all $j$, then 
$(R_s-C_s)_{s=0}^{\infty}$ is an i.i.d.~sequence if $i_0\leq -J$.    
We write $\E[R]$ for the common expected value for any element of the i.i.d.~sequence 
$(R_s)_{s=0}^{\infty}$.   
Note that in this case, $\E[R]=\sum_{j=0}^{J}\E[R_{i,j}]$ is simply the expected total number of reported claims for any occurrence period $i$.   
By Corollary III.6.5 in \cite{Asmussen-2003}, we have convergence in distribution
\begin{align*}
B_t \distr \max_{0\leq s< \infty}\bigg(0,\sum_{r=0}^{s}(R_{r}-C_{r})\bigg) \quad \text{as } t\to\infty.
\end{align*}
From $\sum_{r=0}^{s}(R_{r}-C_{r})\to-\infty$,  a.s., as $s\to\infty$, 
it follows that the limit variable is well defined.  
Dropping the subscript, we denote by $B$ a random variable whose distribution is the stationary distribution of 
$(B_t)_{t=0}^{\infty}$.  
By Proposition VIII.4.5 \cite{Asmussen-2003},  
\begin{align*}
\E[B]=\sum_{k=1}^{\infty}\frac{1}{k}\,\E\left[\max(S_k,0)\right], \qquad S_k:=\sum_{s=0}^{k-1} (R_s-C_s). 
\end{align*} 
Unfortunately, it is rarely possible to compute $\E[B]$ explicitly and numerical evaluation is non-trivial in general; see, e.g., \cite{Stadje-1997} for an approach to computing stationary probabilities for integer-valued $B$.    

Upper bounds for $\E[B]$ are studied in \cite{Chen-Whitt-2020} by considering a scaled version of the Lindley recursion in the setting of GI/G/1 queues. Since $(R_t)$ and $(C_t)$ are i.i.d.~sequences in the GI/G/1 queueing setting, we write $R$ and $C$ for an arbitrary element of these sequences.  
The recursion \eqref{eq:backlog_recursion} can be written 
 \begin{align*}
\widetilde{B}_{t+1}=\max\big(\widetilde{B}_t+\widetilde{R}_t-\widetilde{C}_t,0\big),
\end{align*}
where $\widetilde{B}_t:=B_t/\E[C]$, and similarly for $\widetilde{R}_t$ and $\widetilde{C}_t$. 
By construction $\E[\widetilde{C}_t]=1$, and $\E[\widetilde{B}_t]=\E[B]/\E[C]$ and $\E[\widetilde{R}_t]=\E[R]/\E[C]=:\rho$. The parameter $\rho$ is called the traffic intensity in queueing theory. 
The so-called heavy-traffic approximation of $\E[B]$ (obtained by considering the behavior as $\rho\to 1$) is 
\begin{align}\label{eq:HTA}
\frac{\E[B]}{\E[C]}\,\approx\, \frac{\rho^2}{2(1-\rho)}\bigg(\frac{\Var[R]}{\E[R]^2}+\frac{\Var[C]}{\E[C]^2}\bigg), 
\end{align}
this is equivalent to expression (2.9) in \cite{Chen-Whitt-2020}. 
In the D/G/1 setting $\Var[C]=0$, since $C=c$ is constant, and \eqref{eq:HTA} takes the form 
\begin{align}\label{eq:HTA_constant_C}
\E[B]\,\approx\, \frac{\E[R]}{c-\E[R]}\,\frac{\Var[R]}{2\E[R]}.
\end{align}
It can be shown ((2.6) and (2.7) in \cite{Chen-Whitt-2020}) that the right-hand side in \eqref{eq:HTA_constant_C} coincides with the upper bounds for $\E[B]$ obtained by Kingman in \cite{Kingman-1962} and by Daley in \cite{Daley-1977}.  
In the M/G/1 setting, where $C$ is exponentially distributed with mean $c$, $\Var[C]/\E[C]^2=1$ and the heavy-traffic approximation for $\E[B]$ in \eqref{eq:HTA} equals 
\begin{align*}
\frac{c\rho^2}{2(1-\rho)}\bigg(\frac{\Var[R]}{\E[R]^2}+1\bigg)=\frac{\E[R]}{c-\E[R]}\,\frac{\E[R^2]}{2\E[R]},
\end{align*}
which, in fact, is an exact expression for $\E[B]$ referred to as the Pollaczek-Khintchine formula; see VIII.(5.6) in \cite{Asmussen-2003}.  

\begin{remark}\label{rem:poisson_backlog}
In the D/G/1 setting, an upper bound for $\E[B]$ is given by the right-hand side in \eqref{eq:HTA_constant_C}. If we write $c=\eta \E[R]$ for some $\eta>1$, then 
\begin{align*}
\E[B]\leq \frac{1}{\eta-1}\,\frac{\Var[R]}{2\E[R]}.
\end{align*}
If $\E[R]$ is fairly large (which is the case for realistic insurance applications), then, unless $\eta\approx 1$, $\Var(R)$ needs to be substantially larger than $\E[R]$ in order for $\E[B]$ and $\E[R]$ to be of similar size. In particular, if $R$ is Poisson distributed, then $\Var[R]=\E[R]$, and no substantial backlog will appear unless $\eta\approx 1$, corresponding to a close to non-stationary system. In our examples below, we consider a negative binomial
distribution for the number of reported claims $R$.
\end{remark}

\section{Cost implications of capacity constraints}\label{sec:costs}

Capacity constraints are mainly due to limited financial resources, and considerations of how much of these financial resources should be allocated to the claims settlement unit. Claims settlement costs are belonging to the unallocated loss adjustment expenses (ULAE) meaning that these costs are not specific to an individual claim, but they are rather overhead costs that are necessary to run the claims handling unit. ULAE then need to be allocated to occurrence periods (or individual insurance contracts) in order to be able to perform a profit analysis for occurrence periods (or individual insurance contracts); see Buchwalder et al.~\cite{Buchwalder} for a method supporting the chain-ladder claims reserving method.
In most cases, the allocation is chosen to be proportional either to claims costs, claims counts or a linear combination of the two. 

We will perform an analysis of expected claims costs where we consider unconditional expectations of the numbers of reported, processed, and backlog claims. This means that we consider i.i.d.~sequences $(R_t)_{t=0}^{\infty}$ and $(C_t)_{t=0}^{\infty}$ under the stochastic system in Assumption \ref{assump:axioms} in its stationary state. In particular, the backlog process $(B_t)_{t=0}^{\infty}$ is given by \eqref{eq:backlog_recursion} with initial state $B_0$ drawn from the stationary distribution, see Section \ref{sec:stationary}. We write $\E[R]$ and $\E[B]$ for the expected number of reported claims and size of the backlog, respectively, in any given period for the stationary system. We write $\E[C]$ for the expected maximal number of claims that can be processed in any given period. Recall that $\E[C]>\E[R]$ is assumed since we are considering the system in its stationary state. We further assume that claims costs for individual claims form an i.i.d.~sequence independent of the stochastic system in Assumption \ref{assump:axioms}.    

{\bf Claims settlement costs.} 
Making claims processing capacity available generates ULAE and these expenses need to be allocated to occurrence periods to have an integrated cost view. 
Suppose that any claims occurrence period $\tau$ uses the processing capacity during periods $\tau, \tau+1,\dots,\tau+J_P-1$, where $J_P\geq 1$. The cost allocated to occurrence period $\tau$ is the total expected cost for these processing capacities multiplied by the fraction of expected number of processed claims for occurrence period $\tau$ during these periods divided by the expected number of processed claims for all occurrence periods that share the capacity during these periods.
Let us formalize this. The set of index pairs $(i,j)$ for the number of processed claims $P_{i,j}$ during the periods $\tau, \tau+1,\dots,\tau+J_P-1$ is 
\begin{align*}
\calI:=\{(i,j):\tau\leq i+j\leq \tau+J_P-1,0\leq j\leq J_P-1\},
\end{align*}
and due to stationarity of the numbers of processed claims ($\E[P_{i,j}]=\E[P_{i',j}]$) we may sum over rows instead of over diagonals
\begin{align*}
\sum_{(i,j)\in \calI}\E\big[P_{i,j}\big]&=\sum_{k=0}^{J_p-1}\sum_{i=\tau+k-J_p+1}^{\tau+k}\E\big[P_{i,\tau+k-i}\big]\\
&=\sum_{i=\tau}^{\tau+J_P-1}\sum_{j=0}^{J_P-1}\E\big[P_{i,j}\big]=J_P\sum_{j=0}^{J_P-1}\E\big[P_{\tau,j}\big].
\end{align*}
We see that to any occurrence period $\tau$ we should allocate a fraction $1/J_P$ of the cost for processing capacity during $J_P$ periods. The integer $J_P$ cancels out when multiplying the two numbers and we conclude that the capacity cost allocated to any occurrence period equals the full cost for processing capacity during a single period. We emphasize that this is the stationary case.

{\bf Linearly delay-adjusted claims costs.}
The total expected ground-up claims costs for any occurrence period $i$ 
in stationarity is  
\begin{align*}
\kappa_g \sum_{j=0}^{J}\E[R_{i,j}]=\kappa_g \sum_{j=0}^{J}\E[R_{t-j,j}]=\kappa_g \E[R], 
\end{align*}
where $\kappa_g>0$ denotes the expected claims cost for an individual claim if paid without any delay. We assume that delayed processing generally makes claims more expensive. In the most simple linear model ($\ell$) we assume an additional constant $\kappa_b>0$ that originates from late processing. The expected total claims costs of occurrence period $i$ are then in this linear cost model given by 
\begin{align*}
\kappa_g \E[R] + \sum_{j\geq 0} \kappa_b \E\big[B_{i,j}\big] = \kappa_g \E[R] + \sum_{j\geq 0} \kappa_b \E\big[B_{t-j,j}\big] =  \kappa_g \E[R] + \kappa_b \E[B], 
\end{align*}
where we used the assumption of the system in stationarity.  

{\bf Non-linearly delay-adjusted claims costs.}
Alternatively, we could consider a claims cost model where we rather have the view of an inflation-adjusted cost ($\iota$). In that case we consider an additional constant
$\lambda_b>1$ and set
\begin{align*}
 \kappa_g \sum_{j\geq 0} \lambda_b^{j}\,\E\big[P_{i,j}\big]
=\kappa_g \sum_{j\geq 0} \lambda_b^{j}\,\E\big[B_{i,j}+R_{i,j}-B_{i,j+1}\big],
\end{align*}
with $R_{i,j} \equiv 0$ for $j>J$. This model considers processed claims and inflates ground-up costs $\kappa_g \lambda_b^{j}$ by its processing delay from the end of the occurrence period. Note that $\lambda_b^{j}$ should be seen as a super-imposed delay inflation which is different from economic inflation. In fact, by assuming constant ground-up costs we implicitly assume that all costs have been adjusted for economic inflation so that they live on the same scale, and additional backlog costs are then concerned with super-imposed claims inflation and costs related to an increased expense due to late processing and settlements.

{\bf Combining delay-adjusted claims costs and settlement costs.}  
Making claims processing capacity with expectation $\E[C]$ available generates ULAE. We have explained above that the ULAE allocated to any given occurrence period corresponds to the full single-period ULAE. For simplicity, we assume that ULAE have a fixed component
that corresponds to a minimal expected capacity $\E[R]$ ensuring a stationary system, and for the excess expected capacity $\E[C]-\E[R]$ we consider
a proportional cost $\kappa_c (\E[C]-\E[R])$ with $\kappa_c>0$. 
Adding the expected costs for the excess capacity to the delay-adjusted expected claims costs gives the following expected costs for any occurrence period $i$: 
\begin{align}\label{eq:linear_cost_case}
\mu_i^{(\ell)} := \kappa_g \,\E[R] + \kappa_b \, \E[B] + \kappa_c \,(\E[C]-\E[R]),
\end{align}
for the linear cost model, and 
\begin{align}\label{eq:inflating_cost_case}
\mu_i^{(\iota)}: = \kappa_g \sum_{j\ge 0} \lambda_b^{j}\,
\E\big[B_{i,j}+R_{i,j}-B_{i,j+1}\big] + \kappa_c \,\big(\E[C]-\E[R]\big),
\end{align}
for the model with non-linear delay-inflated costs. We emphasize that $\mu_i^{(\ell)}$ for the linear cost model does not depend on how processing capacity is shared between occurrence periods requiring processing capacity (the indexes $i$ and $j$ do not show up in the expression \eqref{eq:linear_cost_case}).  
However, $\mu_i^{(\iota)}$ for the non-linear cost model does indeed depend on how processing capacity is shared. It is known that $\E[B]$ is a convex function of $\E[C]$, see, e.g., \cite{Harel-1990}, and therefore $\mu_i^{(\ell)}$ is a convex function of $\E[C]$. The main question then
is what is the optimal expected capacity to minimize the overall
expected costs.
In general, this expected cost minimization has to be performed numerically, and we return to this in Section \ref{sec:cost_optim_negbin}. 

\begin{example}
As shown in Section \ref{sec:stationary}, a heavy-traffic approximation for $\E[B]$ may give an explicit expression for $\mu_i^{(\ell)}$ of the form  
\begin{align*}
\kappa_g\E[R]+\kappa_b\E[B]+\kappa_c(\E[C]-\E[R])=\kappa_g\E[R]+\kappa_b\frac{\alpha\E[R]}{c-\E[R]}+\kappa_c(c-\E[R]), 
\end{align*}
which can be minimized explicitly with minimizer $c=\E[R]+\sqrt{\alpha\E[R]\kappa_b/\kappa_c}$ and minimum $\kappa_g\E[R]+2\sqrt{\alpha\kappa_b\kappa_c\E[R]}$. 
The D/G/1 queueing model setting corresponds to $\alpha=\Var[R]/(2\E[R])$ which leads to the minimizer $c=\E[R]+\sqrt{\Var[R]\kappa_b/(2\kappa_c)}$ and minimum $\kappa_g\E[R]+\sqrt{2\kappa_b\kappa_c\Var[R]}$.
\end{example}

\section{Sharing processing capacity}\label{sec:processing_procedure}

Many procedures can be considered describing how claims are processed and how the backlog for individual occurrence periods evolves over time. 
We focus on the simple procedure where claims are processed by first processing the backlog and then, if there is processing capacity left after the backlog has been processed, the newly reported claims are processed. However, we do this without assuming continuous-time monitoring of claim arrivals (reporting times). 
Let 
\begin{align*}
\SCB_t:=\big\{B_t\leq C_t\big\},
\end{align*}
denote the event that there is, at time $t$, sufficient capacity to process the backlog. Note that 
\begin{align*}
\SC_t^c\cap \SCB_t=\big\{B_t\leq C_t < B_t+R_t\big\}
\end{align*}
denotes the event that the capacity is insufficient to process all claims waiting to be processed, but sufficient to process the backlog. 

The number of processed claims is the sum of the number of processed backlog claims and the number of processed newly reported claims
\begin{align}
P_{i,t-i}&=P^{B}_{i,t-i}+P^{R}_{i,t-i}. \label{eq:processed_tot}
\end{align}
We assume that, given $\calF_t \vee \calG_t:=\sigma\{\calF_t,\calG_t\}$, claims in the backlog at the beginning of period $t$ will be processed during period $t$ independently with (conditional) probability 
\begin{align}\label{eq:cond_prob_pb}
\mathds{1}_{\SCB_t}+\frac{C_t}{B_t}\mathds{1}_{\SCB_t^c}.
\end{align} 
From \eqref{ax:Bdynamics} and \eqref{ax:processed} it follows that this expression for the conditional probability is $\calF_t$-measurable. This is seen as follows. If $P_t\geq B_t$, then $\mathds{1}_{\SCB_t}=1$ and $C_t\mathds{1}_{\SCB_t^c}=0$. If $P_t<B_t$, then $\mathds{1}_{\SCB_t}=1$ and $C_t\mathds{1}_{\SCB_t^c}=P_t$. Hence, the conditional probability is fully determined by $P_t$ and $B_t$ being both $\calF_t$-measurable. 

The interpretation of the conditional probability \eqref{eq:cond_prob_pb} is straightforward: If there is sufficient capacity to process the backlog, then the probability is equal to one. Otherwise the probability equals the proportion of the capacity to the size of the backlog. 
Given $\calF_t \vee \calG_t$, we know the number $B_{i,t-i}$ of backlog claims for occurrence period $i$. Hence, it follows from the assumption of independently processing the backlog claims that the conditional distribution $\calL(P^{B}_{i,t-i}\mid\calF_t\vee \calG_t)$ is a binomial distribution. Therefore,  
\begin{align}
\E\big[P^{B}_{i,t-i}\mid\calF_t\vee \calG_t\big]=B_{i,t-i}\bigg(\mathds{1}_{\SCB_t}+\frac{C_t}{B_t}\mathds{1}_{\SCB_t^c}\bigg). \label{eq:processed_b}
\end{align}
We assume that, given $\calF_t \vee \calG_t$, claims reported during period $t$ will be processed during period $t$ independently with (conditional) probability 
\begin{align}\label{eq:cond_prob_pr}
\mathds{1}_{\SC_t}+\frac{C_t-B_t}{R_t}\mathds{1}_{\SC_t^c \cap \SCB_t}.
\end{align} 
From \eqref{ax:Bdynamics} and \eqref{ax:processed} it follows that this expression for the conditional probability is $\calF_t$-measurable. That is, if $P_t=B_t+R_t$, then $\mathds{1}_{\SC_t}=1$. Otherwise, if $P_t<B_t+R_t$, then $\mathds{1}_{\SC_t}=0$.  
If $P_t<B_t+R_t$ and $P_t\geq B_t$, then $P_t=C_t\mathds{1}_{\SC_t^c \cap \SCB_t}$. Otherwise, if $P_t=B_t+R_t$ or $P_t<B_t$, then $\mathds{1}_{\SC_t^c \cap \SCB_t}=0$. Hence, the conditional probability is fully determined by $P_t$, $B_t$ and $R_t$ which are all $\calF_t$-measurable. 

The interpretation of the conditional probability \eqref{eq:cond_prob_pr} is as follows: If there is sufficient capacity to process first the backlog and then the newly reported claims, then the probability is equal to one. Otherwise the probability equals the proportion of the remaining capacity (after processing the backlog) to the number of reported claims.  
Similarly to above for the backlog claims, the newly reported claims are processed independently with the corresponding remaining capacity, giving another binomial distribution. Thus, 
given $\calF_t \vee \calG_t$, we know the number $R_{i,t-i}$ of reported claims for occurrence period $i$, the conditional distribution $\calL(P^{R}_{i,t-i}\mid\calF_t\vee \calG_t)$ is a binomial distribution and 
\begin{align}
\E\big[P^{R}_{i,t-i}\mid\calF_t\vee \calG_t\big]=R_{i,t-i}\bigg(\mathds{1}_{\SC_t}+\frac{C_t-B_t}{R_t}\mathds{1}_{\SC_t^c \cap \SCB_t}\bigg).  \label{eq:processed_r}
\end{align}
Equations \eqref{eq:processed_tot}, \eqref{eq:processed_b} and \eqref{eq:processed_r} together define the procedure for processing claims. By summing up the number of processed backlog claims and the number of processed newly reported claims we obtain the conditionally
expected number of processed claims
\begin{align}
\begin{split}
\E\big[P_{i,t-i}\mid\calF_t\vee \calG_t\big]&=B_{i,t-i}\bigg(\mathds{1}_{\SCB_t}+\frac{C_t}{B_t}\mathds{1}_{\SCB_t^c}\bigg) \label{eq:processed_b_and_r} \\
&\quad+R_{i,t-i}\bigg(\mathds{1}_{\SC_t}+\frac{C_t-B_t}{R_t}\mathds{1}_{\SC_t^c \cap \SCB_t}\bigg). 
\end{split}
\end{align}

\begin{assumption}\label{assump:processing}
The stochastic system $\{(B_{i,j},R_{i,j},P_{i,j},C_{i+j}):i\geq i_0,j\geq 0\}$ satisfies \eqref{eq:processed_b_and_r}. 
\end{assumption}

From \eqref{ax:Bdynamics} and \eqref{eq:processed_b_and_r} it follows immediately that 
\begin{align}
\begin{split}
\E\big[B_{i,t-i+1}\mid\calF_t\vee \calG_t\big]&=B_{i,t-i}\bigg(1-\frac{C_t}{B_t}\bigg)\mathds{1}_{\SCB_t^c} \label{eq:backlog_b_and_r} \\
&\quad+R_{i,t-i}\bigg(\mathds{1}_{\SCB_t^c}+\bigg(1-\frac{C_t-B_t}{R_t}\bigg)\mathds{1}_{\SC_t^c \cap \SCB_t}\bigg).
\end{split}
\end{align} 
Note that recursion \eqref{eq:backlog_recursion} (which holds regardless of the choice of procedure for processing claims) can be written
\begin{align*}
B_{t+1}=B_{t}\bigg(1-\frac{C_t}{B_t}\bigg)\mathds{1}_{\SCB_t^c}
+R_{t}\bigg(\mathds{1}_{\SCB_t^c}+\bigg(1-\frac{C_t-B_t}{R_t}\bigg)\mathds{1}_{\SC_t^c \cap \SCB_t}\bigg).
\end{align*}
Note also that by summing over occurrence periods we obtain the same recursion from \eqref{eq:backlog_b_and_r} since in stationarity
\begin{align*}
\sum_i \E\big[B_{i,t-i+1}\mid\calF_t\vee \calG_t\big]=\E\big[B_{t+1}\mid\calF_t\vee \calG_t\big]=B_{t+1}. 
\end{align*}
Hence, whereas \eqref{eq:backlog_recursion} explains the backlog dynamics on the aggregate level, \eqref{eq:backlog_b_and_r} adds information by explaining the backlog dynamics on an individual occurrence period level. We emphasize that \eqref{eq:backlog_recursion} holds for any procedure for processing claims, whereas \eqref{eq:backlog_b_and_r} is a consequence of a particular choice of the claims processing procedure. If, as an example, we would require that newly reported claims should be processed before backlog claims, then \eqref{eq:backlog_recursion} would still hold, whereas \eqref{eq:backlog_b_and_r} would result in another expression.  

\section{Computation of backlog expectations}\label{sec:comp_backlog_exp}

The aim of this section is to provide explicit expressions for unconditional and conditional expectations of the number of backlog claims $B_{i,j}$. Considering the unconditional expectation $\E[B_{i,j}]$ makes most sense if we consider the backlog process in its stationary state. We will therefore (see Assumption \ref{assump:stationary_system}) assume that $(R_t)$ and $(C_t)$ are i.i.d.~sequences with $\E[R_t]<\E[C_t]$ which ensures a stationary distribution and that the Markov chain $(B_t)$ approaches stationarity from an arbitrary fixed initial state. When the initial state $B_0$ of the backlog process $(B_t)_{t=0}^{\infty}$ is drawn from its stationary distribution, the distribution of $B_{i,j}$ does not depend on $i$. On the other hand, we will consider conditional expectations $\E[B_{\tau-i,i+k+1}\mid \calG_{\tau}]$ for $k\geq 0$. For such conditional expectations, stationarity issues for $(B_t)$ do not play a role, but the assumption of i.i.d.~sequences $(R_t)$ and $(C_t)$ is imposed in order to obtain an explicit expression for the conditional expectation. Conditioning on $\calG_{\tau}$ means that we consider a situation where $B_{i,j}$ and $R_{i,j}$ for $i+j\leq \tau$ are observable, e.g., this may address the question of optimally planing capacities when currently facing a large backlog $B_\tau$. This corresponds to data likely available to an actuary.   

\begin{assumption}\label{assump:stationary_system}
The stochastic system $\{(B_{i,j},R_{i,j},P_{i,j},C_{i+j}):i\geq i_0,j\geq 0\}$ satisfies: 
\begin{itemize} 
\item[(i)] $(R_t)$ and $(C_t)$ are i.i.d.~sequences with $0<\E[R_t]<\E[C_t]<\infty$.   
\item[(ii)] There exist constants $\mu_j>0$ such that $\E[R_{i,j}\mid\calF_{i+j}]=\E[R_{i,j}\mid R_{i+j}]=(\mu_j/\mu)R_{i+j}$ for all $i$, where $\mu=\sum_{j=0}^{J_R}\mu_j$.
\end{itemize}
\end{assumption}

Assumption \ref{assump:stationary_system} (ii) holds for a wide family of stochastic models. On a sufficiently rich probability space, the requirement is essentially that $R_{i,j}$ and $R_{i+j}-R_{i,j}$ can be represented as independent increments of a non-negative L\'evy process. We refer to \cite{Sato-1999} for more on L\'evy processes. 

\begin{theorem}\label{thm:id_model}
Assume that $R_{i,j}$ and $R_{i+j}-R_{i,j}$ are independent non-negative random variables with $\E[R_{i+j}]=\mu\in (0,\infty)$ and $\E[R_{i,j}]=\mu_j\in (0,\mu)$, and assume that there exists a L\'evy process $(X_t)_{t\geq 0}$ such that $X_1=R_{i+j}$ and $X_{\mu_j/\mu}=R_{i,j}$. Then $\E[R_{i,j}\mid R_{i+j}]=(\mu_j/\mu)R_{i+j}$.   
\end{theorem}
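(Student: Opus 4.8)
The plan is to reduce the statement to a one‑parameter identity between Laplace transforms. Put $s := \mu_j/\mu \in (0,1)$; by hypothesis we may realize $R_{i,j} = X_s$, $R_{i+j} = X_1$, and hence $R_{i+j} - R_{i,j} = X_1 - X_s$, where by the Lévy property $X_s$ and $X_1 - X_s$ are independent (and $X_1 - X_s \eqdis X_{1-s}$), and all three variables are non‑negative and integrable since $\E[X_1] = \mu < \infty$. First I would record the functional identity $\E[e^{-uX_t}] = e^{t\psi(u)}$ for $t \ge 0$ and $u > 0$, with $\psi(u) := \log\E[e^{-uX_1}] \in (-\infty,0]$: this is the Laplace‑transform form of the Lévy–Khintchine representation, well defined here because the variables at times $s$, $1-s$, $1$ are non‑negative and integrable; the exponential form follows from the semigroup relation $f(t+t') = f(t)f(t')$ for $f(t) = \E[e^{-uX_t}]$ (stationary independent increments), stochastic continuity ($f(0+)=1$), and $\E[X_1] < \infty$, which forces $f(1) = \E[e^{-uX_1}] > 0$. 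I would also note that $u \mapsto \psi(u)$ is differentiable on $(0,\infty)$, with differentiation under the integral sign justified by $\E[X_1] < \infty$, and similarly for $X_s$.

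The core step is to compute $\E[X_s e^{-uX_1}]$ directly. Using $X_1 = X_s + (X_1 - X_s)$ and the independence of $X_s$ and $X_1 - X_s$,
\[
\E\bigl[X_s e^{-uX_1}\bigr] = \E\bigl[X_s e^{-uX_s}\bigr]\,\E\bigl[e^{-u(X_1-X_s)}\bigr] = \bigl(-s\,\psi'(u)\,e^{s\psi(u)}\bigr)\,e^{(1-s)\psi(u)} = -s\,\psi'(u)\,e^{\psi(u)},
\]
whereas $\E[X_1 e^{-uX_1}] = -\tfrac{d}{du}e^{\psi(u)} = -\psi'(u)e^{\psi(u)}$. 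Hence $\E\bigl[(X_s - sX_1)\,e^{-uX_1}\bigr] = 0$ for every $u > 0$.

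To finish, set $g(X_1) := \E[X_s - sX_1 \mid X_1]$, which lies in $L^1$. By the tower property the identity above reads $\E[g(X_1)e^{-uX_1}] = 0$ for all $u > 0$; writing $g = g^+ - g^-$, this says the two finite Borel measures $g^\pm(x)\,\P_{X_1}(dx)$ on $[0,\infty)$ have the same Laplace transform, hence coincide, so $g^+ = g^-$ and therefore $g = 0$ holds $\P_{X_1}$‑almost surely. That is precisely $\E[R_{i,j}\mid R_{i+j}] = (\mu_j/\mu)\,R_{i+j}$.

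I do not expect a genuine obstacle: the only delicate points are the exponential form of $t \mapsto \E[e^{-uX_t}]$ at the (possibly irrational) time $s$ and the interchange of differentiation and expectation, both routine given $\E[X_1] < \infty$ and the Lévy structure. As a sanity check and an alternative when $s = k/n$ is rational, one can instead split $X_1$ into $n$ i.i.d. increments of $(X_t)$; by exchangeability each has conditional mean $X_1/n$ given their sum $X_1$, so $\E[X_{k/n}\mid X_1] = (k/n)X_1$, and the general case follows by approximating $s$ by rationals, using right‑continuity of $t \mapsto X_t$ and the $L^1$‑convergence $X_{s_n}\to X_s$, which holds because the non‑negative $X_{s_n}$ converge almost surely with convergent means.
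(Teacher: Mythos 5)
Your proof is correct, but it takes a genuinely different route from the paper's. The paper discretizes: it writes $X_1$ as a sum of $n_k$ i.i.d.\ increments, uses the exchangeability identity $\E[Z_1\mid S_n]=S_n/n$ to get $\E[S^k_{m_k}\mid X_1]=(m_k/n_k)X_1$ for rational time points $m_k/n_k\to\mu_j/\mu$, and then passes to the limit via stochastic continuity (extracting an a.s.\ convergent subsequence) together with conditional dominated convergence, the dominating variable being $X_1$ itself since the increments are non-negative. You instead verify the defining property of the conditional expectation against the separating family $\{e^{-ux}\}_{u>0}$: the subordinator identity $\E[e^{-uX_t}]=e^{t\psi(u)}$, differentiation under the integral (legitimate since $0\le X_s e^{-uX_s}\le X_s\in L^1$), and independence of $X_s$ and $X_1-X_s$ give $\E[(X_s-sX_1)e^{-uX_1}]=0$, and the uniqueness theorem for Laplace transforms of finite measures on $[0,\infty)$ then forces $\E[X_s-sX_1\mid X_1]=0$ a.s. What your approach buys is a closed computation that avoids both the subsequence extraction and the appeal to the conditional convergence theorem; what it costs is the need to establish $\E[e^{-uX_t}]=e^{t\psi(u)}$ at the possibly irrational time $s$ (a measurable-Cauchy-equation argument you correctly flag) and the implicit reduction to the subordinator case so that the Laplace transforms are finite — both routine here since $X_1\ge 0$ a.s.\ forces all increments to be non-negative. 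The paper's argument is more elementary in that it needs only exchangeability, stochastic continuity and $\E[X_1]<\infty$. Your closing remark in fact sketches the paper's proof as your ``rational $s$ plus approximation'' alternative, so the two routes converge; either is acceptable.
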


\begin{proof}[Proof of Theorem \ref{thm:id_model}]
First, note that for any i.i.d.~random variables $Z_1,\dots,Z_n$ with sum $S_n$, by symmetry it holds that $\E[Z_1\mid S_n]=S_n/n$. 
Consider a sequence $((m_k,n_k))_{k\geq 1}$, with $(m_k,n_k)\in \Z^2$, such that $\lim_{k\to\infty}m_k/n_k=\mu_j/\mu$. 
For each $k$, let 
\begin{align*}
Z^k_{v}:=X_{v/n_k}-X_{(v-1)/n_k}, \quad v=1,\dots,n_k, \quad S^k_{u}:=\sum_{v=1}^{u}Z^k_{v}.  
\end{align*}
Note that 
\begin{align*}
\E\big[S^k_{m_k}\mid S^k_{n_k}\big]=\frac{m_k}{n_k}S^k_{n_k}=\frac{m_k}{n_k}X_1, 
\end{align*}
and by the stochastic continuity property for L\'evy processes 
\begin{align*}
S^k_{m_k}\stackrel{\P}{\to} X_{\mu_j/\mu} \text{ as } k\to\infty. 
\end{align*}
Hence, there is a subsequence $k'\to\infty$ such that $S^{k'}_{m_{k'}}\stackrel{a.s.}{\to} X_{\mu_j/\mu}$. 
By Theorem 34.2(v) in \cite{Billingsley-1995}, 
\begin{align*}
\E\big[S^{k'}_{m_{k'}}\mid S^{k'}_{n_{k'}}\big]\stackrel{a.s.}{\to} \E\big[X_{\mu_j/\mu}\mid X_1\big],
\end{align*}
from which the conclusion follows. 
\end{proof}

We now turn to the computation of backlog expectations. 
In order to avoid unnecessarily lengthy expressions we introduce the notation 
\begin{align*}
 F_t&:=R_{t}\bigg(\mathds{1}_{\{B_{t}>C_{t}\}}+\bigg(1-\frac{C_{t}-B_{t}}{R_{t}}\bigg)\mathds{1}_{\{B_{t}\leq C_{t}<B_{t}+R_{t}\}}\bigg),\\
 G_t&:=\bigg(1-\frac{C_{t}}{B_{t}}\bigg)\mathds{1}_{\{B_{t}>C_{t}\}}, 
\end{align*}
 and note that $F_t$ and $G_t$ are $\calF_t$-measurable, see Section \ref{sec:processing_procedure}. 
 
In Theorems \ref{thm:uncond_backlog_exp} and \ref{thm:cond_backlog_exp} below we derive expressions for unconditional and conditional expectations of backlogs $B_{i,j}$. From these expressions we note that the expectations are fully determined once the corresponding expectations for quantities $F_{\tau+1}$, $F_{\tau}G_{\tau+1}$, $F_{\tau}G_{\tau+1}G_{\tau+2}$, etc., can be evaluated. We will consider numerical computation of the latter expectations in Section \ref{sec:NN}. 
We use the notation $\wedge$ for the minimum $x\wedge y=\min(x,y)$. 
 
\begin{theorem}\label{thm:uncond_backlog_exp}
Assume that Assumptions \ref{assump:axioms}, \ref{assump:processing} and \ref{assump:stationary_system} hold. 
For $j\geq 0$, 
\begin{align*}
\E\big[B_{i,j}\big]=\sum_{k=1}^{j\wedge (J+1)}\frac{\mu_{k-1}}{\mu}\E\bigg[F_{i+k-1}\prod_{l=k}^{j-1}G_{i+l}\bigg],
\end{align*}
where an empty sum is equal to $0$ and an empty product is equal to $1$. 
\end{theorem}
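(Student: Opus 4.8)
The plan is to prove the formula by induction on $j$, exploiting the recursion \eqref{eq:backlog_b_and_r} together with the tower property of conditional expectation and the independence structure provided by Assumption \ref{assump:stationary_system}.

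First I would set up the induction. For $j=0$ the right-hand side is an empty sum, equal to $0$, and indeed $B_{i,0}=0$ by \eqref{ax:positive}, so the base case holds. For the inductive step, fix $j\geq 0$ and assume the formula holds for index $j$; I want to establish it for $j+1$. The natural move is to write $t=i+j$ in \eqref{eq:backlog_b_and_r}, which with the notation $F_t,G_t$ introduced before the theorem reads
\begin{align*}
\E\big[B_{i,j+1}\mid\calF_{i+j}\vee \calG_{i+j}\big]
= B_{i,j}\,G_{i+j} + R_{i,j}\bigg(\mathds{1}_{\SCB_{i+j}^c}+\bigg(1-\frac{C_{i+j}-B_{i+j}}{R_{i+j}}\bigg)\mathds{1}_{\SC_{i+j}^c \cap \SCB_{i+j}}\bigg).
\end{align*}
Taking unconditional expectations, $\E[B_{i,j+1}]$ splits into two terms: one involving $\E[B_{i,j}\,G_{i+j}]$ and one involving $\E\big[R_{i,j}\,(\cdots)\big]$, where the parenthetical factor is $\calF_{i+j}$-measurable and equals $F_{i+j}/R_{i+j}$ on the relevant events (and the term vanishes when $R_{i+j}=0$). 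So the second term is $\E[(R_{i,j}/R_{i+j})F_{i+j}]$ with the convention that this is interpreted via conditioning; more carefully, I would condition on $\calF_{i+j}$, use Assumption \ref{assump:stationary_system}(ii) in the form $\E[R_{i,j}\mid\calF_{i+j}]=(\mu_j/\mu)R_{i+j}$, and note that $F_{i+j}$ is $\calF_{i+j}$-measurable, to get that the second term equals $(\mu_j/\mu)\,\E[F_{i+j}]$ when $j\leq J$ (and $R_{i,j}=0$ hence contributes nothing when $j>J$, which is why the upper limit $j\wedge(J+1)$ appears). This is precisely the $k=j+1$ term in the target sum for index $j+1$.

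For the first term, $\E[B_{i,j}\,G_{i+j}]$, I would again condition, but now on $\calF_{i+j-1}\vee\calG_{i+j-1}$ is not quite right since $B_{i,j}$ is $\calH_{i+j-1}$-measurable while $G_{i+j}$ depends on period-$(i+j)$ quantities; the cleaner route is to observe that by the inductive hypothesis applied \emph{inside} the conditional expectation, or rather by re-deriving the expansion of $B_{i,j}$ via repeated substitution of \eqref{eq:backlog_b_and_r}, one gets $B_{i,j}$ expressed (in conditional-expectation form, peeling off one $G$ at a time) as a sum over $k$ of terms $(\mu_{k-1}/\mu)F_{i+k-1}\prod_{l=k}^{j-1}G_{i+l}$ — and multiplying by $G_{i+j}$ and taking expectations extends each product by one factor, yielding the $k=1,\dots,j\wedge(J+1)$ terms of the target sum for index $j+1$. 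Combining the two contributions gives exactly $\sum_{k=1}^{(j+1)\wedge(J+1)}(\mu_{k-1}/\mu)\,\E[F_{i+k-1}\prod_{l=k}^{j}G_{i+l}]$, completing the induction.

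The main obstacle I anticipate is handling the interchange of the factor $R_{i,j}/R_{i+j}$ and the parenthetical term carefully: one must justify that the parenthetical factor in \eqref{eq:backlog_b_and_r} times $R_{i,j}$ is genuinely $(\mu_j/\mu)F_{i+j}$ in expectation, being careful on the event $\{R_{i+j}=0\}$ (where both $R_{i,j}=0$ and $F_{i+j}=0$, so the identity holds trivially) and verifying that $F_{i+j}$ really is the $\calF_{i+j}$-measurable common factor — this is exactly the content of the computation just before the theorem statement showing $F_t,G_t$ are $\calF_t$-measurable. The other delicate point is bookkeeping the truncation: $R_{i,j}\equiv 0$ for $j>J$ forces the newly-generated term to vanish once $j\geq J+1$, which is why the summation upper limit is $j\wedge(J+1)$ rather than $j$, and one should check the induction respects this cutoff (the $k=j+1$ term is only added while $j+1\leq J+1$). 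Neither obstacle is deep, but both require care with the measurability and the degenerate cases.
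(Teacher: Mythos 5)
Your overall strategy --- unrolling the recursion \eqref{eq:backlog_b_and_r} one development period at a time, handling the newly-reported term via Assumption \ref{assump:stationary_system}(ii) and the $\calF_{i+j}$-measurability of $F_{i+j}$, and tracking the $j\wedge(J+1)$ truncation --- is the same as the paper's, and the parts you work out in detail (the base case, the $k=j+1$ term, the cutoff bookkeeping, the degenerate event $\{R_{i+j}=0\}$) are correct. But there is a genuine gap exactly where you wave your hands: the term $\E[B_{i,j}\,G_{i+j}]$. Your stated induction hypothesis is an identity for the \emph{unconditional} expectation $\E[B_{i,j}]$, and that is too weak to say anything about $\E[B_{i,j}\,G_{i+j}]$, because $B_{i,j}$ and $G_{i+j}$ are dependent ($G_{i+j}$ is a function of the aggregate backlog $B_{i+j}$, which is built from the same history that produced $B_{i,j}$). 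You cannot ``multiply by $G_{i+j}$ and take expectations'' to extend each product by one factor without justifying a factorization, and you never supply the tool that does so.

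The missing ingredient is the paper's Lemma \ref{lem:cond_independence}: $\calH_t$ and $\sigma\{B_s,R_s,C_s:s>t\}$ are conditionally independent given $\calF_t$, because they are linked only through $B_{t+1}$, which is $\calF_t$-measurable. This is what lets one write, for each $k$, with $t=i+k-1$,
\begin{align*}
\E\bigg[\frac{R_{i,k-1}}{R_{t}}\,F_{t}\prod_{l=k}^{j}G_{i+l}\bigg]
=\E\bigg[\E\bigg[\frac{R_{i,k-1}}{R_{t}}\,\bigg|\,\calF_{t}\bigg]\,\E\bigg[F_{t}\prod_{l=k}^{j}G_{i+l}\,\bigg|\,\calF_{t}\bigg]\bigg]
=\frac{\mu_{k-1}}{\mu}\,\E\bigg[F_{t}\prod_{l=k}^{j}G_{i+l}\bigg],
\end{align*}
using that $R_{i,k-1}$ is $\calH_{t}$-measurable, $F_{t}$ is $\calF_{t}$-measurable, and the product of $G$'s over $l\geq k$ is measurable with respect to the post-$t$ aggregate variables. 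To make your induction close, you should either prove this conditional-independence lemma and strengthen the induction hypothesis to the family of identities $\E[B_{i,k}\prod_{l=k}^{j-1}G_{i+l}]=\sum_{m=1}^{k\wedge(J+1)}(\mu_{m-1}/\mu)\,\E[F_{i+m-1}\prod_{l=m}^{j-1}G_{i+l}]$ for all $k\leq j$ simultaneously, or fully re-expand $B_{i,j}$ down to development period $0$ at every step (which is what the paper effectively does for $j=1,2,3$). As written, the inductive step does not go through.
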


The identity $B_{i,j+1}=B_{i,j}+R_{i,j}-P_{i,j}$ implies the following expression for $\E[P_{i,j}]$: 

\begin{corollary}\label{expected processing corollary}
Assume that Assumptions \ref{assump:axioms}, \ref{assump:processing} and \ref{assump:stationary_system} hold. 
For $j\geq 0$, 
\begin{align*}
\E\big[P_{i,j}\big]&=\sum_{k=1}^{j\wedge (J+1)}\frac{\mu_{k-1}}{\mu}\E\bigg[F_{i+k-1}\bigg(\prod_{l=k}^{j-1}G_{i+l}\bigg)\big(1-G_{i+j}\big)\bigg] \\
&\quad+\mathds{1}_{\{0\leq j\leq J\}}\frac{\mu_j}{\mu}\big(\mu-\E\big[F_{i+j}\big]\big), 
\end{align*}
where an empty sum is equal to $0$ and an empty product is equal to $1$. 
\end{corollary}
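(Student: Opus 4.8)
The plan is to read the formula off the one-step backlog identity \eqref{ax:Bdynamics}, rewritten as $P_{i,j}=B_{i,j}+R_{i,j}-B_{i,j+1}$, by taking unconditional expectations and substituting the closed form for $\E[B_{i,j}]$ from Theorem \ref{thm:uncond_backlog_exp}. The term $\E[R_{i,j}]$ is disposed of first: taking expectations in Assumption \ref{assump:stationary_system}(ii) and using $\E[R_{i+j}]=\mu$ gives $\E[R_{i,j}]=\mu_j$ for $0\le j\le J$, while $R_{i,j}\equiv 0$ (so $\mu_j=0$) for $j>J$; hence $\E[R_{i,j}]=\mathds{1}_{\{0\le j\le J\}}\mu_j$ in all cases, which is exactly the isolated term in the statement. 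It then remains to evaluate $\E[B_{i,j}]-\E[B_{i,j+1}]$.

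For this I would compare the two sums from Theorem \ref{thm:uncond_backlog_exp} index by index in $k$. For a common index $k$ the two summands differ only in the range of the product, and
\[
\prod_{l=k}^{j-1}G_{i+l}-\prod_{l=k}^{j}G_{i+l}=\Big(\prod_{l=k}^{j-1}G_{i+l}\Big)\big(1-G_{i+j}\big),
\]
which produces precisely the factor $1-G_{i+j}$ appearing in the corollary. The only remaining subtlety is the mismatch of upper summation limits, $j\wedge(J+1)$ for $\E[B_{i,j}]$ versus $(j+1)\wedge(J+1)$ for $\E[B_{i,j+1}]$.

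I expect this index bookkeeping to be the main (and essentially only) obstacle, and it forces a split into two cases. If $j\ge J+1$, both limits equal $J+1$, there is no leftover summand, and since $\mathds{1}_{\{0\le j\le J\}}\mu_j=0$ the termwise comparison gives the stated formula directly. If $0\le j\le J$, the sum for $\E[B_{i,j+1}]$ carries one extra term, the $k=j+1$ term $\frac{\mu_j}{\mu}\E[F_{i+j}]$ (its product over $l=j+1,\dots,j$ being empty, hence $1$); subtracting it and adding back $\E[R_{i,j}]=\mu_j$ leaves $\mu_j-\frac{\mu_j}{\mu}\E[F_{i+j}]=\frac{\mu_j}{\mu}\big(\mu-\E[F_{i+j}]\big)$, matching the last term of the statement, with the residual sum running over $k=1,\dots,j=j\wedge(J+1)$. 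Finally I would sanity-check the degenerate indices $j=0$ (where $B_{i,0}=0$ by \eqref{ax:positive}, so both sides give $\E[P_{i,0}]=\frac{\mu_0}{\mu}(\mu-\E[F_i])$) and $j\in\{J,J+1\}$, where the truncation $j\wedge(J+1)$ switches behaviour. No analytic input beyond Theorem \ref{thm:uncond_backlog_exp} is required.
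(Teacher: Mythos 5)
Your proposal is correct and is exactly the argument the paper intends: the corollary is stated as an immediate consequence of $P_{i,j}=B_{i,j}+R_{i,j}-B_{i,j+1}$ combined with Theorem \ref{thm:uncond_backlog_exp}, and your index bookkeeping (the extra $k=j+1$ term for $0\le j\le J$ merging with $\E[R_{i,j}]=\mu_j$ into $\frac{\mu_j}{\mu}(\mu-\E[F_{i+j}])$) fills in the details the authors leave implicit. No gaps.
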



In the proof of Theorem \ref{thm:uncond_backlog_exp} we will use conditional independence properties together with the fact that products of the kind $G_{t+1}\cdots G_{t+h}$, $h\geq 1$, are $\sigma\{B_s,R_s,C_s:s>t\}$-measurable.  

\begin{lemma}\label{lem:cond_independence}
Assume that Assumptions \ref{assump:axioms}, \ref{assump:processing} and \ref{assump:stationary_system} hold. 
Then $\calH_{t}$ and $\{B_s,R_s,C_s:s>t\}$ are conditionally independent given $\calF_t$. 
\end{lemma}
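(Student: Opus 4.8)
The plan is to show that, conditionally on $\calF_t$, the ``past at the diagonal level'' encoded in $\calH_t$ and the ``future at the aggregate level'' encoded in $\sigma\{B_s,R_s,C_s:s>t\}$ are independent. The key structural observation is that $\calH_t=\calF_t\vee\calG_t$ together with the fact that, by Assumption \ref{assump:stationary_system}(i), the pairs $(R_s,C_s)$ for $s>t$ are i.i.d.\ and independent of everything indexed by times $\le t$; and that the whole future backlog sequence $(B_s)_{s>t}$ is a deterministic function of $B_{t+1}$ and of $(R_s,C_s)_{s>t}$ via the Lindley recursion \eqref{eq:backlog_recursion}, while $B_{t+1}$ is itself $\calF_t$-measurable. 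So ``conditionally on $\calF_t$'' pins down $B_{t+1}$, and then the future is driven purely by the fresh innovations $(R_s,C_s)_{s>t}$, which carry no information about the diagonal decomposition recorded in $\calG_t$.

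First I would make precise what generates $\calH_t$. Since $\calH_t=\sigma(R_{i,j},B_{i,j},P_{i,j}:i+j\le t)$ and, by \eqref{ax:Bdynamics}, $B_{i,j+1}=B_{i,j}+R_{i,j}-P_{i,j}$, knowing $\{(R_{i,j},B_{i,j}):i+j\le t\}$ already determines the $P_{i,j}$ with $i+j\le t$ up to the processing randomness; in any case $\calH_t\subseteq\calF_t\vee\sigma(R_{i,j},B_{i,j}:i+j\le t)=\calF_t\vee\calG_t$, and conversely $\calF_t\subseteq\calH_t$. So it suffices to prove that $\calG_t$ and $\sigma\{B_s,R_s,C_s:s>t\}$ are conditionally independent given $\calF_t$. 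Next I would use that the aggregate future is generated by $B_{t+1}$ and $\{(R_s,C_s):s>t\}$: from \eqref{eq:backlog_recursion}, $B_{t+1+m}$ is a fixed measurable function of $B_{t+1}$ and $(R_{t+1},C_{t+1},\dots,R_{t+m},C_{t+m})$ for every $m\ge 1$, so
\begin{align*}
\sigma\{B_s,R_s,C_s:s>t\}\subseteq \sigma(B_{t+1})\vee\sigma\{(R_s,C_s):s>t\}\subseteq \calF_t\vee\sigma\{(R_s,C_s):s>t\},
\end{align*}
using that $B_{t+1}$ is $\calF_t$-measurable. Hence it is enough to show $\calG_t$ and $\sigma\{(R_s,C_s):s>t\}$ are conditionally independent given $\calF_t$; but in fact $\sigma\{(R_s,C_s):s>t\}$ is \emph{independent} of $\calH_t$ (hence of $\calG_t\vee\calF_t$), because the $(R_s,C_s)$, $s>t$, are i.i.d.\ innovations independent of the history up to time $t$ together with the processing randomness used up to time $t$. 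A $\sigma$-algebra that is (unconditionally) independent of $\calH_t\supseteq\calF_t$ is a fortiori conditionally independent of any sub-$\sigma$-algebra of $\calH_t$ given $\calF_t$, which yields the claim.

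The main obstacle — and the point that needs to be handled carefully rather than hand-waved — is the independence of the future innovations $(R_s,C_s)_{s>t}$ from $\calH_t$. This is not quite immediate from Assumption \ref{assump:stationary_system}(i) alone, because $\calH_t$ also contains the outcomes of the Bernoulli processing experiments of Section \ref{sec:processing_procedure} (the $P_{i,j}^B$, $P_{i,j}^R$), and one must know that the probability space is built so that these experiments, performed at times $\le t$, use randomness that is independent of the arrival/capacity innovations at times $>t$. I would therefore state this product-space construction explicitly (the processing at time $s$ uses an auxiliary i.i.d.\ family, independent across $s$ and independent of $(R_u,C_u)_u$ and of the within-period reporting split $R_{i,j}$), so that the $\sigma$-algebra generated by everything ``resolved by time $t$'' — namely $\calH_t$ — is independent of $\sigma\{R_{i,j},C_{i+j}:i+j>t\}$ and hence of $\sigma\{R_s,C_s:s>t\}$. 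With that construction in place the rest is the bookkeeping above, and the lemma follows.
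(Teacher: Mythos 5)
Your proof is correct and follows essentially the same route as the paper's: decompose $\sigma\{B_s,R_s,C_s:s>t\}$ into $\sigma(B_{t+1})\subset\calF_t$ and the future innovations $\sigma\{R_s,C_s:s>t\}$, the latter being independent of $\calH_t$, and conclude conditional independence given $\calF_t$. The paper simply asserts the independence of $\sigma\{R_s,C_s:s>t\}$ from $\calH_t$ without spelling out the product-space construction for the processing randomness, so your explicit treatment of that point (and of the elementary fact that unconditional independence from $\calH_t\supseteq\calF_t$ implies conditional independence given $\calF_t$) is a refinement of the paper's terser argument rather than a different approach.
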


\begin{proof}[Proof of Lemma \ref{lem:cond_independence}] 
Note that 
\begin{align*}
\sigma\{B_s,R_s,C_s:s>t\}=\sigma(B_{t+1})\vee \sigma\{R_s,C_s:s>t\}, 
\end{align*}
where $\sigma(B_{t+1})\subset \calF_t$ and $\sigma\{R_s,C_s:s>t\}$ are independent, and $\sigma\{R_s,C_s:s>t\}$ and $\calH_{t}$ are independent. 
Hence, $\calH_t$ and $\{B_s,R_s,C_s:s>t\}$ are dependent only through $B_{t+1}$, and therefore conditionally independent given $\calF_t$. 
\end{proof}

\begin{proof}[Proof of Theorem \ref{thm:uncond_backlog_exp}]
We prove the statement for $j=0,1,2,3$ in detail assuming $j\leq J+1$. From this, it is obvious how to repeat the argument recursively to prove the statement for larger values of $j$. 
$B_{i,0}=0$ by definition. For $j=1$, 
\begin{align*}
\E\big[B_{i,1}\big]&=\E\big[\E\big[B_{i,1}\mid\calF_{i}\vee \calG_{i}\big]\big]=\E\bigg[\frac{R_{i,0}}{R_i}F_{i}\bigg]\\
&=\E\bigg[\E\bigg[\frac{R_{i,0}}{R_i}F_{i}\,\bigg|\,\calF_{i}\bigg]\bigg]
=\E\bigg[\frac{\E[R_{i,0}\mid\calF_{i}]}{R_i}F_{i}\bigg]=\frac{\mu_{0}}{\mu}\E\big[F_{i}\big].
\end{align*}
For $j=2$, 
\begin{align*}
\E\big[B_{i,2}\big]=\E\big[\E\big[B_{i,2}\mid\calF_{i+1}\vee \calG_{i+1}\big]\big]=\E\big[B_{i,1}G_{i+1}\big]+\E\bigg[\frac{R_{i,1}}{R_{i+1}}F_{i+1}\bigg],
\end{align*}
where for the newly reported claims 
\begin{align*}
\E\bigg[\frac{R_{i,1}}{R_{i+1}}F_{i+1}\bigg]=\E\bigg[\E\bigg[\frac{R_{i,1}}{R_{i+1}}F_{i+1}\,\bigg|\,\calF_{i+1}\bigg]\bigg]
=\frac{\mu_{1}}{\mu}\E\big[F_{i+1}\big], 
\end{align*}
and for the previous backlog
\begin{align}
\begin{split}
\E\big[B_{i,1}G_{i+1}\big]&=\E\big[\E\big[B_{i,1}G_{i+1}\mid \calF_{i}\big]\big] \label{eq:cond_indep}
=\E\big[\E\big[B_{i,1}\mid \calF_{i}\big]\E\big[G_{i+1}\mid \calF_{i}\big]\big] \\
&=\E\big[\E\big[\E\big[B_{i,1}\mid \calF_{i}\vee \calG_{i}\big]\mid\calF_{i}\big]\E\big[G_{i+1}\mid \calF_{i}\big]\big] \\
&=\E\bigg[\E\bigg[\frac{R_{i,0}}{R_i}F_{i}\,\bigg|\,\calF_{i}\bigg]\E\bigg[G_{i+1}\,\bigg|\,\calF_{i}\bigg]\bigg] \\
&=\E\bigg[\E\bigg[\frac{R_{i,0}}{R_i}\,\bigg|\,\calF_{i}\bigg]\E\bigg[F_{i}G_{i+1}\,\bigg|\,\calF_{i}\bigg]\bigg]=\frac{\mu_{0}}{\mu}\E\big[F_{i}G_{i+1}\big],
\end{split}
\end{align}
where we used Lemma \ref{lem:cond_independence} noting that $B_{i,1}$ is $\calH_{i}$-measurable. 
For $j=3$, 
\begin{align*}
\E\big[B_{i,3}\big]=\E\big[\E\big[B_{i,3}\mid\calF_{i+2}\vee \calG_{i+2}\big]\big]=\E\big[B_{i,2}G_{i+2}\big]+\E\bigg[\frac{R_{i,2}}{R_{i+2}}F_{i+2}\bigg],
\end{align*}
where 
\begin{align*}
\E\bigg[\frac{R_{i,2}}{R_{i+2}}F_{i+2}\bigg]=\E\bigg[\E\bigg[\frac{R_{i,2}}{R_{i+2}}F_{i+2}\,\bigg|\,\calF_{i+2}\bigg]\bigg]=\frac{\mu_{2}}{\mu}\E\big[F_{i+2}\big],
\end{align*}
and, similarly to \eqref{eq:cond_indep},  
\begin{align*}
\E\big[B_{i,2}G_{i+2}\big]&=\E\big[\E\big[\E\big[B_{i,2}\mid\calF_{i+1}\vee \calG_{i+1}\big]\mid\calF_{i+1}\big]\E\big[G_{i+2}\mid\calF_{i+1}\big]\big]\\
&=\E\bigg[\E\bigg[B_{i,1}G_{i+1}+\frac{R_{i,1}}{R_{i+1}}F_{i+1}\,\bigg|\,\calF_{i+1}\bigg]\E\big[G_{i+2}\mid\calF_{i+1}\big]\bigg] \\
&=\E\big[B_{i,1}G_{i+1}G_{i+2}\big]+\E\bigg[\frac{R_{i,1}}{R_{i+1}}F_{i+1}G_{i+2}\bigg], 
\end{align*}
where we used Lemma \ref{lem:cond_independence}.   
The last two terms are simplified as follows
\begin{align*}
\E\bigg[\frac{R_{i,1}}{R_{i+1}}F_{i+1}G_{i+2}\bigg]
&=\E\bigg[\E\bigg[\frac{R_{i,1}}{R_{i+1}}F_{i+1}G_{i+2}\,\bigg|\,\calF_{i+1}\bigg]\bigg]\\
&=\E\bigg[\E\bigg[\frac{R_{i,1}}{R_{i+1}}F_{i+1}\,\bigg|\,\calF_{i+1}\bigg]\E\bigg[G_{i+2}\,\bigg|\,\calF_{i+1}\bigg]\bigg]\\ 
&=\E\bigg[\E\bigg[\frac{R_{i,1}}{R_{i+1}}\,\bigg|\,\calF_{i+1}\bigg]\E\bigg[F_{i+1}G_{i+2}\,\bigg|\,\calF_{i+1}\bigg]\bigg]\\ 
&=\frac{\mu_{1}}{\mu}\E\big[F_{i+1}G_{i+2}\big], 
\end{align*}
where we used Lemma \ref{lem:cond_independence}. 
For the remaining term we recall that $B_{i,1}$ is $\calH_{i}$-measurable and use Lemma \ref{lem:cond_independence} to receive
\begin{align*}
\E\big[B_{i,1}G_{i+1}G_{i+2}\big]&=\E\big[\E\big[B_{i,1}G_{i+1}G_{i+2}\mid\calF_{i}\big]\big]\\
&=\E\big[\E\big[\E\big[B_{i,1}\mid\calF_{i}\vee \calG_{i}\big]\mid\calF_{i}\big]\E\big[G_{i+1}G_{i+2}\mid\calF_{i}\big]\big]\\
&=\E\bigg[\E\bigg[\frac{R_{i,0}}{R_{i}}F_{i}\,\bigg|\,\calF_{i}\bigg]\E\big[G_{i+1}G_{i+2}\mid\calF_{i}\big]\bigg]\\
&=\E\bigg[\E\bigg[\frac{R_{i,0}}{R_{i}}\,\bigg|\,\calF_{i}\bigg]\E\big[F_{i}G_{i+1}G_{i+2}\mid\calF_{i}\big]\bigg]\\ 
&=\frac{\mu_{0}}{\mu}\E\big[F_{i}G_{i+1}G_{i+2}\big].
\end{align*}
For $j>3$ the statement follows by reusing the above arguments.  
\end{proof}
  
Theorem \ref{thm:uncond_backlog_exp} considered unconditional backlog expectations. Below follows the corresponding result for conditional expectations.  
 
\begin{theorem}\label{thm:cond_backlog_exp}
Assume that Assumptions \ref{assump:axioms}, \ref{assump:processing} and \ref{assump:stationary_system} hold. 
For $k\geq 0$ such that $\tau-i+k\geq 0$, 
\begin{align}
&\E\big[B_{i,\tau-i+k+1}\mid\calG_{\tau}\big] \nonumber \\
&\quad=\sum_{m=1}^{k}\mathds{1}_{\{i\leq \tau+m\leq i+J\}}\frac{\mu_{\tau-i+m}}{\mu}\E\bigg[F_{\tau+m}\prod_{l=m+1}^{k}G_{\tau+l}\,\bigg|\,\calE_{\tau}\bigg] \label{eq:thm_cond_be_t1}\\
&\quad\quad+\mathds{1}_{\{i\leq \tau\leq i+J\}}\frac{R_{i,\tau-i}}{R_{\tau}}\E\big[F_{\tau}\mid\calE_{\tau}\big]\E\bigg[\prod_{l=1}^{k}G_{\tau+l}\,\bigg|\, \calE_{\tau}\bigg] \label{eq:thm_cond_be_t2}\\
&\quad\quad+\mathds{1}_{\{i\leq \tau\}}B_{i,\tau-i}\E\big[G_{\tau}\mid\calE_{\tau}\big]\E\bigg[\prod_{l=1}^{k}G_{\tau+l}\,\bigg|\, \calE_{\tau}\bigg], \label{eq:thm_cond_be_t3}
\end{align}
where an empty sum is equal to $0$ and an empty product is equal to $1$. 
We can equally replace all conditions $\calE_\tau$ by $\calG_\tau$.
\end{theorem}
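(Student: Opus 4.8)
The plan is to iterate the one–step recursion \eqref{eq:backlog_b_and_r}, written as $\E[B_{i,t-i+1}\mid\calF_t\vee\calG_t]=B_{i,t-i}\,G_t+(R_{i,t-i}/R_t)\,F_t$, backward in calendar time from $t=\tau+k$ to $t=\tau$, taking $\E[\,\cdot\mid\calG_\tau]$ at each stage — exactly the scheme used for Theorem~\ref{thm:uncond_backlog_exp}, except that we condition on $\calG_\tau$ rather than take a full expectation, and we stop at $t=\tau$ because $B_{i,\tau-i}$ and $R_{i,\tau-i}$ are then $\calG_\tau$-measurable. Applying the recursion at $t=\tau+k$ and using $\calG_\tau\subset\calG_{\tau+k}\subset\calF_{\tau+k}\vee\calG_{\tau+k}$ splits $\E[B_{i,\tau-i+k+1}\mid\calG_\tau]$ into $\E[B_{i,\tau-i+k}G_{\tau+k}\mid\calG_\tau]$ plus a ``new reports'' term; peeling off one $G$ at a time turns the first piece into $\E[B_{i,\tau-i}\,G_\tau\prod_{l=1}^{k}G_{\tau+l}\mid\calG_\tau]$ plus, for $m=k-1,\dots,1$, further new-reports terms of shape $(R_{i,\tau-i+m}/R_{\tau+m})F_{\tau+m}\prod_{l=m+1}^{k}G_{\tau+l}$. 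The three lines \eqref{eq:thm_cond_be_t1}, \eqref{eq:thm_cond_be_t2}, \eqref{eq:thm_cond_be_t3} then correspond to the new-reports terms with $m\ge 1$, the new-reports term with $m=0$, and the surviving-backlog term.

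Two ingredients make the peeling work, both lifted from the proof of Theorem~\ref{thm:uncond_backlog_exp}. First, I would note that Lemma~\ref{lem:cond_independence} holds verbatim with $\calF_t$ replaced by $\calF_t\vee\calG_t$ (or by any $\sigma$-algebra between $\calF_t$ and $\calH_t$): the sole coupling between $\calH_t$ and $\{B_s,R_s,C_s:s>t\}$ is through $B_{t+1}\in\calF_t$, while $\{R_s,C_s:s>t\}$ is independent of $\calH_t$. This lets me write $\E[B_{i,\tau-i+m}\prod_{l=m}^{k}G_{\tau+l}\mid\calF_{\tau+m-1}\vee\calG_{\tau+m-1}]$ as a product of conditional expectations, substitute the recursion in the first factor, move the $\calF_{\tau+m-1}$-measurable quantities $G_{\tau+m-1},F_{\tau+m-1},R_{\tau+m-1}$ back inside, and tower down to $\calG_\tau$. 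Second, on a new-reports term, conditioning on $\calF_{\tau+m}$ (then adjoining $\calG_\tau$, which is harmless because period-$(\tau+m)$ reportings are independent of the earlier data) gives $\E[R_{i,\tau-i+m}\mid\calF_{\tau+m}\vee\calG_\tau]=(\mu_{\tau-i+m}/\mu)R_{\tau+m}$ by Assumption~\ref{assump:stationary_system}(ii) / Theorem~\ref{thm:id_model}; the $R_{\tau+m}$ cancels, leaving $(\mu_{\tau-i+m}/\mu)\,\E[F_{\tau+m}\prod_{l=m+1}^{k}G_{\tau+l}\mid\calG_\tau]$. The indicators $\mathds{1}_{\{i\le\tau+m\le i+J\}}$, $\mathds{1}_{\{i\le\tau\le i+J\}}$, $\mathds{1}_{\{i\le\tau\}}$ just record where the relevant $R_{i,j}$ or $B_{i,\tau-i}$ can be nonzero, and the empty-sum/empty-product conventions absorb $k=0$ and $m=k$.

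What remains is the passage to $\calE_\tau$ and the factored forms. Conditionally on $\calG_\tau$ the pair $(B_\tau,R_\tau)$ is known and is already $\calE_\tau$-measurable, while $C_\tau$ and all of $\{R_s,C_s:s>\tau\}$ are independent of $\calG_\tau$; since $B_{\tau+1}=\max(B_\tau+R_\tau-C_\tau,0)$ and every subsequent backlog — hence every $G_{\tau+l}$ as well as $F_\tau$ and $G_\tau$ — is a measurable function of $(B_\tau,R_\tau)$ and the post-$\tau$ innovations only, the $\calG_\tau$-conditional expectation of any product of these quantities equals its $\calE_\tau$-conditional expectation. That is the last sentence of the theorem and turns \eqref{eq:thm_cond_be_t1} into the stated form. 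The genuinely delicate step, and the one I expect to absorb most of the effort, is the factorisation in \eqref{eq:thm_cond_be_t2}–\eqref{eq:thm_cond_be_t3}: one must show that $\E[F_\tau\prod_{l=1}^{k}G_{\tau+l}\mid\calE_\tau]=\E[F_\tau\mid\calE_\tau]\,\E[\prod_{l=1}^{k}G_{\tau+l}\mid\calE_\tau]$ and the analogue with $G_\tau$ in place of $F_\tau$, i.e.\ that the period-$\tau$ capacity quantities decouple from the downstream product of $G$'s — this is precisely where the specific processing procedure \eqref{eq:processed_b}–\eqref{eq:processed_r} (binomial thinning with $\calF_\tau$-measurable success probabilities) has to be exploited, and it is the step that should be checked with the greatest care. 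Everything else mirrors Theorem~\ref{thm:uncond_backlog_exp}.
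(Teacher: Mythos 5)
Your overall scheme --- peeling off one calendar period at a time via \eqref{eq:backlog_b_and_r}, invoking Lemma \ref{lem:cond_independence} with $\calF_t$ replaced by $\calF_t\vee\calG_t$ (which is indeed harmless, since the only coupling runs through $B_{t+1}\in\calF_t$), and applying Assumption \ref{assump:stationary_system}(ii) to the new-reports terms --- is exactly what the paper intends: its proof of Theorem \ref{thm:cond_backlog_exp} is the single sentence that the claim follows by the same arguments as Theorem \ref{thm:uncond_backlog_exp}, and everything you write up to your last paragraph is a faithful conditional version of that argument. Your observations that the peeling stops at $t=\tau$ because $B_{i,\tau-i}$ and $R_{i,\tau-i}$ are $\calG_\tau$-measurable, and that $\calG_\tau$ may be replaced by $\calE_\tau$ because the post-$\tau$ quantities depend on $\calG_\tau$ only through $(B_\tau,R_\tau)$, are both correct.

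The gap is the step you yourself flag and defer: the factorisation $\E[F_\tau\prod_{l=1}^kG_{\tau+l}\mid\calE_\tau]=\E[F_\tau\mid\calE_\tau]\,\E[\prod_{l=1}^kG_{\tau+l}\mid\calE_\tau]$ and its analogue with $G_\tau$. The route you propose for closing it --- exploiting the binomial-thinning procedure \eqref{eq:processed_b}--\eqref{eq:processed_r} --- cannot work, because the processing procedure is not what couples the two factors; the period-$\tau$ capacity $C_\tau$ is. Given $\calE_\tau$, both $F_\tau$ and $G_\tau$ are non-increasing functions of $C_\tau$, and so is $B_{\tau+1}=\max(B_\tau+R_\tau-C_\tau,0)$, through which the entire product $\prod_{l=1}^kG_{\tau+l}$ depends on the past; hence the two factors are positively correlated given $\calE_\tau$ whenever $C_\tau$ is genuinely random, and the product of conditional expectations strictly underestimates the conditional expectation of the product. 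What the peeling argument actually delivers is the unfactored quantities $\E[F_\tau\prod_{l=1}^kG_{\tau+l}\mid\calE_\tau]$ and $B_{i,\tau-i}\,\E[G_\tau\prod_{l=1}^kG_{\tau+l}\mid\calE_\tau]$; the factored form of \eqref{eq:thm_cond_be_t2}--\eqref{eq:thm_cond_be_t3} is legitimate precisely when $C_\tau$ is $\calE_\tau$-measurable, in particular for constant capacity as in Remark \ref{rem:constant_capacity} and Corollary \ref{cor:cond_backlog_exp}, where $F_\tau$ and $G_\tau$ are themselves $\calE_\tau$-measurable and the factorisation is trivial. So you should either stop at the unfactored identity and only then specialise to deterministic capacity, or add that measurability hypothesis before factoring; as written, the deferred step cannot be closed by the means you indicate.
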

 
 \begin{proof}[Proof of Theorem \ref{thm:cond_backlog_exp}]
 The statement is proved by the same arguments as in the proof of Theorem \ref{thm:uncond_backlog_exp}.
 \end{proof}
 
 \begin{remark}
 For past occurrence periods, $i\leq\tau$, all three terms  
 \eqref{eq:thm_cond_be_t1}, \eqref{eq:thm_cond_be_t2}, \eqref{eq:thm_cond_be_t3} contribute to the conditional backlog expectation. 
 For future occurrence periods, $i>\tau$, the terms \eqref{eq:thm_cond_be_t2} and \eqref{eq:thm_cond_be_t3} vanish.  
 \end{remark}
 
 \begin{remark}\label{rem:constant_capacity}
 If $C_t=c$ is constant, then $F_{t}$ and $G_{t}$ are fully determined by $B_t$ and $R_t$. Hence, $\E\big[F_{\tau}\mid\calE_{\tau}\big]=F_{\tau}$ and $\E\big[G_{\tau}\mid\calE_{\tau}\big]=G_{\tau}$. Moreover, since $(R_t)$ is an i.i.d.~sequence, if $C_t=c$ is constant, then $B_{t+1}$ is fully determined by $B_t$ and $R_t$, and 
 \begin{align*}
 \E\bigg[F_{\tau+m}\prod_{l=m+1}^{k}G_{\tau+l}\,\bigg|\,\calE_{\tau}\bigg]&=\E\bigg[F_{\tau+m}\prod_{l=m+1}^{k}G_{\tau+l}\,\bigg|\, B_{\tau+1}\bigg], \\
 \E\bigg[\prod_{l=1}^{k}G_{\tau+l}\,\bigg|\, \calE_{\tau}\bigg]&=\E\bigg[\prod_{l=1}^{k}G_{\tau+l}\,\bigg|\, B_{\tau+1}\bigg].
 \end{align*}
 \end{remark}

Remark \ref{rem:constant_capacity} implies the following corollary to Theorem \ref{thm:cond_backlog_exp}.

\begin{corollary}\label{cor:cond_backlog_exp}
Assume that Assumptions \ref{assump:axioms}, \ref{assump:processing} and \ref{assump:stationary_system} hold. Assume that there exists a non-random $c>0$ such that $C_t=c$ for all $t$. For $k\geq 0$ such that $\tau-i+k\geq 0$, 
\begin{align*}
\E\big[B_{i,\tau-i+k+1}\mid\calG_{\tau}\big]
&=\sum_{m=1}^{k}\mathds{1}_{\{i\leq \tau+m\leq i+J\}}\frac{\mu_{\tau-i+m}}{\mu}\E\bigg[F_{\tau+m}\prod_{l=m+1}^{k}G_{\tau+l}\,\bigg|\, B_{\tau+1}\bigg] \\
&\quad+\mathds{1}_{\{i\leq \tau\leq i+J\}}\frac{R_{i,\tau-i}}{R_{\tau}}F_{\tau}\E\bigg[\prod_{l=1}^{k}G_{\tau+l}\,\bigg|\, B_{\tau+1}\bigg] \\
&\quad+\mathds{1}_{\{i\leq \tau\}}B_{i,\tau-i}G_{\tau}\E\bigg[\prod_{l=1}^{k}G_{\tau+l}\,\bigg|\, B_{\tau+1}\bigg], 
\end{align*}
where an empty sum is equal to $0$ and an empty product is equal to $1$. 
\end{corollary}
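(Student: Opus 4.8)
The plan is to obtain the corollary by specializing Theorem \ref{thm:cond_backlog_exp} to the case of a constant capacity, using the simplifications collected in Remark \ref{rem:constant_capacity}. Theorem \ref{thm:cond_backlog_exp} already writes $\E[B_{i,\tau-i+k+1}\mid\calG_\tau]$ as a sum of three terms, each built from the prefactors $\mu_{\tau-i+m}/\mu$, $R_{i,\tau-i}/R_\tau$, $B_{i,\tau-i}$ (all of which are $\calG_\tau$-measurable and hence unchanged) and from the conditional expectations $\E[F_\tau\mid\calE_\tau]$, $\E[G_\tau\mid\calE_\tau]$, $\E[F_{\tau+m}\prod_{l=m+1}^{k}G_{\tau+l}\mid\calE_\tau]$ and $\E[\prod_{l=1}^{k}G_{\tau+l}\mid\calE_\tau]$. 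So it suffices to rewrite these four conditional expectations under the assumption $C_t=c$.

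First I would handle the boundary factors $F_\tau$ and $G_\tau$: when $C_\tau=c$ is non-random, the definitions of $F_t$ and $G_t$ in Section \ref{sec:comp_backlog_exp} exhibit both as measurable functions of $(B_\tau,R_\tau)$, hence as $\calE_\tau$-measurable random variables, so $\E[F_\tau\mid\calE_\tau]=F_\tau$ and $\E[G_\tau\mid\calE_\tau]=G_\tau$, exactly as noted in Remark \ref{rem:constant_capacity}. The prefactors $R_{i,\tau-i}/R_\tau$ and $B_{i,\tau-i}$ need no modification since they are already $\calG_\tau$-measurable.

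Next I would carry out the Markov reduction for the forward-looking factors. From the Lindley recursion \eqref{eq:backlog_recursion} with constant capacity, $B_{\tau+1}=\max(B_\tau+R_\tau-c,0)$ is $\calE_\tau$-measurable, and iterating the recursion shows that for every $s>\tau$ the variable $B_s$ is a deterministic function of $B_{\tau+1}$ and $R_{\tau+1},\dots,R_{s-1}$; consequently $F_{\tau+m}$ (a function of $B_{\tau+m},R_{\tau+m}$) and each $G_{\tau+l}$ (a function of $B_{\tau+l},R_{\tau+l}$) are deterministic functions of $B_{\tau+1}$ together with $(R_s)_{s\geq\tau+1}$. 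Since $(R_s)_{s\geq\tau+1}$ is i.i.d. and independent of $\calF_\tau$, hence of $\calE_\tau$, conditioning these functions on $\calE_\tau$ reduces to conditioning on $B_{\tau+1}$, giving $\E[F_{\tau+m}\prod_{l=m+1}^{k}G_{\tau+l}\mid\calE_\tau]=\E[F_{\tau+m}\prod_{l=m+1}^{k}G_{\tau+l}\mid B_{\tau+1}]$ and likewise for $\E[\prod_{l=1}^{k}G_{\tau+l}\mid\calE_\tau]$. Substituting these identities, together with $\E[F_\tau\mid\calE_\tau]=F_\tau$ and $\E[G_\tau\mid\calE_\tau]=G_\tau$, into the three terms of Theorem \ref{thm:cond_backlog_exp} produces the claimed expression.

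The only delicate point is the bookkeeping in the Markov reduction — verifying that the relevant $F$ and $G$ factors are genuinely functions of $B_{\tau+1}$ and the post-$\tau$ reportings, and that the latter are independent of the conditioning $\sigma$-algebra, so that the conditional expectation collapses to one given $B_{\tau+1}$. Beyond that the argument is a direct substitution into Theorem \ref{thm:cond_backlog_exp}, so I do not anticipate a genuine obstacle.
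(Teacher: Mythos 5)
Your proposal is correct and follows essentially the same route as the paper: the paper derives Corollary \ref{cor:cond_backlog_exp} by substituting into Theorem \ref{thm:cond_backlog_exp} the identities of Remark \ref{rem:constant_capacity}, namely $\E[F_\tau\mid\calE_\tau]=F_\tau$, $\E[G_\tau\mid\calE_\tau]=G_\tau$, and the reduction of the forward-looking conditional expectations to conditioning on $B_{\tau+1}$. Your write-up merely makes explicit the Markov/independence bookkeeping that the paper leaves implicit in that remark, and it is sound.
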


\section{The negative binomial model}\label{sec:neg_bin_model} 
It may seem natural to consider a Poisson model for the number of reported claims, where all $R_{i,j}$ are independent and $R_{i,j}\sim \Pois(\mu_j)$.
Such a model is consistent with Assumption \ref{assump:stationary_system}. However, as explained in Remark \ref{rem:poisson_backlog}, for constant capacity $C_t=c$ and $\mu<c$ reasonably large, say $1000$, the expected total size of the backlog will be close to $0$ unless $c\approx \mu$. The event $\{B_t>c\}$, which appears in expected backlog calculations in Section \ref{sec:comp_backlog_exp}, is a very unlikely event for the Poisson model. Markov's inequality together with Remark \ref{rem:poisson_backlog} give
\begin{align*}
\P[B_t>c]\leq \frac{1}{c/\mu-1}\, \frac{1}{2c}\approx 0. 
\end{align*}
Thus, long-term backlogs are not an issue in Poisson settings.
We need a model for the numbers of reported claims with considerable over-dispersion compared to the Poisson model to have
an interesting backlog behavior.

We start with a generic negative binomial random variable $R$. 
Assume that $R$ is conditionally Poisson distributed with conditional
mean $\Lambda$, and $\Lambda \sim \Gamma(\alpha, \beta)$. It follows
that $R$ has an unconditional negative binomial distribution $\NegBin(\alpha,\beta)$ with moment
generating function
\begin{align*}
\E[\exp \{x R \}]= \E\big[\exp \big\{\Lambda (e^x-1)\big\}\big]
=
\bigg(\frac{\beta}{\beta - (e^x-1)}\bigg)^\alpha,
\quad \text{for $x<\log(\beta+1)$.}
\end{align*}
From this it follows that we can aggregate independent negative binomial
random variables $R_{i,0}, \ldots, R_{i,J}$ (or $R_{t,0}, R_{t-1,1}, \ldots, R_{t-J,J}$) as long as they share the same scale
parameter $\beta$, and we remain in the family of negative binomial random variables
\begin{align*}
\E\bigg[\exp \bigg\{x \sum_{j=0}^{J}R_{i,j} \bigg\}\bigg]=\prod_{j=0}^{J} \E[\exp \{x R_{i,j}\}]
=\bigg(\frac{\beta}{\beta - (e^x-1)}\bigg)^{\sum_{j=0}^{J}\alpha_j}.
\end{align*}
We consider the following model for the number of reported claims: all $R_{i,j}$ are independent with $R_{i,j}\sim \NegBin(\alpha_j,\beta)$. 
Hence, 
\begin{align*}
R_t := \sum_{j=0}^J R_{t-j,j} \eqdis \sum_{j=0}^{J} R_{i,j} \sim \NegBin(\alpha,\beta), \quad \alpha:=\sum_{j=0}^J \alpha_j, 
\end{align*}
with means and variances
\begin{align*}
\E[R_t] = {\alpha}/{\beta}
\quad \text{ and } \quad
\Var[R_t] = {\alpha}/{\beta}+{\alpha}/{\beta^2}
=\E[R_t] \big( 1 + 1/\beta\big). 
\end{align*}
The over-dispersion term $1/\beta$ distinguishes the negative binomial model
from the Poisson model. In practical applications, it is often reasonable
to assume that the model has a coefficient of variation roughly on the unit scale,
typically, it is bigger for claim size modeling than for claim counts modeling.
The coefficient of variation is in this negative binomial model given by
\begin{align*}
\frac{\sqrt{\Var[R_t]}}{\E[R_t]}
=\sqrt{\frac{\beta}{\alpha}+\frac{1}{\alpha}}
=\sqrt{\frac{1}{\E[R_t]}+\frac{1/\beta}{\E[R_t]}}.
\end{align*}
This requires that $\beta>0$ lives on the same scale as $1/\E[R_t]$.
We select $\alpha=2$ and $\beta=2/1000$, this gives
an expected value of $\mu=\E[R_t]=1000$ and a coefficient of variation $\approx 0.7$. 
We select $J=3$, and we split the expected numbers of reported claims according to 
\begin{align}\label{expected claims split}
(\E[R_{t,0}], \E[R_{t,1}], \E[R_{t,2}], \E[R_{t,3}])
&=(\alpha_0, \alpha_1, \alpha_2, \alpha_3)/\beta
\\&=(500, 300, 150, 50). \nonumber
\end{align}
We select a constant period capacity $C_t=c$ with $c=\eta\mu$, where the capacity ratio $\eta$ is chosen as $\eta=1.2$.  

The following figures show the results from this negative binomially generated
claims reportings data $R_t$, $1\le t \le T$, over the
$T=120$ (monthly) time periods. 
We start the process with a zero backlog $B_1 = 0$.

\begin{figure}[htb!]
\begin{center}
\begin{minipage}[t]{0.48\textwidth}
\begin{center}
\includegraphics[width=\textwidth]{./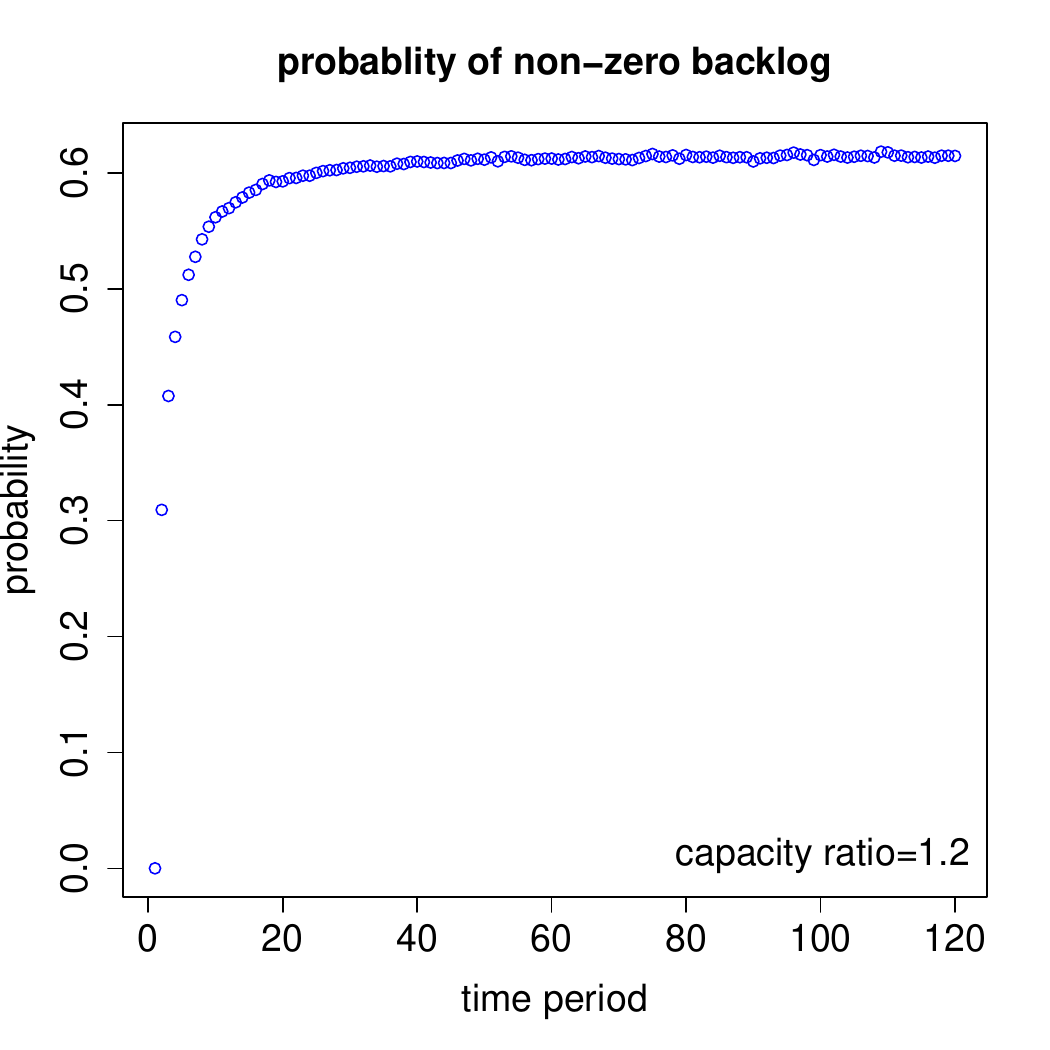}
\end{center}
\end{minipage}
\begin{minipage}[t]{0.48\textwidth}
\begin{center}
\includegraphics[width=\textwidth]{./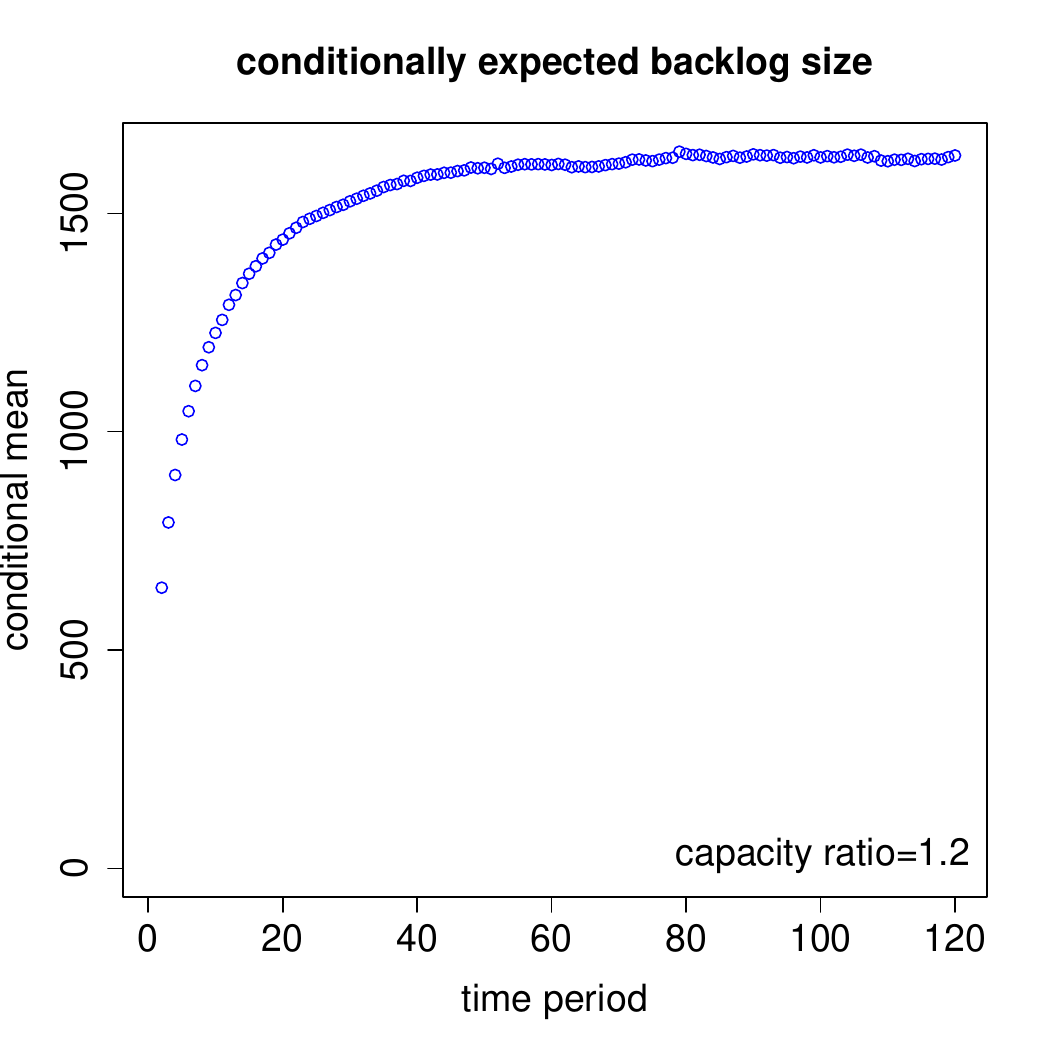}
\end{center}
\end{minipage}
\end{center}
\vspace{-.65cm}
\caption{Negative binomial model with capacity ratio $\eta=1.2$:
(lhs) probabilities of a non-zero backlog $\P[B_t>0 \mid B_1=0]$ for
different time periods $1\le t \le T$; (rhs) conditionally expected
backlog sizes $\E[B_t\mid B_t>0, B_1=0]$.}
\label{NB backlog analysis 120}
\end{figure}

Figure \ref{NB backlog analysis 120} (lhs) shows the
probabilities of a non-zero backlog when starting 
with a zero backlog, $\P[B_t>0 \mid B_1=0]$, for different time periods
$1\le t \le T$, and the corresponding conditionally expected backlog sizes
$\E[B_t \mid B_t>0, B_1=0]$ for a capacity ratio of $\eta=1.2$ in this negative
binomial model. 
The conditionally expected long-term backlog $\lim_{t \to \infty}\E[B_t \mid B_t>0, B_1=0]$
is of magnitude 1600, see Figure \ref{NB backlog analysis 120} (rhs). This implies that we have an expected long-term backlog
of $\lim_{t\to \infty}\E[B_t\mid B_1=0]\approx 1000$, which is just slightly below the selected capacity 
$c=\eta \mu =1200$. This results in frequent carry forward of old backlogs.

\begin{figure}[htb!]
\begin{center}
\begin{minipage}[t]{0.48\textwidth}
\begin{center}
\includegraphics[width=\textwidth]{./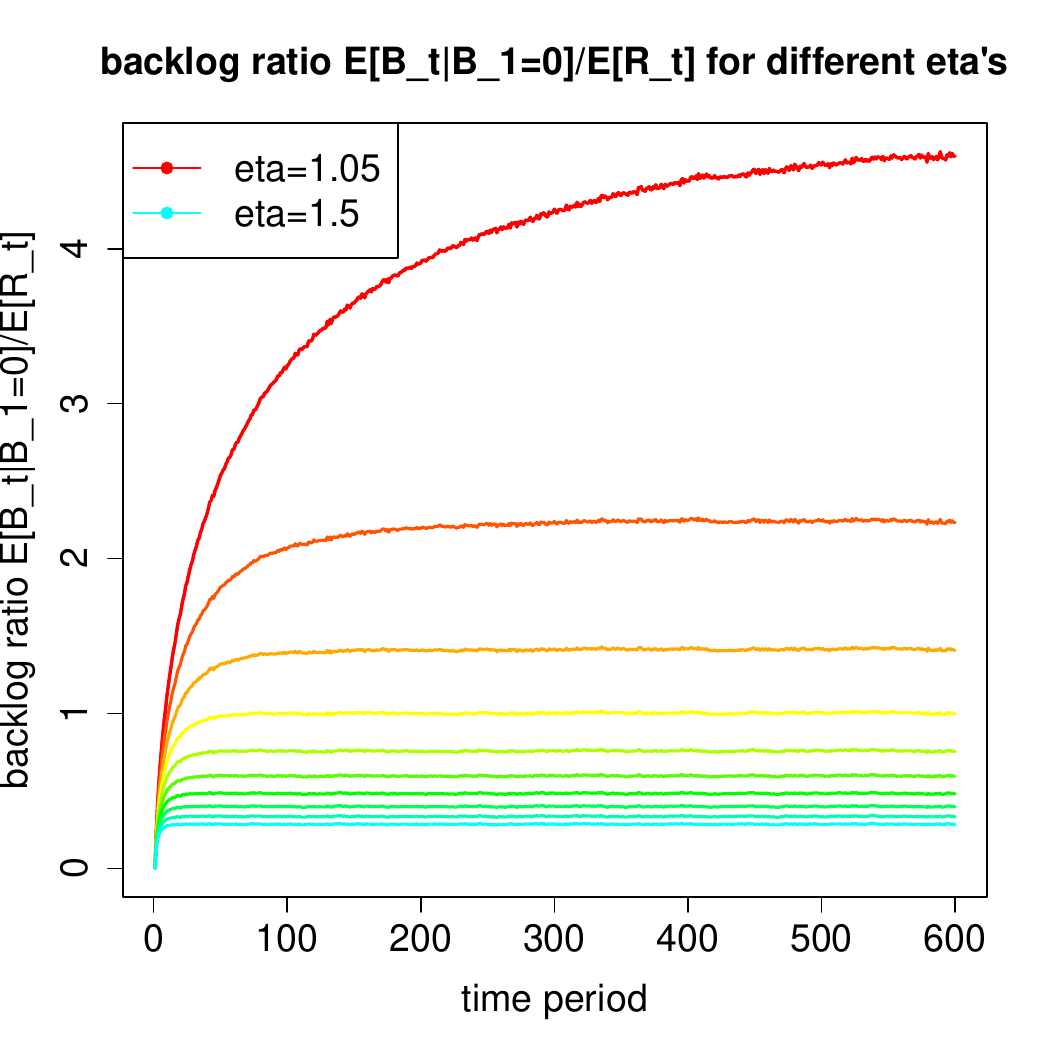}
\end{center}
\end{minipage}
\end{center}
\vspace{-.65cm}
\caption{Relative expected backlogs $\E[B_t\mid B_1=0]/\E[R_t]$
for capacity ratios $\eta\in \{1.05, 1.10, \ldots, 1.50\}$ if one starts
with a zero backlog.}
\label{development of processing}
\end{figure}

Figure \ref{development of processing} shows the
relative expected backlogs $\E[B_t\mid B_1=0]/\E[R_t]$ if we start the
process with a zero backlog $B_1=0$. The different colors correspond
to capacity ratios $\eta\in\{1.05, 1.10, \ldots, 1.50\}$. For a capacity
ratio $\eta=1.1$ the average backlog is roughly twice the expected
number of reported claims, and for $\eta=1.2$ we have a factor
of roughly 1.

\begin{figure}[htb!]
\begin{center}
\begin{minipage}[t]{0.48\textwidth}
\begin{center}
\includegraphics[width=\textwidth]{./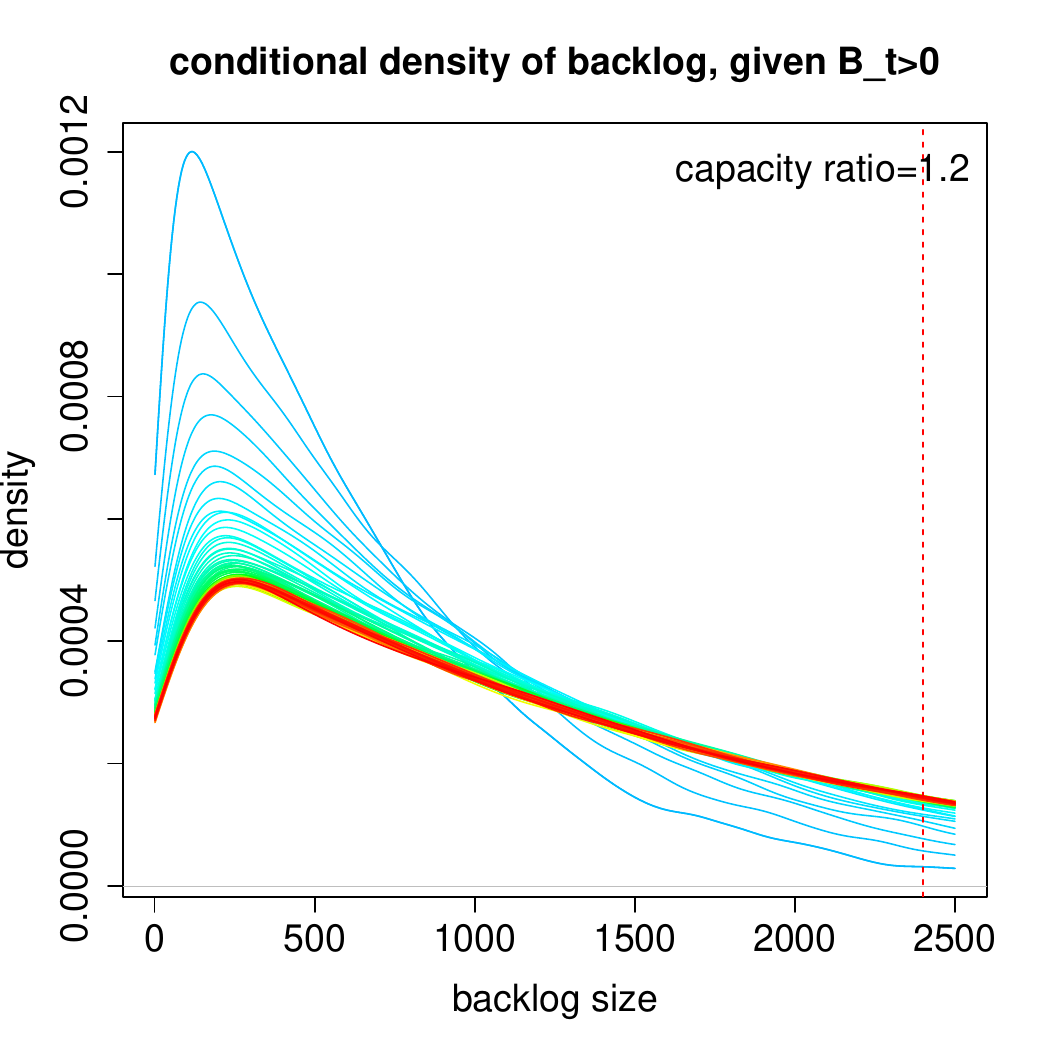}
\end{center}
\end{minipage}
\begin{minipage}[t]{0.48\textwidth}
\begin{center}
\includegraphics[width=\textwidth]{./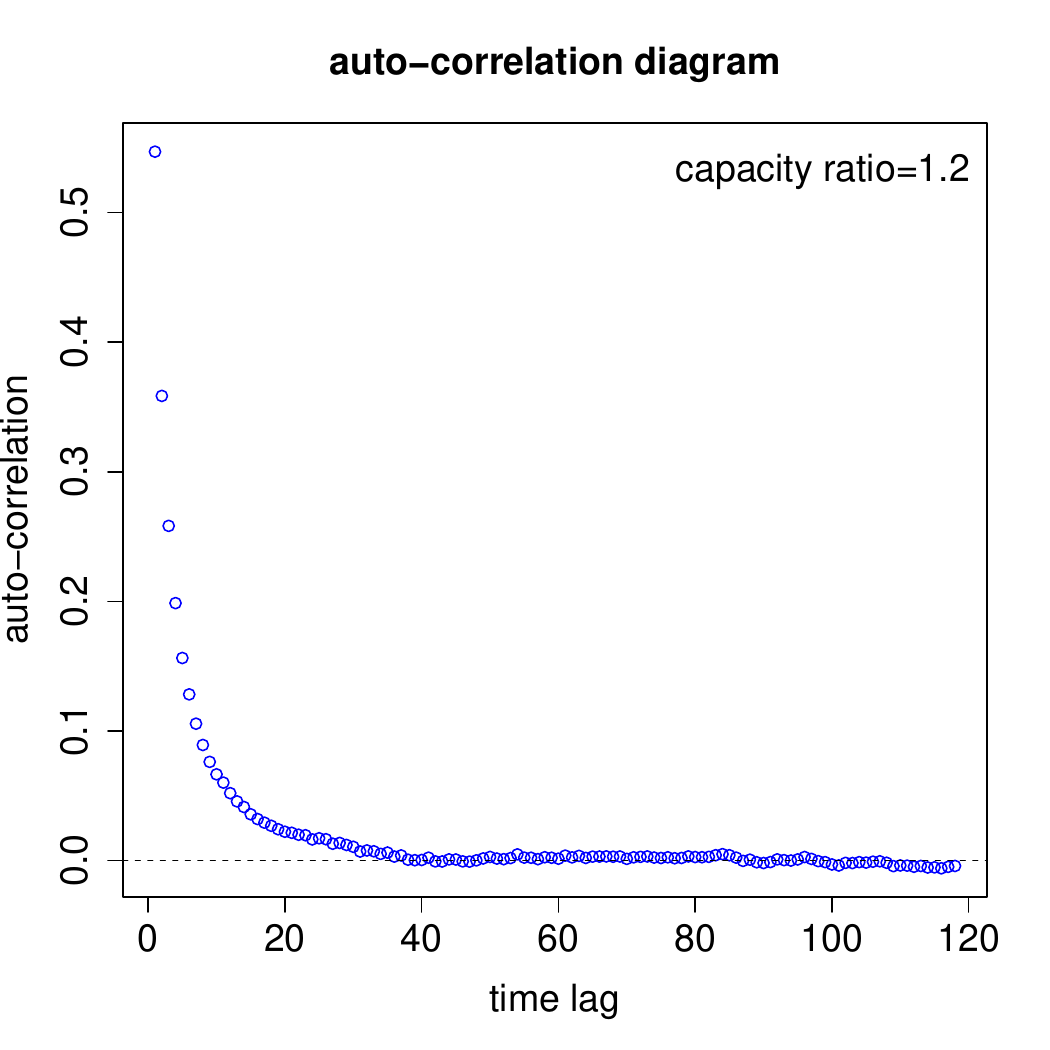}
\end{center}
\end{minipage}
\end{center}
\vspace{-.65cm}
\caption{Negative binomial model: (lhs) empirical densities
of the backlogs, conditioned on $\{B_t>0, B_1=0\}$ for
different time periods $2\le t \le T$; (rhs) auto-correlation diagram of backlogs
$B_2$ and $B_{2+s}$ for time lags $s\ge 1$ when starting with
a zero backlog $B_1=0$.}
\label{NB autocor 101}
\end{figure}

Figure \ref{NB autocor 101} (lhs) gives the empirical conditional densities
of the backlog $B_t$, conditioned on $\{B_t>0, B_1=0\}$, for $2\le t \le T$. The light-blue
density with the highest maximum corresponds to $t=2$, and with increasing time
$t$ we follow the rainbow colors (from light-blue to red). We observe that this
conditional backlog size has still a significant positive probability that exceeds
twice the capacity $2c$ in the stationary limit (vertical red dotted line). This indicates
that we likely have carry forwards of backlogs over multiple periods.
Figure \ref{NB autocor 101} (rhs) shows the auto-correlation diagram of
backlogs $B_2$ and $B_{2+s}$ for time lags $s\ge 1$ when starting with
a zero backlog $B_1=0$. This auto-correlation
$\operatorname{Corr}(B_2,B_{2+s}\mid B_1=0)$
has vanished after 40 periods.

\begin{figure}[htb!]
\begin{center}
\begin{minipage}[t]{0.48\textwidth}
\begin{center}
\includegraphics[width=\textwidth]{./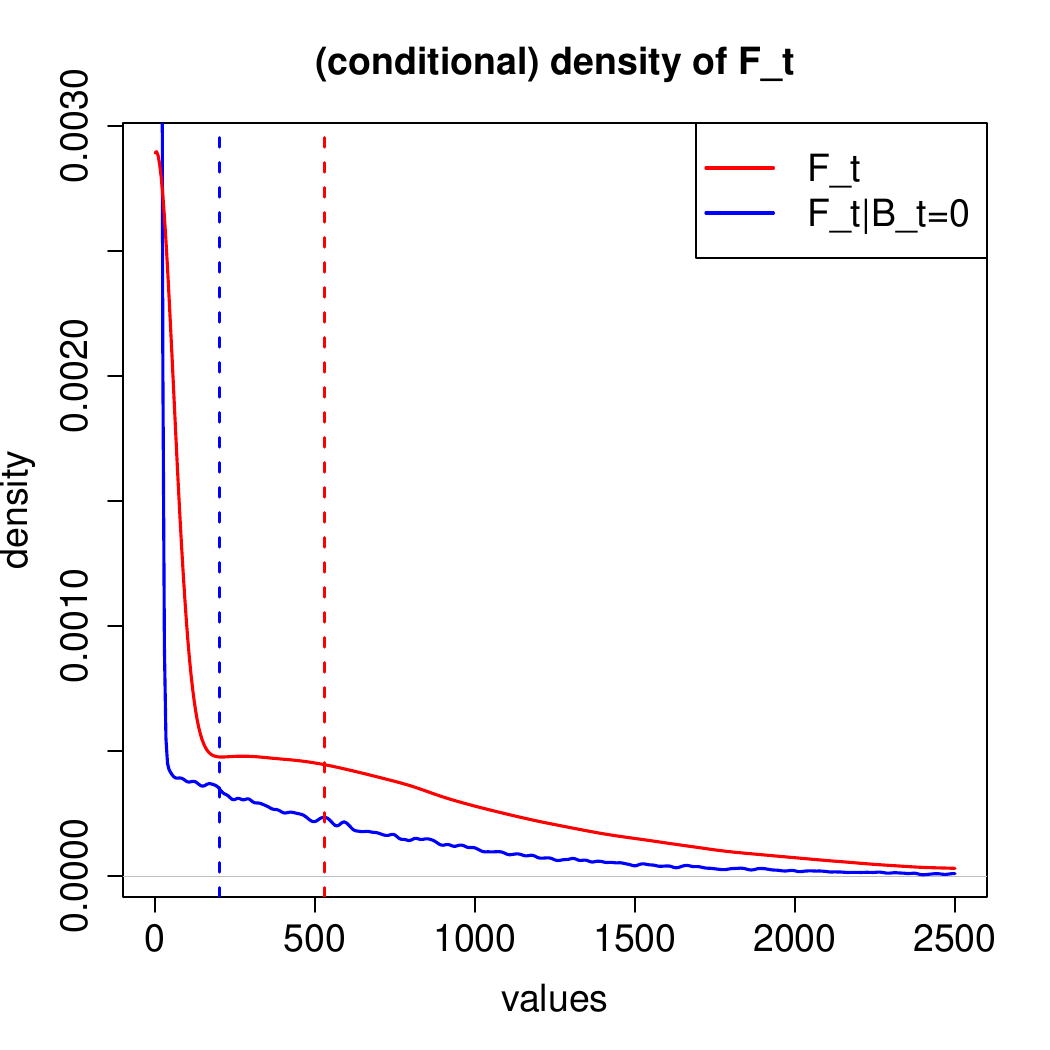}
\end{center}
\end{minipage}
\begin{minipage}[t]{0.48\textwidth}
\begin{center}
\includegraphics[width=\textwidth]{./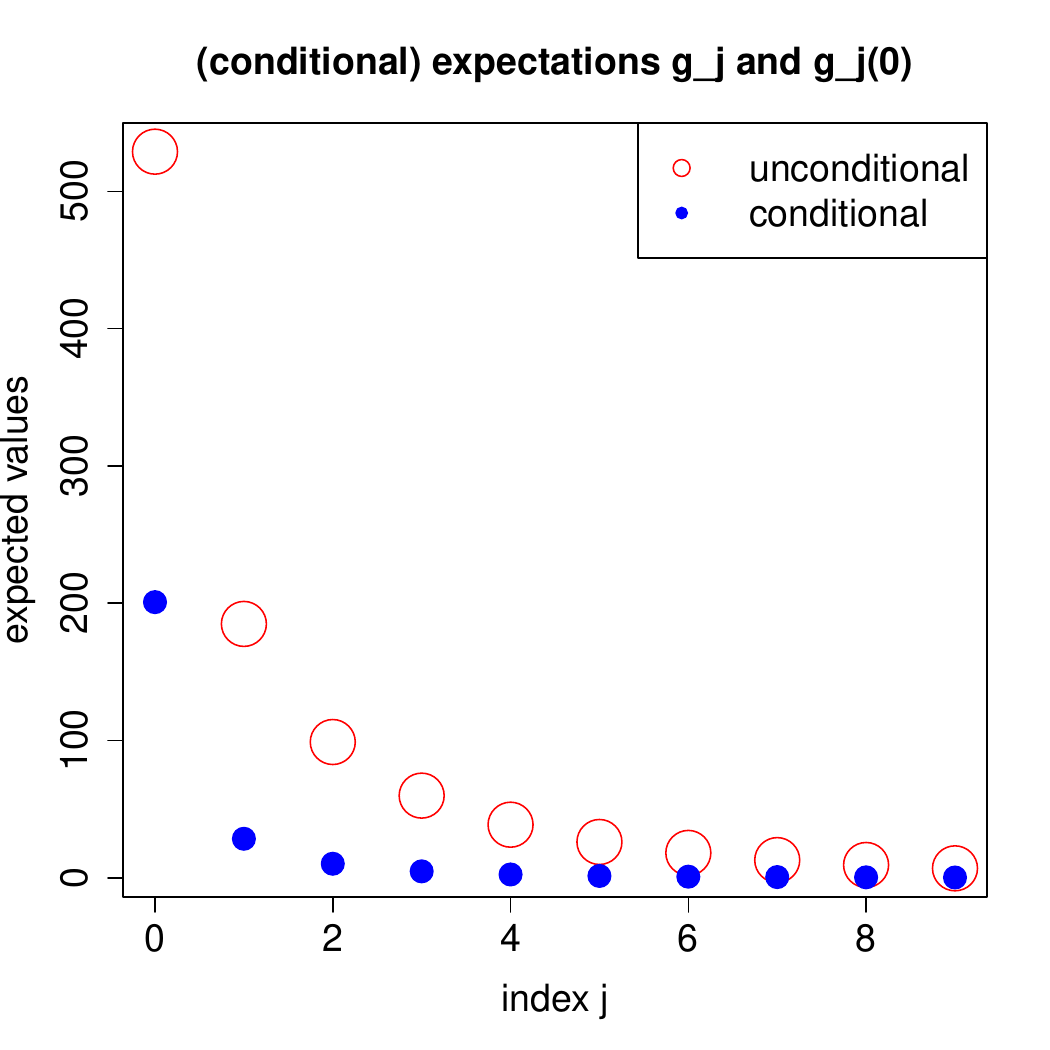}
\end{center}
\end{minipage}
\end{center}
\vspace{-.65cm}
\caption{Negative binomial model: (lhs) (conditionally) densities
and expected
values $\E[F_t]$ and $\E[F_t\mid B_t=0]$;
(rhs) (conditional) expectations given
in equation \eqref{G expectations} for $0 \le j \le 9$, where
$j=0$ corresponds to the empty product only considering $F_t$;
red refers to the unconditional case averaging over the stationary distribution
for the initial value $B_t$, and blue corresponds to starting the process
in a zero backlog $B_t=0$.}
\label{NB density Ft 101}
\end{figure}

Figure \ref{NB density Ft 101} (lhs) shows the empirical density of $F_t$ and the conditional empirical density of $F_t$, given $B_t=0$. The latter is directly simulated, given $B_t=0$, using the aggregate negative binomial model assumption for $R_t$. For the former, we select $B_t$ from the stationary limit
distribution of Figure \ref{NB autocor 101} (lhs), and then simulate $F_t$ conditional on this value. The vertical dotted lines show
the (conditionally) expected values $\E[F_t]$ and $\E[F_t\mid B_t=0]$ in red and blue.
 
This carries over to Figure \ref{NB density Ft 101} (rhs) which shows the (conditional) expectations $g_j$ and $g_j(0)$ of $F_t \prod_{l=t+1}^{t+j} G_l$, where 
\begin{align}\label{G expectations}
g_j:=\E \left[F_t \prod_{l=t+1}^{t+j} G_l \right]
\quad \text{ and } \quad 
g_j(b):=\E \bigg[F_t \prod_{l=t+1}^{t+j} G_l\,\bigg|\, B_t=b \bigg].
\end{align}
We obtain expected values that are significantly bigger than zero for small $j$'s, and this precisely differentiates the backlog behavior in the negative binomial model from the Poisson case. 

\section{Neural network approximation}\label{sec:NN}
To take into account backlog costs in cost optimization problems, we need to be able to study the sensitivities of the functions $(g_j(\cdot))_{j\ge 0}$ defined in \eqref{G expectations} in the capacity ratio parameter $\eta>1$. 
We therefore slightly modify the corresponding notation by adding upper indices $^{(\eta)}$ to the variables that directly depend on the selected capacity ratio $\eta>1$. 
In particular, we consider the following quantity for backlog computations
\begin{align}\label{G expectations 2}
g_j(b;\eta)=\E\bigg[F_1^{(\eta)} \prod_{l=2}^{j+1} G^{(\eta)}_{l}\,\bigg|\, B^{(\eta)}_1=b \bigg],
\end{align}
for $j\ge 0$, with an empty product being set to one.

Since the quantities \eqref{G expectations 2} cannot be computed explicitly, as a function of $\eta$, we fit a recurrent neural network (RNN) $\bz^{\rm RNN}_{\theta}: \R^{2}\to \R^T$ that takes the inputs $(b,\eta)$ to approximate the function $(b,\eta) \mapsto (g_j(b;\eta))_{j=0}^{T-1}\in \R^T$, for a fixed large $T$.
The fitted RNN $\bz^{\rm RNN}_{\widehat{\theta}}$ is obtained by minimizing the square loss in network parameter $\theta$
\begin{align}\label{gradient descent 1st}
\widehat{\theta} ~ \in ~
\underset{\theta}{\arg\min}\,
\frac{1}{n} \sum_{k=1}^n
\sum_{j=0}^{T-1}
\bigg(F_1^{k} \prod_{l=2}^{j+1} G_l^{k} - \bz^{\rm RNN}_{\theta}(B_1^k,\eta^k)_j \bigg)^2,
\end{align}
where the observations $(F_1^{k}, G^{k}_2, \ldots, G_{T-1}^{k})$, $1\le k \le n$, are simulated by using i.i.d.~randomized uniform capacity ratios $\eta^k \in (1.05,1.50)$, randomized initial backlogs $B_1^k$ (from the stationary limit distribution corresponding to the simulated $\eta^k$), simulated reporting processes $(R^k_t)_{t=1}^T$, and the resulting backlog processes $(B_t^k)_{t=1}^T$ for the simulated capacity ratios $\eta^k$. 
From this we compute the observations  $(F_1^{k}, G^{k}_2, \ldots, G_{T-1}^{k})$, which then enter the square loss \eqref{gradient descent 1st}.

\begin{figure}[htb!]
\begin{center}
\begin{minipage}[t]{0.48\textwidth}
\begin{center}
\includegraphics[width=\textwidth]{./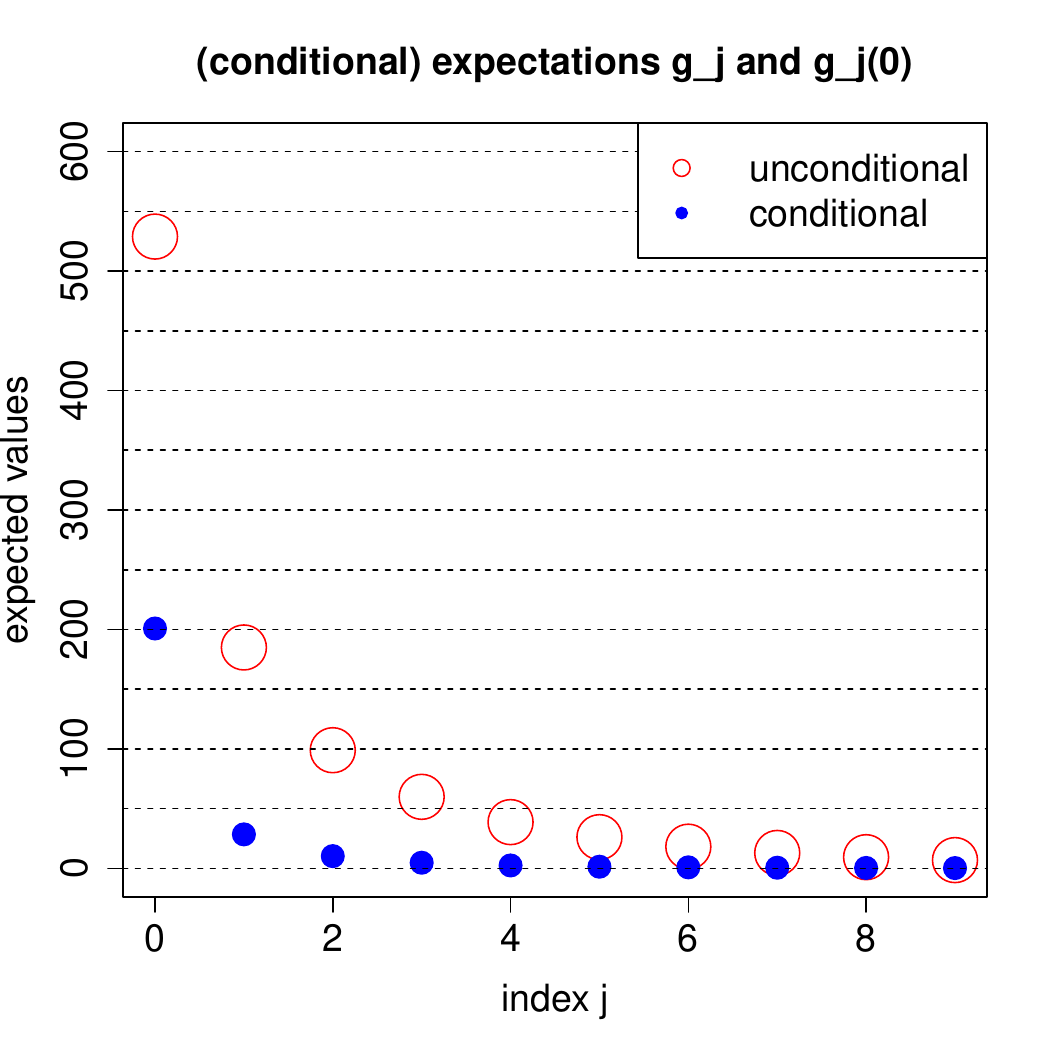}
\end{center}
\end{minipage}
\begin{minipage}[t]{0.48\textwidth}
\begin{center}
\includegraphics[width=\textwidth]{./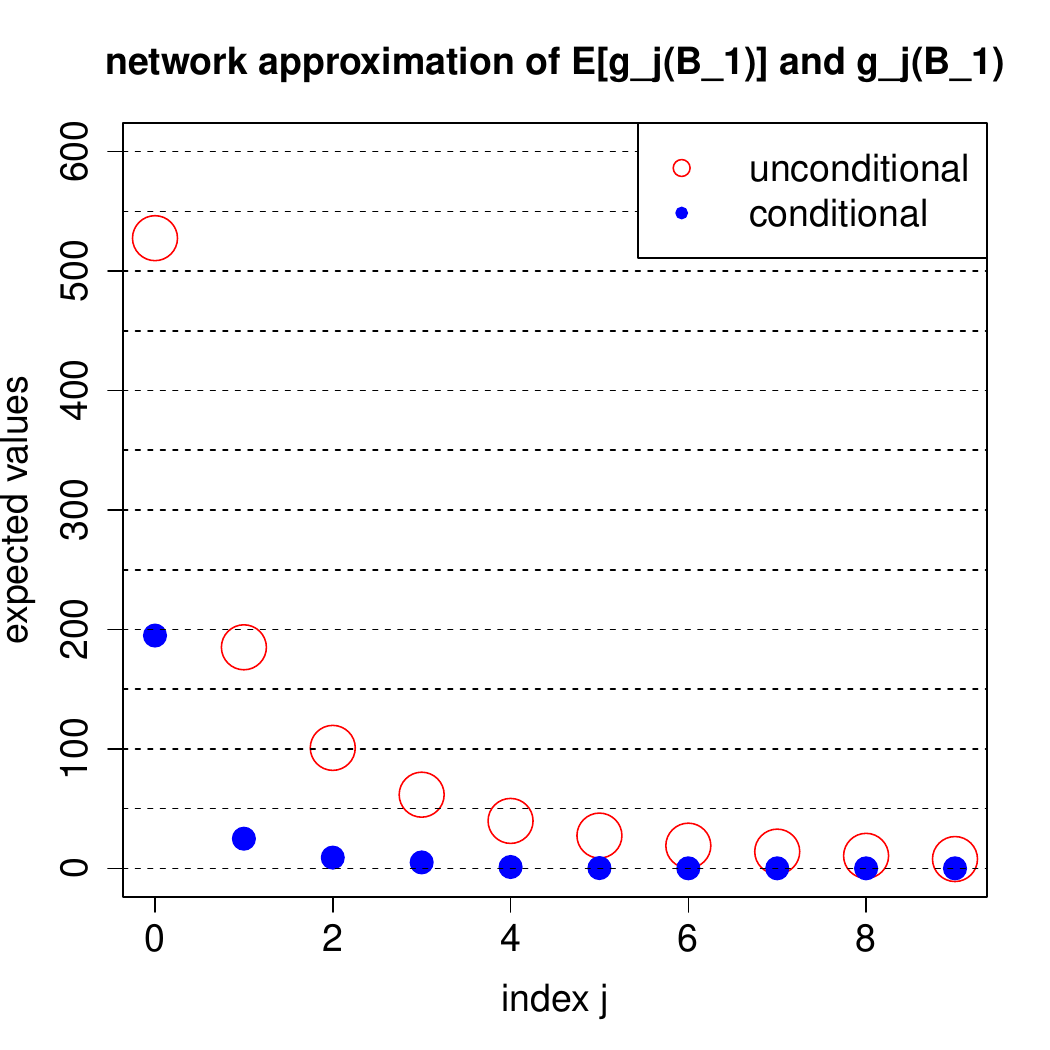}
\end{center}
\end{minipage}
\end{center}
\vspace{-.65cm}
\caption{(Conditional) expectations
$\E[g_j(B^{(\eta)}_1;\eta)]$ and $g_j(0;\eta)$, $0\le j\le 9$, for capacity
ratio $\eta=1.2$: (lhs) empirical means taken from Figure
\ref{NB density Ft 101} (rhs), and (rhs) RNN approximations $\bz^{\rm RNN}_{\widehat{\theta}}$.}
\label{NB network approximation}
\end{figure}
 
Figure \ref{NB network approximation} (rhs) shows the results of the fitted RNN  $\bz^{\rm RNN}_{\widehat{\theta}}$ approximation, and they are compared to the empirical means on the left-hand side in this figure; these are taken from Figure \ref{NB density Ft 101} (rhs), and both figures have the same $y$-scale and the same selected capacity ratio $\eta=1.2$. For the conditional version we start with a zero backlog $B^{(\eta)}_1=0$, thus, we consider $\bz^{\rm RNN}_{\widehat{\theta}}(0,\eta)$, and for the unconditional version $\E[\bz^{\rm RNN}_{\widehat{\theta}}(B^{(\eta)}_1,\eta)]$ we average over the stationary limit distribution for $B^{(\eta)}_1$ that corresponds to the capacity ratio $\eta=1.2$; note that this averaging is done purely empirically by selecting stationary samples from the backlogs.
From this figure we conclude that the RNN approximations are very close to the empirical means. 
Thus, overall the RNN approximations $\bz^{\rm RNN}_{\widehat{\theta}}$ seem very accurate. 

\begin{figure}[htb!]
\begin{center}
\begin{minipage}[t]{0.4\textwidth}
\begin{center}
\includegraphics[width=\textwidth]{./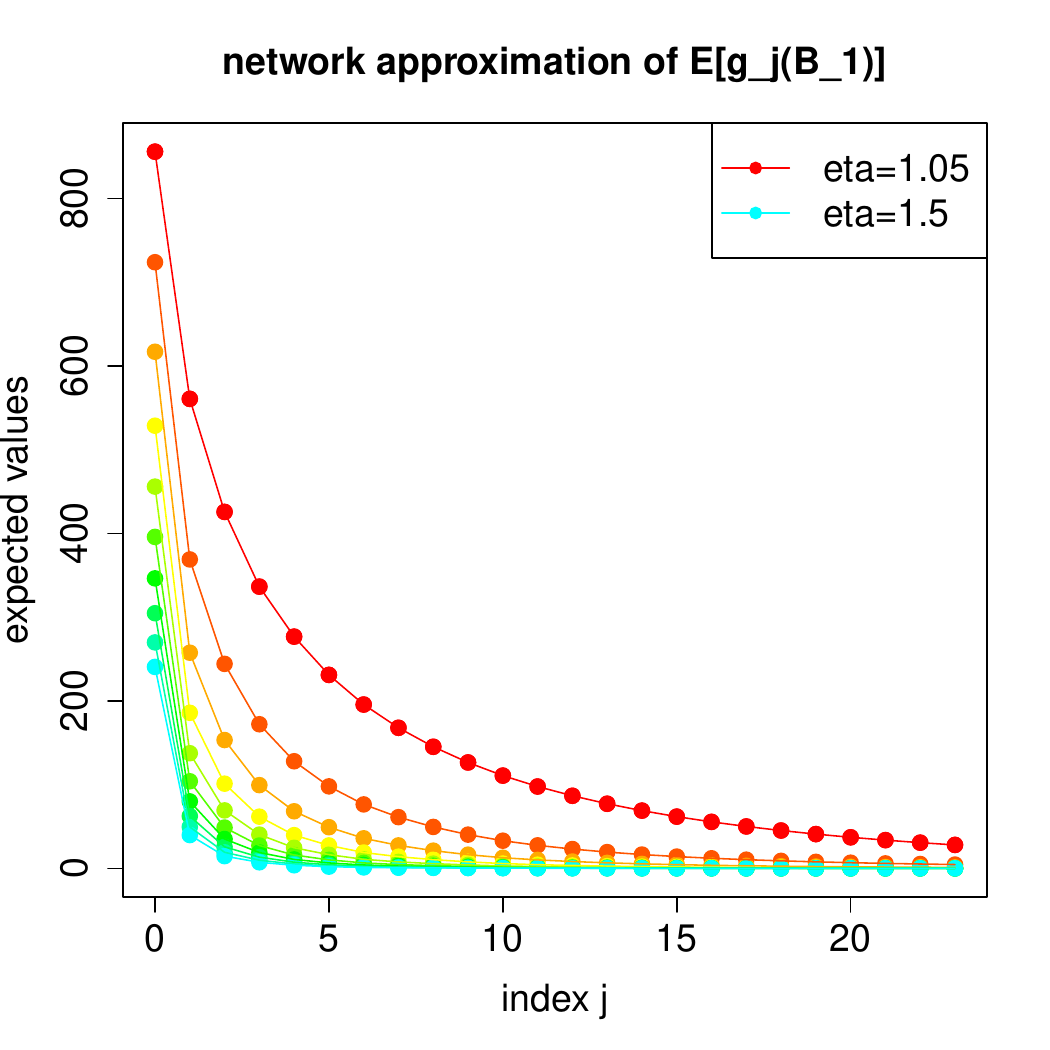}
\end{center}
\end{minipage}
\begin{minipage}[t]{0.4\textwidth}
\begin{center}
\includegraphics[width=\textwidth]{./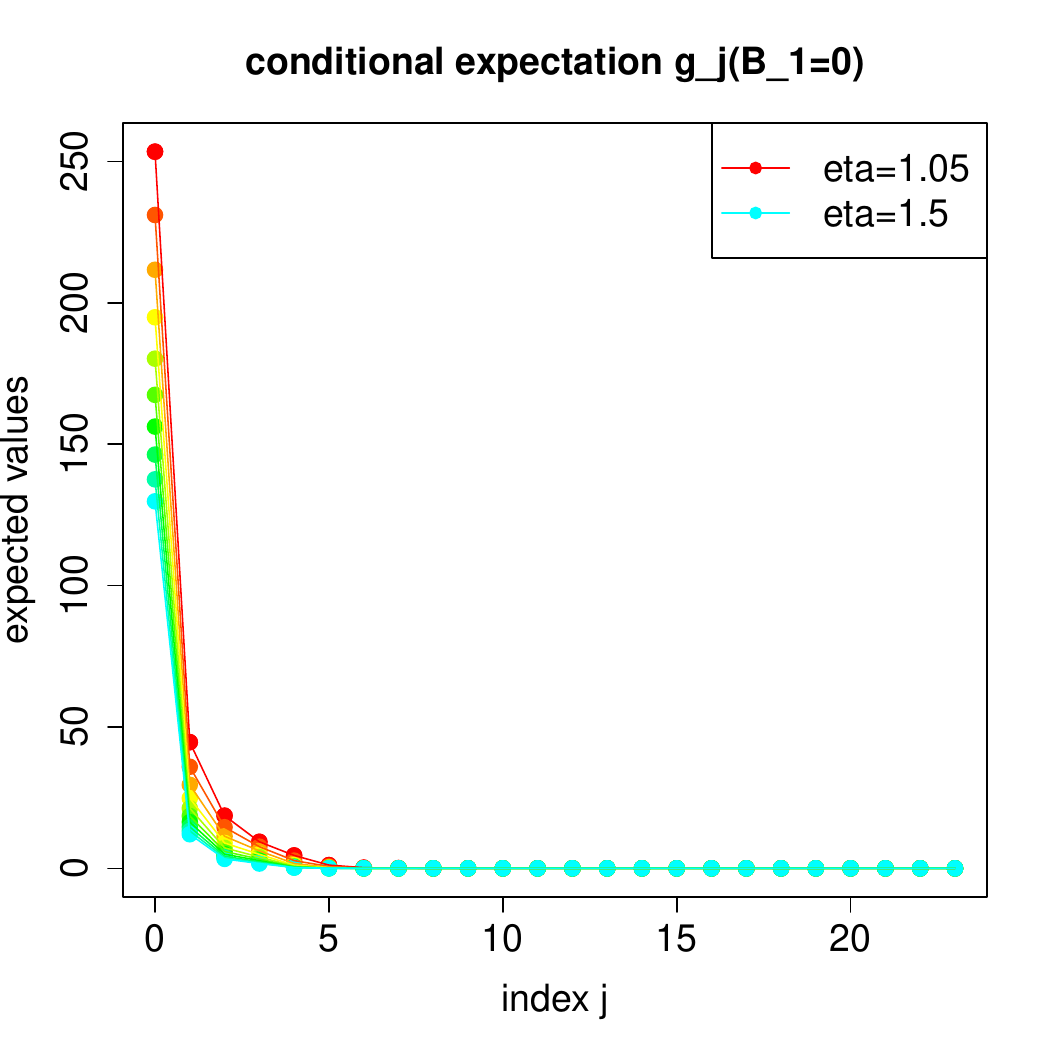}
\end{center}
\end{minipage}
\begin{minipage}[t]{0.4\textwidth}
\begin{center}
\includegraphics[width=\textwidth]{./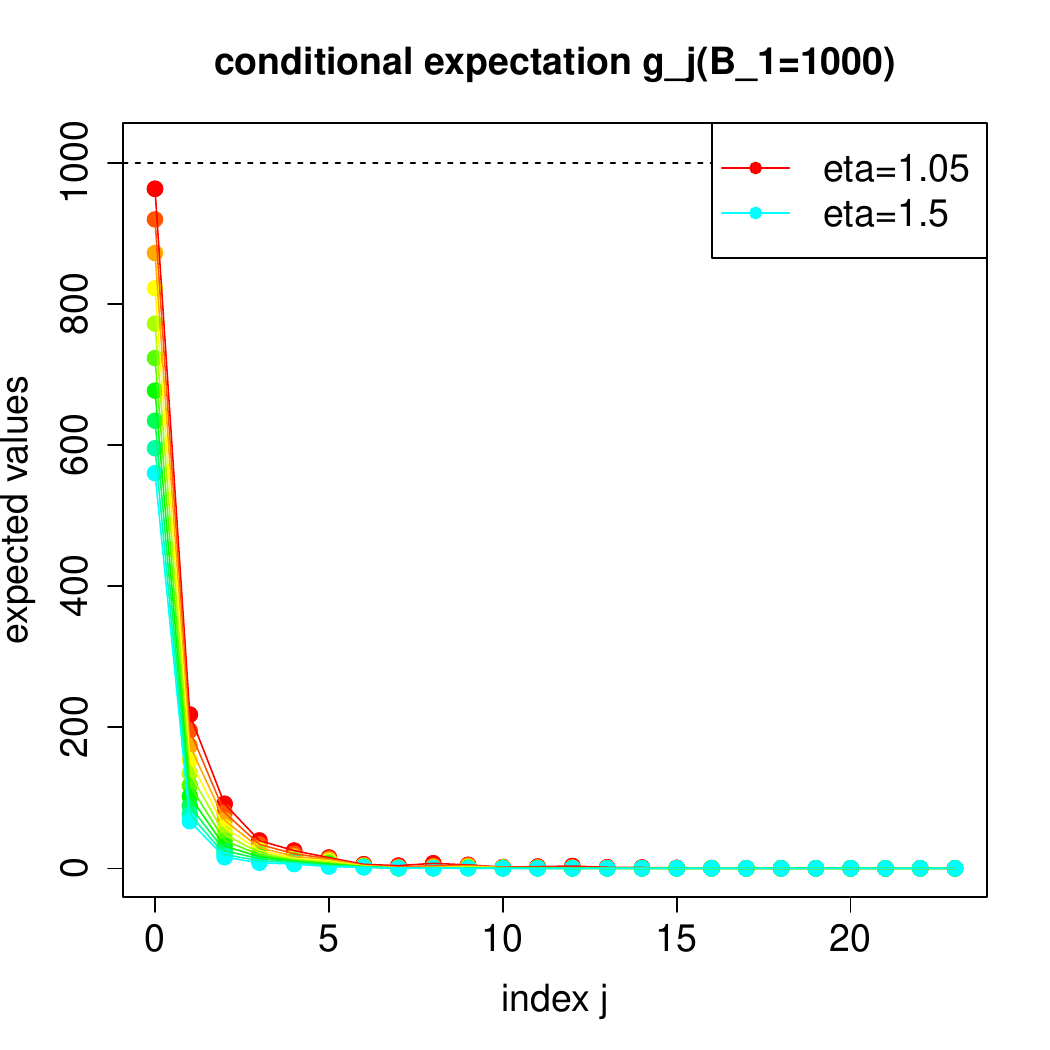}
\end{center}
\end{minipage}
\begin{minipage}[t]{0.4\textwidth}
\begin{center}
\includegraphics[width=\textwidth]{./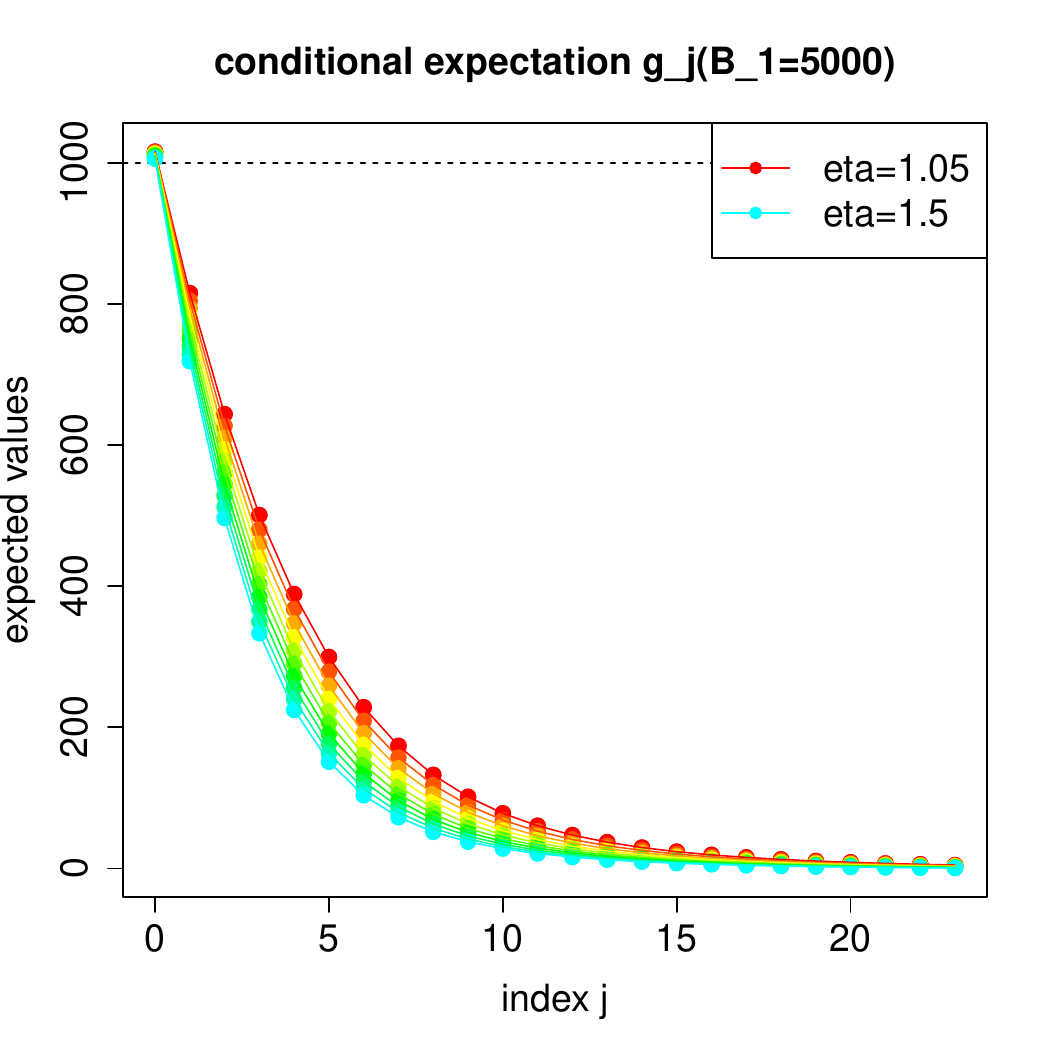}
\end{center}
\end{minipage}
\end{center}
\vspace{-.65cm}
\caption{RNN approximations of the means $\E[g_j(B^{(\eta)}_1;\eta)]$ and conditional means $g_j(B^{(\eta)}_1;\eta)$ for capacity ratios $\eta\in \{1.05, 1.10, \ldots, 1.50\}$: (top-left) unconditional
version averaged over the stationary limit distribution of $B_1^{(\eta)}$; remaining versions are the conditional means for starting backlogs $B^{(\eta)}_1 \in \{0,1000, 5000\}$.}
\label{NB network approximation 2}
\end{figure}

Figure \ref{NB network approximation 2} shows the RNN approximations for different choices of the capacity ratio $\eta \in \{1.05, 1.10, \ldots, 1.50\}$ and for different starting backlogs $B^{(\eta)}_1$. 
All these plots are obtained from the (single) fitted RNN $(b,\eta)\mapsto\bz^{\rm RNN}_{\widehat{\theta}}(b,\eta)$, i.e., we can now simultaneously evaluate $g_j(b,\eta)$ for any initial backlog $b\in \{0,\ldots,  40,000\}$ and any capacity ratio $\eta \in [1.05,1.50]$, this is the input domain on which the network $\bz^{\rm RNN}_{\widehat{\theta}}$ has been trained on.

Figure \ref{NB network approximation 2} (top-left) gives the unconditional versions $\E[\bz^{\rm RNN}_{\widehat{\theta}}(B^{(\eta)}_1;\eta)]$, where we average 
over the stationary limit distribution of the backlogs under the given capacity ratio $\eta$; note that this averaging is again done purely empirically by selecting stationary samples
from the backlogs. 
Figure \ref{NB network approximation 2} (top-right) shows the conditional versions $\bz^{\rm RNN}_{\widehat{\theta}}(0;\eta)$ starting 
in a zero backlog.  
The remaining plots show the conditional versions for starting backlogs 
$B_1^{(\eta)}\in \{1000, 5000\}$; 
these additional plots have all identical $y$-scales and the horizontal dotted lines is at the expected number of reported claims level $\mu=\E[R_1]=1000$.
Recall the random variable
\begin{align*}
F^{(\eta)}_1 = R_1 \bigg(\mathds{1}_{\{B^{(\eta)}_1 > c^{(\eta)}\}}
+ \bigg(1- \frac{c^{(\eta)}-B^{(\eta)}_1}{R_1}\bigg)
\mathds{1}_{\{B^{(\eta)}_1 \le  c^{(\eta)} <B^{(\eta)}_1 +R_1\}}\bigg).
\end{align*}
Clearly, the first indicator is zero for $B^{(\eta)}_1=1000$ and $c^{(\eta)}= \eta \mu= \eta 1000>1000$.
Thus, only the second indicator contributes to $g_0(b,\eta)$ for $b<\eta \mu$. On the other hand, for any starting backlog $B^{(\eta)}_1 \ge \eta\mu$ we have a non-vanishing
first indicator, saying that $g_0(b,\eta) \ge \mu=\E[R_1]$ for $b\ge \eta \mu$.
This is how the initial values $g_0(b,\eta)$ of Figure \ref{NB network approximation 2} 
(bottom) 
are interpreted, and this is highlighted by the horizontal darkgray dotted line. For $j\ge 1$, we can then (simply) verify that it takes more time to run off the initial backlog $B^{(\eta)}_1$ for smaller capacity ratios $\eta>1$.

Fitting the RNN $\bz^{\rm RNN}_{\theta}$ becomes increasingly difficult the closer the capacity ratio $\eta>1$ approaches its limit $1$ of a non-explosive model
which requires $c^{(\eta)}>\mu =\E[R_1]$. 
The unconditional version Figure \ref{NB network approximation 2} (top-left) is more sensitive in $\eta$ than its conditional counterparts $g_0(b,\eta)$, because for the former we average the initial backlog $B^{(\eta)}_1$ over its stationary limit distribution, and the additional variability enters through the increasing volatility of this limiting distribution
for decreasing $\eta$, i.e., we have slower convergence to the stationary limit distribution of the backlog process the closer $\eta>1$ is to the critical value of one. 

\subsection{Approximation of unconditional backlog expectations}
In view of the unconditional cost allocation problem, see Section \ref{sec:costs}, the conditional version $g_j(b;\eta)$, given in \eqref{G expectations 2}, is not fully suitable
because it still needs averaging over the stationary limit distribution of the backlog, which can only be done numerically; this is exactly how Figure \ref{NB network approximation 2} (top-left) has been obtained.  
Here we would rather like to have a function 
\begin{align}\label{G expectations 2B}
\eta ~\mapsto ~g_j(\eta)=
\E \Big[g_j(B^{(\eta)}_1;\eta) \Big]=
\E \bigg[F_1^{(\eta)} \prod_{l=2}^{j+1} G^{(\eta)}_l \bigg]
\quad \text{for } j \ge 0.
\end{align} 
We fit a different second RNN to directly approximate $\eta \mapsto (g_j(\eta))_{j=0}^{T-1}$ rather than $(b,\eta) \mapsto (g_j(b,\eta))_{j=0}^{T-1}$. 
For receiving a suitable RNN approximation we need to ensure that the initial backlog $B^{(\eta)}_1$ is sampled from the stationary limit distribution. 
For this we start a process that has some burn-in until it generates stationary samples. We first study empirically the convergence rate to the stationary phase. 
Figure \ref{NB burn in} shows the results for two different capacity ratios $\eta=1.05, 1.10$ starting from a zero backlog. From these plots we conclude that we need to simulate roughly 1200 iteration steps to arrive at an empirical approximation to the stationary limit distribution. All following results are obtained by using this burn-in of 1200 iterations, and the subsequent samples are then taken as an empirical approximation to the stationary limit distribution. 

\begin{figure}[htb!]
\begin{center}
\begin{minipage}[t]{0.48\textwidth}
\begin{center}
\includegraphics[width=\textwidth]{./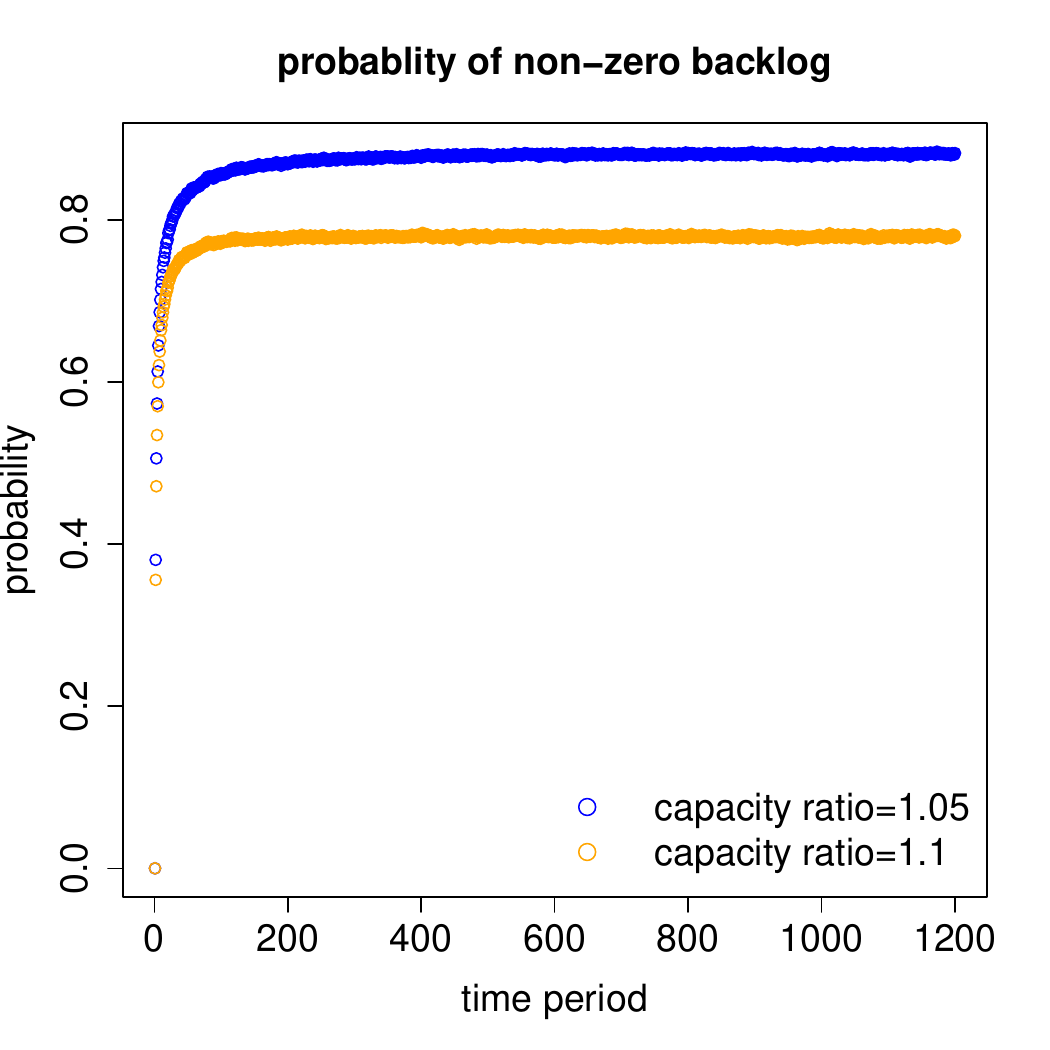}
\end{center}
\end{minipage}
\begin{minipage}[t]{0.48\textwidth}
\begin{center}
\includegraphics[width=\textwidth]{./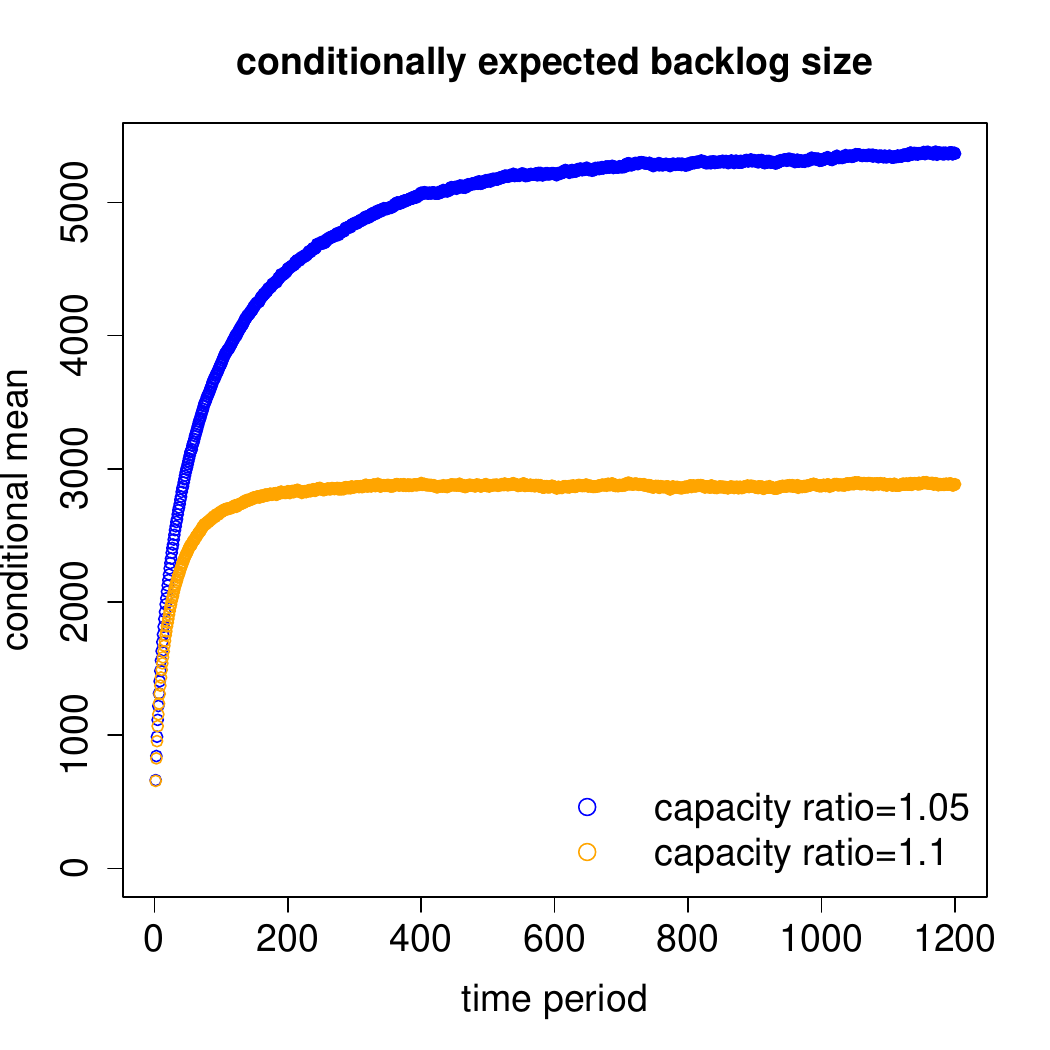}
\end{center}
\end{minipage}
\end{center}
\vspace{-.65cm}
\caption{Negative binomial model with capacity ratios $\eta=1.05, 1.10$:
analysis of burn-in to reach the stationary limit distribution
for an initial backlog of zero.}
\label{NB burn in}
\end{figure}

\begin{figure}[htb!]
\begin{center}
\begin{minipage}[t]{0.44\textwidth}
\begin{center}
\includegraphics[width=\textwidth]{./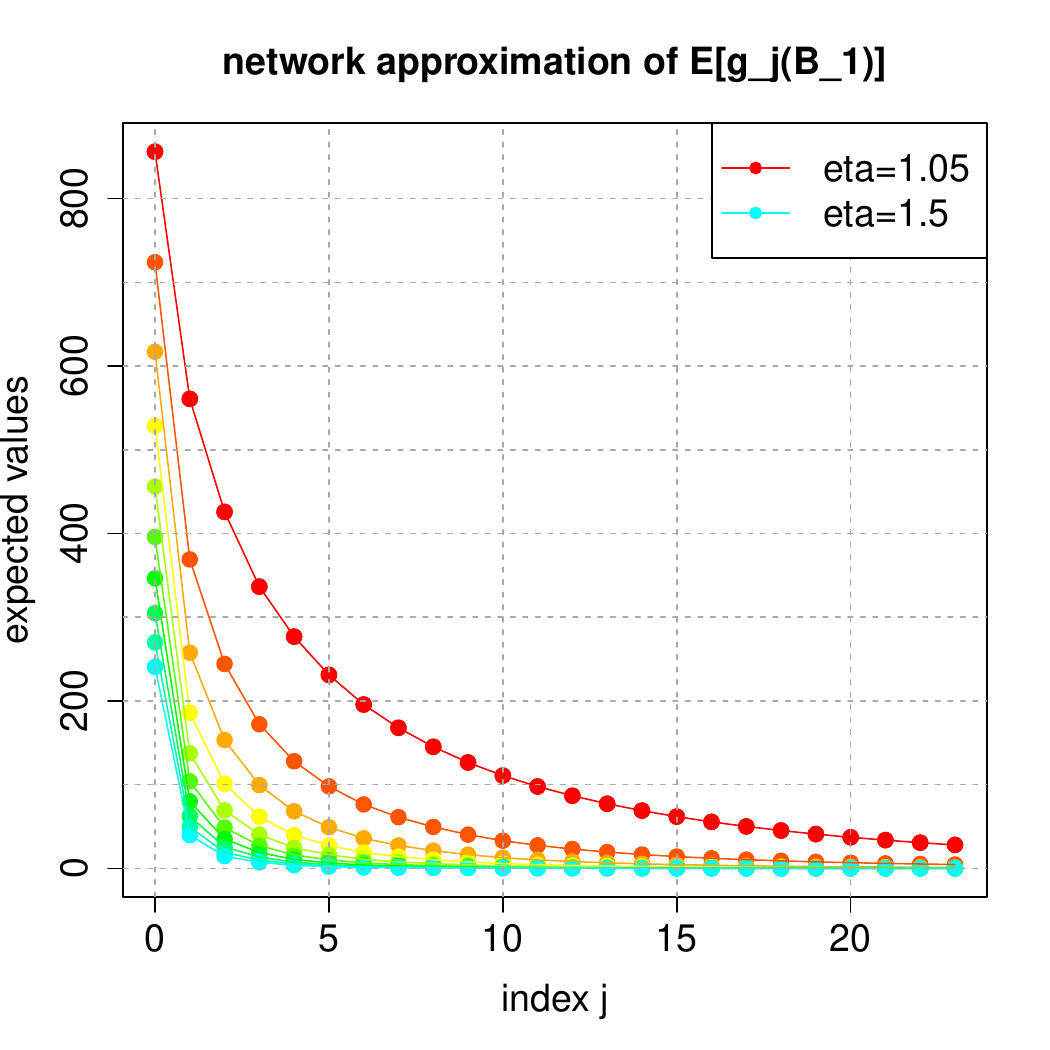}
\end{center}
\end{minipage}
\begin{minipage}[t]{0.44\textwidth}
\begin{center}
\includegraphics[width=\textwidth]{./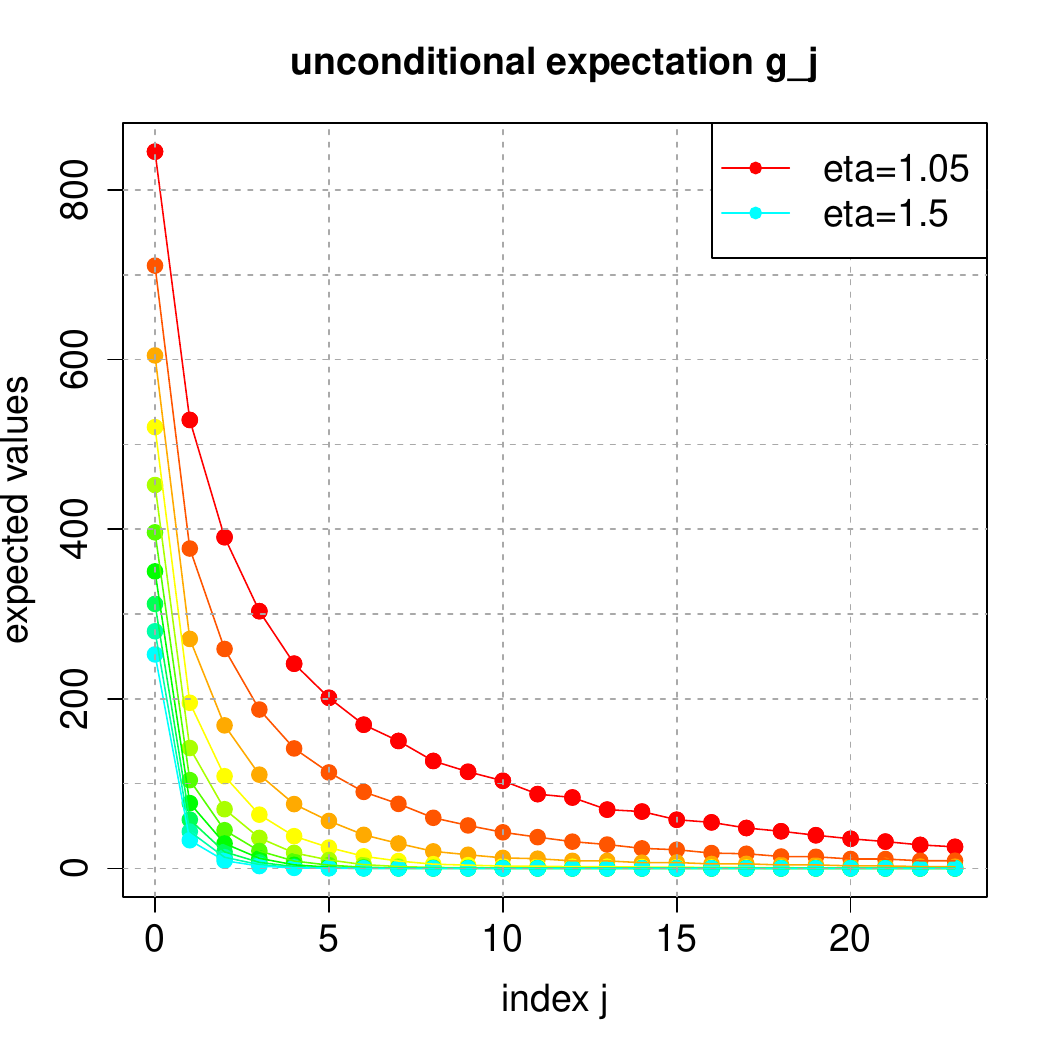}
\end{center}
\end{minipage}
\end{center}
\vspace{-.65cm}
\caption{RNN approximations of the means $(g_j(\eta))_{j \ge 0}$ for capacity ratios $\eta\in \{1.05, 1.10, \ldots, 1.50\}$: (lhs) this is identical to Figure \ref{NB network approximation 2} (top-left) received by empirically averaging over the conditional networks, and (rhs) direct fit of the unconditional mean using a single input network.}
\label{NB network approximation 2B}
\end{figure}

Based on this sampling and fitting strategy we fit a single input RRN to the function in \eqref{G expectations 2B}.
Figure \ref{NB network approximation 2B} (rhs) shows the results of this direct fitting of a RNN to the unconditional means $g_j(\eta)$ using stationary backlog time series for different
capacity ratios $\eta \in [1.05, 1.50]$. This is compared to the empirical average over the conditional means $g_j(B_1^{(\eta)};\eta)$ taken from Figure \ref{NB network approximation 2} (top-left). We see an excellent alignment of the results, telling that both networks have learned the same structure. The only (smaller) differences are visible for the smallest capacity ratio $\eta=1.05$. The issue here clearly is slow convergence to the stationary limiting distribution, which is also verified from Figure \ref{NB burn in}. For the unconditional cost optimization problem we use the unconditional network approximation to $(g_j(\eta))_{j \ge 0}$. 

\subsection{Approximation of conditional backlog expectations}

In order to numerically approximate the conditional backlog expectations in Corollary \ref{cor:cond_backlog_exp} we observe that we need to have time-delayed versions of the quantities $g_j(b;\eta)$ introduced in \eqref{G expectations 2}. 
We need to compute $h_j(b, m;\eta)$, where, for $j\geq 0$,  
\begin{align}\label{G expectations 3B}
h_j(b, 0;\eta)&:=\E \bigg[\prod_{l=1}^{j} G^{(\eta)}_{l}\,\bigg|\, B^{(\eta)}_1=b \bigg],
\\\label{G expectations 3A}
h_j(b, m;\eta)&:=\E \bigg[F_{m}^{(\eta)} \prod_{l=m+1}^{m+j} G^{(\eta)}_{l}\,\bigg|\, B^{(\eta)}_1=b \bigg], \quad m\geq 1. 
\end{align}
The functions $h_j(b, 0;\eta)$ are time-advanced versions of $g_j(b;\eta)$ by dropping $F_1^{(\eta)}$, the functions $h_j(b, m;\eta)$, $m \geq 1$, are time-delayed versions of $g_j(b;\eta)$. 
Using these we rewrite the expression for the conditional expectation in Corollary \ref{cor:cond_backlog_exp} as follows. 
For $k\geq 0$ such that $\tau-i+k\geq 0$, 
\begin{align}
\E\big[B^{(\eta)}_{i,\tau-i+k+1}\mid \calG^{(\eta)}_{\tau}\big]
&=\sum_{m=1}^{k}\mathds{1}_{\{i\leq \tau+m\leq i+ J\}}\,\frac{\mu_{\tau-i+m}}{\mu}\, h_{k-m}\Big(B^{(\eta)}_{\tau+1}, m;\eta\Big) \label{eq:cond_exp_mu_term} \\
&\quad+\mathds{1}_{\{i\leq \tau\leq i+ J\}}\,\frac{R_{i,\tau-i}}{R_{\tau}}\,F^{(\eta)}_{\tau}\,h_k\Big(B^{(\eta)}_{\tau+1}, 0;\eta\Big) \label{eq:cond_exp_F_term} \\
&\quad+\mathds{1}_{\{i\leq \tau\}}\,B^{(\eta)}_{i,\tau-i}\,G^{(\eta)}_{\tau}\,h_k\Big(B^{(\eta)}_{\tau+1}, 0;\eta\Big). \label{eq:cond_exp_G_term} 
\end{align} 
Since $h_j(b, m;\eta)$ lives on different scales
for $m=0$, formula \eqref{G expectations 3B}, and $m\ge 1$, formula \eqref{G expectations 3A},
we fit two different networks to these two cases. This is easier in training.
The case $m=0$ is then completely analogous to 
$g_j(b;\eta)$, see \eqref{gradient descent 1st}, and
the fitting results are given in Figure \ref{figure conditional 0} in the appendix.

Fitting $h_j(b, m;\eta)$, $m \ge 1$, is slightly more tricky because
the first term $F_1^{(\eta)}$ lives on a different scale compared
to $G_l^{(\eta)}$, $l\ge m+1$. This is similar to $g_j(b;\eta)$.
Note that we can rewrite \eqref{G expectations 3A} as follows 
\begin{align*}
h_j(b, m;\eta)&=\E \bigg[F_{m}^{(\eta)} \prod_{l=m+1}^{m+j} G^{(\eta)}_{l}\,\bigg|\, B^{(\eta)}_1=b \bigg]\\
&=\E \bigg[\E\bigg[F_{m}^{(\eta)} \prod_{l=m+1}^{m+j} G^{(\eta)}_{l}\,\bigg|\, B^{(\eta)}_m, B^{(\eta)}_1=b \bigg] \,\bigg|\, B^{(\eta)}_1=b \bigg]\\
&=\E \bigg[g_j\left(B^{(\eta)}_m;\eta\right)\,\bigg|\, B^{(\eta)}_1=b \bigg]. 
\end{align*}
This shows that we can employ a similar approximation and fitting strategy as in \eqref{gradient descent 1st} but we need to time-delay by $m-1$ periods.
This requires an additional input $m$ to the RNN $\bz^{\rm RNN}_{\theta}$, and then we can fit
\begin{align}\label{gradient descent 2nd}
\widehat{\theta} ~ \in ~
\underset{\theta}{\arg\min}\,
\frac{1}{n} \sum_{k=1}^n
\sum_{j=0}^{T-1}
\bigg(F_1^{k, m} \prod_{l=2}^{j+1} G_l^{k, m} - \bz^{\rm RNN}_{\theta}(B_1^k,\eta^k,m)_j \bigg)^2,
\end{align}
where the observations $(F_1^{k, m}, G^{k, m}_2, \ldots, G_{T-1}^{k,m })$, $1\le k \le n$, are received by first simulating the $m-1$ periods delayed starting point  $B_m^k$ from $B_1^k$, set new starting value $B_1^{k,m}:=B_m^k$, and then proceed as in \eqref{gradient descent 1st}.
This provides us with a fitted neural network $(b,\eta,m)\mapsto\bz^{\rm RNN}_{\widehat{\theta}}(b,\eta,m)$ that approximates
$h_j(b, m;\eta)$, $m \ge 1$, $j\ge 0$, $b\ge 0$ and $\eta \in (1.05, 1.50)$.
The results are shown in Figures \ref{figure conditional 1}-\ref{figure conditional 3} in the appendix for the three different starting values $b\in \{0, 1000, 5000\}$.

\section{Cost optimization}\label{sec:costs_negbin}
We now have prepared all the necessary numerical tools to study the optimal
capacity ratio $\eta>1$ for obtaining minimal costs. We distinguish
the unconditional and the conditional cases. The latter considers a
situation where we want to optimally plan the capacity under a given
starting backlog $B_1^{(\eta)}$, and the former unconditional
case is a global consideration of long-term optimal planing to receive minimal
costs. In the long run, the conditional version will converge to the unconditional one, regardless of the specific initial backlog $B_1^{(\eta)}$.

\subsection{Unconditional cost optimization}\label{sec:cost_optim_negbin}

Recall the expressions \eqref{eq:linear_cost_case} and \eqref{eq:inflating_cost_case} for expected combined delay-adjusted claim costs and capacity costs. Here we study these expected costs as functions of the capacity ratio $\eta$. We write 
\begin{align}\label{linear case}
\mu_i^{(\ell)}(\eta) := \kappa_g \,\mu + \kappa_b \, \E[B^{(\eta)}] + \kappa_c \,(c^{(\eta)}-\mu)
\end{align}
for the linear cost model, and 
\begin{align}\label{inflating cost case}
\mu_i^{(\iota)}(\eta): = \kappa_g \sum_{j\ge 0} \lambda_b^{j}\,
\E\big[B^{(\eta)}_{i,j}+R_{i,j}-B^{(\eta)}_{i,j+1}\big] + \kappa_c \,\big(c^{(\eta)}-\mu\big)
\end{align}
for the model with non-linear delay-inflated costs. 
We study these total costs as a function of the capacity ratio $\eta>1$ providing $c^{(\eta)}=\eta \mu$ and impacting the backlog via recursion \eqref{eq:backlog_recursion}. Using the network approximation to $(g_j(\eta))_{j\ge 0}$ we can explicitly compute these costs using relation (see Theorem \ref{thm:uncond_backlog_exp}) 
\begin{align}\label{expected backlogs per accident year}
\E\big[B_{i,j}^{(\eta)}\big]=\sum_{k=1}^{j \wedge (J+1)}\frac{\mu_{k-1}}{\mu}\,\E\bigg[F^{(\eta)}_{i+k-1}\prod_{l=k}^{j-1}G^{(\eta)}_{i+l}\bigg]
=\sum_{k=1}^{j \wedge (J+1)}\frac{\mu_{k-1}}{\mu}\, g_{j-k}(\eta),
\end{align}
where an empty sum is equal to $0$ and an empty product is equal to $1$. 
 
\begin{figure}[htb!]
\begin{center}
\begin{minipage}[t]{0.44\textwidth}
\begin{center}
\includegraphics[width=\textwidth]{./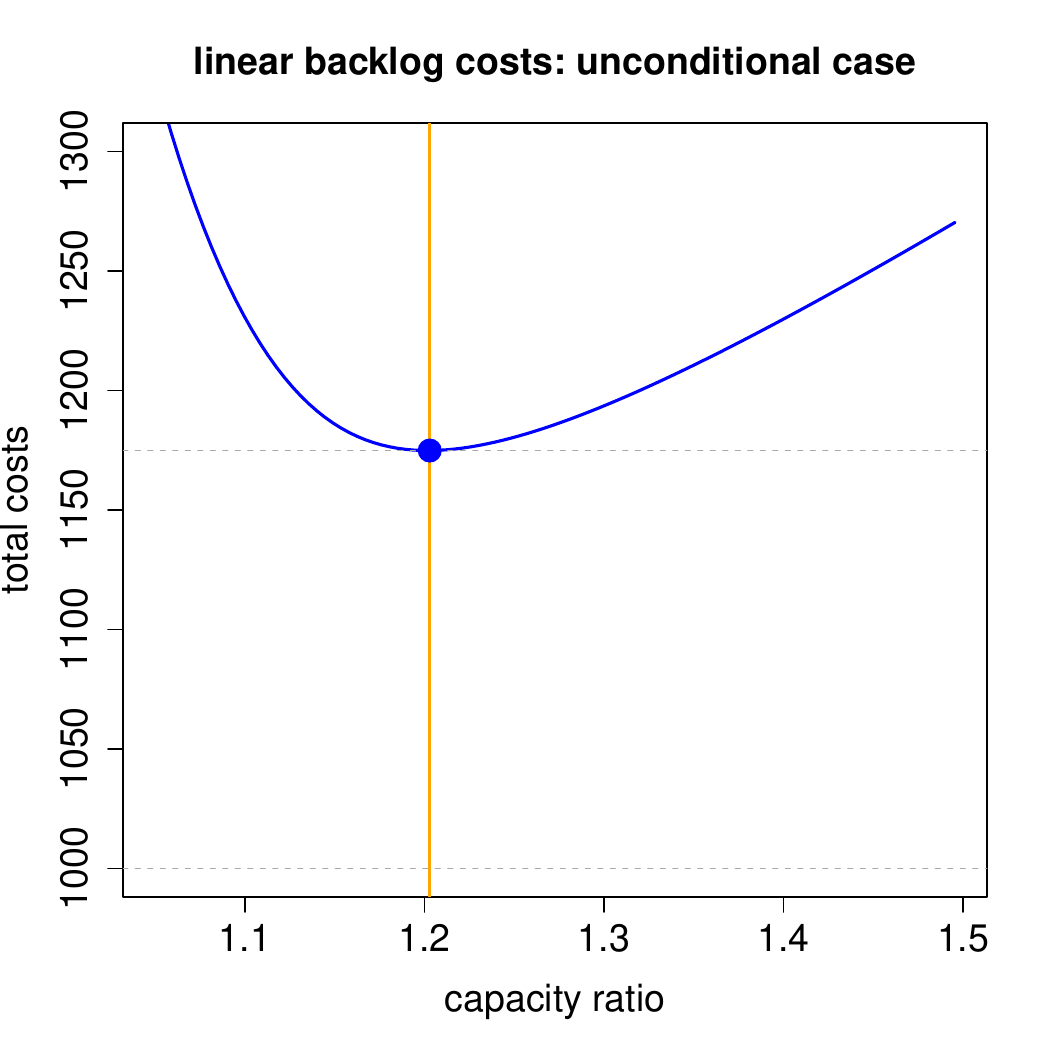}
\end{center}
\end{minipage}
\begin{minipage}[t]{0.44\textwidth}
\begin{center}
\includegraphics[width=\textwidth]{./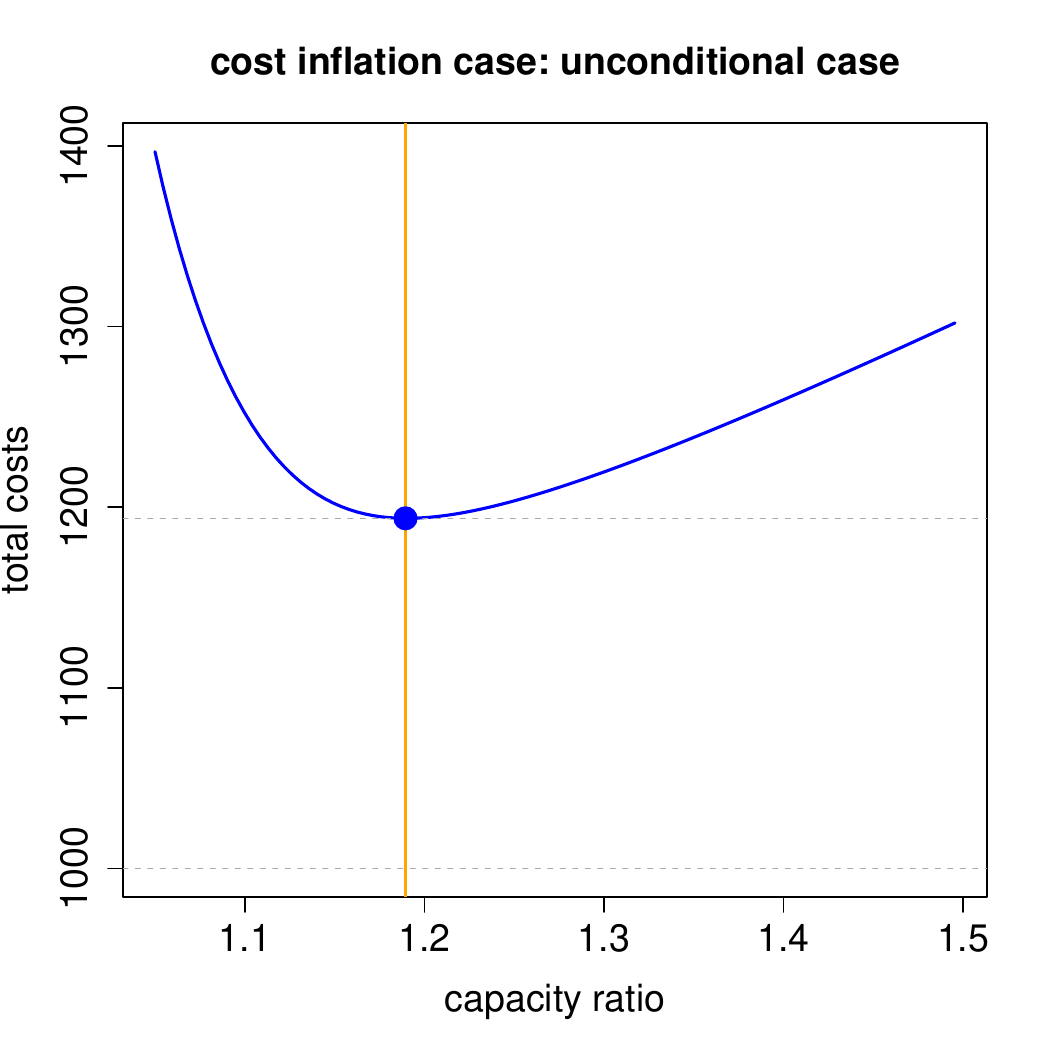}
\end{center}
\end{minipage}
\end{center}
\vspace{-.65cm}
\caption{Optimal capacity ratio: (lhs) linear backlog costs case \eqref{linear case}
and (rhs) inflating backlog costs case \eqref{inflating cost case}; the vertical
orange line shows the optimal capacity ratios $\eta^\ast$ and
the lower horizontal dotted darkgray line the ground-up costs $\kappa_g \mu$.}
\label{linear cost example}
\end{figure}

Figure \ref{linear cost example} (lhs) shows the optimal capacity ratio in the linear capacity cost case \eqref{linear case} with $\kappa_g=1$, which gives ground-up costs
of $\kappa_g \mu=1000$. To this we add backlog costs with $\kappa_b=0.075$ and excess capacity costs with $\kappa_c=0.5$. The plot shows the linear cost
case $\eta \mapsto \mu_i^{(\ell)}(\eta)$ as a function of the capacity ratio $\eta>1$. In this model (and parametrization) the optimal (total cost minimizing) ratio is $\eta^\ast=1.203$. Thus, in this case we have ground-up costs of $1000$, and the optimal capacity ratio $\eta^\ast=1.203$ adds another $175$ which is related to backlog costs. This results in the upper horizontal darkgray dotted line at $1175$ describing the total expected costs $\mu_i^{(\ell)}(\eta^\ast)$ allocated to occurrence period $i$.

Figure \ref{linear cost example} (rhs) shows the inflating backlog costs case \eqref{inflating cost case}, $\eta \mapsto \mu_i^{(\iota)}(\eta)$, where we inflate claims costs by an inflation rate of 5\% resulting in $\lambda_b=1.05$, and the remaining parameters are selected as above. In that case the optimal capacity ratio under our parametrization is $\eta^\ast=1.190$.

\begin{figure}[htb!]
\begin{center}
\begin{minipage}[t]{0.44\textwidth}
\begin{center}
\includegraphics[width=\textwidth]{./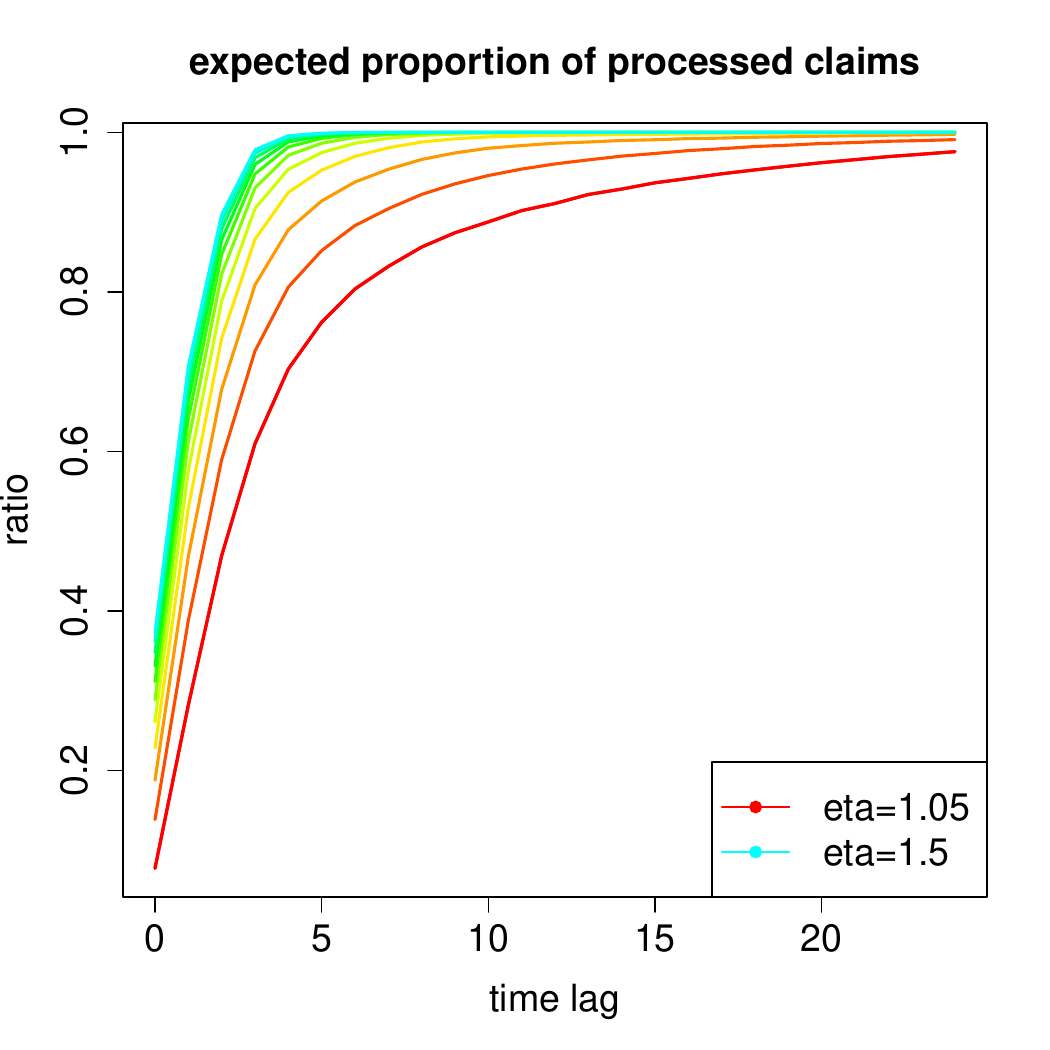}
\end{center}
\end{minipage}
\end{center}
\vspace{-.65cm}
\caption{Expected proportion of processed claims for a fixed
occurrence period for different capacity ratios $\eta \in \{1.05, \ldots, 1.50\}$.}
\label{process ratio}
\end{figure}

The explicit formula \eqref{expected backlogs per accident year}
for the expected backlogs $\E[B_{i,j}^{(\eta)}]$ for a fixed
occurrence period $i$ for different delays $j$ allows us
to compute the expected processing pattern 
$j\mapsto \E[P^{(\eta)}_{i,j}]$ of a fixed occurrence period, see also 
Corollary \ref{expected processing corollary}. Figure
\ref{process ratio} shows the cumulative proportion of expected
processed claims for different capacity ratios $\eta \in \{1.05, \ldots, 1.50\}$.
For the higher capacity ratios all claims are likely processed within 5 periods,
whereas for lower capacity ratios it may take up to 24 periods on average. This
precisely explains the cost differences implied by the
inflation part in the delay-inflated cost case \eqref{inflating cost case}.

\subsection{Conditional cost optimization}

Cost optimization in the conditional case is more involved because we cannot use convenient consequences of stationarity. 
Here we assume that customers pay a fixed premium and the cost optimization aims to maximize the profit on the business. 
We consider a going-concern view, assuming that the business continues as planned during a given planning horizon with a constant capacity that we aim to select optimally. 
We consider all costs in the time window $(\tau, \tau + T]$ without running off the open claims at the end of this time window. For $T\to \infty$, the results converge to the unconditional case because the impact of the starting configuration at time $\tau$ vanishes asymptotically as $T\to \infty$.

The optimization problem requires us to study $\E[B^{(\eta)}_{i,\tau-i+k+1}\mid \calG^{(\eta)}_{\tau}]$, where 
\begin{align*}
\calG^{(\eta)}_{\tau}=\sigma\big(R_{i,j},B^{(\eta)}_{i,j}:i+j\leq \tau, j\geq 0\big).
\end{align*}
Hence, we know all ingoing backlogs $(B^{(\eta)}_{i,j})_{i+j\le \tau}$ into 
period $\tau$ and all reported claims $(R_{i,j})_{i+j\le \tau}$ 
in period $\tau$. Having a constant capacity $C_t=c^{(\eta)}$, we 
therefore also know the aggregated ingoing backlog $B^{(\eta)}_{\tau +1}$
one period later. 
However, if $B^{(\eta)}_{\tau +1}>0$, then information $\calG^{(\eta)}_{\tau}$
does not provide its partition to $B^{(\eta)}_{i,\tau-i+1}$, i.e., the
individual origin periods. However, we will see
that if we aggregate over the occurrence periods $i$, this split is not necessary.

We consider the cases of linear backlog costs \eqref{linear case}. For the moment we drop the costs for the excess capacity $\kappa_c (c^{(\eta)}-\mu)$, since we do not allocate these costs to individual occurrence periods in the going-concern view, we can just add these costs at the end for every period up to the planning horizon $T$. In the linear cost case we need to study 
\begin{align}\label{linear case 2}
\kappa_g \,
\sum_{k }\E\big[R_{i,\tau-i+k+1} \mid \calG^{(\eta)}_{\tau}\big] 
+  \kappa_b \,
\sum_{k}\E\big[B^{(\eta)}_{i,\tau-i+k+1} \mid \calG^{(\eta)}_{\tau}\big],
\end{align}
the summation index $k$ is going to be discussed below.

For the subsequent considerations one should have the following matrix in mind. The rows in this matrix reflect the different occurrence periods from $\tau-J$ (first row) to $\tau + T$ in (last row). These are all occurrence periods that contribute to payments in the time window $(\tau, \tau + T]$. Thus, we have $T+J+1$ occurrence periods in this matrix. The columns in this matrix correspond to calendar periods. The first column reflects the reported claims in calendar period $\tau$, the second column the expected number of reported claims in calendar period $\tau+1$, and the last column the expected number of reported claims in calendar period $\tau +T$.
Thus, this matrix has $T+1$ columns. 
\begin{align}\label{runoff matrix}
M:=
\begin{pmatrix}
R_{\tau-J,J} & 0 & 0&  0&\cdots &  0&0 \\
R_{\tau-J+1,J-1} & \mu_J & 0&0& \cdots &  0&0 \\
R_{\tau-J+2,J-2} & \mu_{J-1}&\mu_J& 0& \cdots &  0&0 \\
\vdots & \vdots & \vdots &\vdots& \ddots & \vdots &\vdots \\
R_{\tau,0} & \mu_{1}&\mu_2&  \mu_3&\cdots  & 0& 0 \\
0 & \mu_{0}&\mu_1&  \mu_2&\cdots & 0& 0 \\
0 & 0&\mu_0& \mu_1& \cdots &  0&0 \\
0 & 0&0& \mu_0& \cdots &  0&0 \\
\vdots & \vdots & \vdots &\vdots&\ddots& \vdots& \vdots \\
0 & 0&0&  0&\cdots & \mu_0& \mu_1 \\
0 & 0&0&  0&\cdots & 0& \mu_0 \\
\end{pmatrix} 
\end{align}
One should notice that calendar period $\tau$ corresponding to the first column aggregates to the total number of reported claims $R_\tau$ in calendar period $\tau$, all other columns aggregate to $\mu$ being the expected number of claims of futures periods in this stationary model.

The reported claims terms in \eqref{linear case 2} are comparably simple because future reportings are independent of $\calG^{(\eta)}_{\tau}$. With finite planning horizon we have a total number of expected reported claims, this is the sum over columns 2 to $T+1$ in matrix \eqref{runoff matrix}, 
\begin{align*}
\sum_{k=0}^{T-1}\sum_{i=\tau+k+1-J}^{\tau+k+1}\E\big[ R_{i,\tau-i+k+1}\mid \calG^{(\eta)}_{\tau}\big]
=\sum_{c=2}^{T+1}\sum_{r=c}^{c+J}M_{r,c}=T\mu. 
\end{align*}
Note that occurrence periods $i\le \tau-J$ are fully reported at time $\tau$, therefore, they do not contribute to the above sum. We add the planning horizon $T$ and we discard all costs after this time window. 

Next we focus on the backlogs in \eqref{linear case 2}. For this we come back to the three terms \eqref{eq:cond_exp_mu_term}-\eqref{eq:cond_exp_G_term} which need to be summed over $k$ for a fixed occurrence period $i$.
We start with the term \eqref{eq:cond_exp_G_term}. This requires $i\le \tau$, otherwise this occurrence period cannot have any ingoing backlog. Their total contribution across all occurrence periods $i$ is then
\begin{align*}
\sum_{i \le \tau}
B^{(\eta)}_{i,\tau-i}\, \sum_{k=0}^{T-1}
G^{(\eta)}_{\tau}\,
h_k\Big(B^{(\eta)}_{\tau+1}, 0;\eta\Big)
=B^{(\eta)}_{\tau}\, \sum_{k=0}^{T-1}
G^{(\eta)}_{\tau}\,
h_k\Big(B^{(\eta)}_{\tau+1}, 0;\eta\Big),
\end{align*}
we add the planning horizon $T$ and we discard all costs after this time window. 

Next, we focus on the term \eqref{eq:cond_exp_F_term} newly reported claims in period $\tau$. This requires occurrence periods $i\leq \tau\leq i+ J$, thus, we have total expected backlogs from newly reported claims, this is the sum over the first column in matrix \eqref{runoff matrix},
\begin{align*}
\sum_{i =\tau-J}^\tau\frac{R_{i,\tau-i}}{R_{\tau}}\sum_{k=0}^{T-1}
F^{(\eta)}_{\tau}\,h_k\left(B^{(\eta)}_{\tau+1}, 0;\eta\right)
=\sum_{k=0}^{T-1}
F^{(\eta)}_{\tau}\,h_k\left(B^{(\eta)}_{\tau+1}, 0;\eta\right),
\end{align*}
this is again truncated at the planning horizon $T$.

Finally, we focus on the term \eqref{eq:cond_exp_mu_term} for future reportings for occurrence periods $ i>\tau-J$. To verify the following computation one should again note that we  aggregate over the columns 2 to $T+1$ in matrix \eqref{runoff matrix}, and each of these columns belongs to a different delay $m \in \{1,\ldots, T \}$,
\begin{align*}
&\sum_{i = \tau-J +1}^{\tau+T}
 \sum_{k=1}^{T}
\sum_{m=1}^{k}\mathds{1}_{\{i\leq \tau+m\leq i+ J\}}\,\frac{\mu_{\tau-i+m}}{\mu}\, h_{k-m}\left(B^{(\eta)}_{\tau+1}, m;\eta\right)
\\
&\quad=\sum_{m=1}^{T}
\sum_{i = \tau-J +1}^{\tau+T}
\mathds{1}_{\{\tau-J+m\leq i\leq \tau+m\}}\,\frac{\mu_{\tau+m-i}}{\mu}
 \sum_{k=m}^{T} h_{k-m}\left(B^{(\eta)}_{\tau+1}, m;\eta\right)
\\
&\quad=\sum_{m=1}^{T}
 \sum_{k=m}^{T} h_{k-m}\left(B^{(\eta)}_{\tau+1}, m;\eta\right)
=\sum_{m=1}^{T}
 \sum_{k=0}^{T-m} h_{k}\left(B^{(\eta)}_{\tau+1}, m;\eta\right).
\end{align*}
Collecting all terms and aggregating over all occurrence
periods $i \in \{\tau-J, \ldots, T\}$ that contribute to the costs in the time window
$(\tau, \tau+T]$, see \eqref{runoff matrix},  the linear costs case 
\eqref{linear case 2} requires
studying the aggregate costs
\begin{align}
T\mu^{(\ell)}(\eta, T, \calG^{(\eta)}_{\tau})
&:= \kappa_g \,T\,\mu
+  \kappa_b \,
B^{(\eta)}_{\tau} \sum_{k=0}^{T-1}
G^{(\eta)}_{\tau}\,
h_k\left(B^{(\eta)}_{\tau+1}, 0;\eta\right) \label{general linear cost formula} \\
&\quad+\,  \kappa_b \, \sum_{k=0}^{T-1}
F^{(\eta)}_{\tau}\,h_k\left(B^{(\eta)}_{\tau+1}, 0;\eta\right) \nonumber  \\
&\quad+\,  \kappa_b \,\sum_{m=1}^{T}
 \sum_{k=0}^{T-m} h_{k}\left(B^{(\eta)}_{\tau+1}, m;\eta\right)
+\kappa_c \, T\left(c^{(\eta)}-\mu\right). \nonumber
\end{align}
Thus, we have a fixed past history $\calG^{(\eta)}_{\tau}$, we have a fixed planing horizon $T\ge 1$, and we try to minimize to costs in time window $(\tau, \tau+T]$ in the cost capacity ratio $\eta$, for given cost parameters $\kappa_g, \kappa_b, \kappa_c>0$. To keep things simple, we assume that we start from a zero backlog $B^{(\eta)}_{\tau }=0$ at time $\tau$. This allows us to drop the backlog term in \eqref{general linear cost formula}, and we receive a next backlog at time $\tau+1$
\begin{align}\label{the next backlog}
B^{(\eta)}_{\tau+1} = \max(R_\tau-C^{(\eta)},0)
= \max\bigg(\sum_{i \le \tau}R_{i,\tau-i}-C^{(\eta)},0\bigg).
\end{align}
Thus, the zero initial backlog case $B^{(\eta)}_{\tau }=0$ results in studying 
\begin{align*}
T\mu^{(\ell)}(\eta, T, \calG^{(\eta)}_{\tau})
&= T\Big(\kappa_c\,  (c^{(\eta)}-\mu) + \kappa_g \,\mu\Big) 
+\, \kappa_b \, \sum_{k=0}^{T-1}
F^{(\eta)}_{\tau}\,h_k\Big(B^{(\eta)}_{\tau+1}, 0;\eta\Big)
\\
&\quad+\,  \kappa_b \,\sum_{m=1}^{T}
 \sum_{k=0}^{T-m} h_{k}\Big(B^{(\eta)}_{\tau+1}, m;\eta\Big)
.
\end{align*}
This is fairly simple now. It requires that starting from a zero backlog $B^{(\eta)}_{\tau }=0$ at time $\tau$, we need to simulate the number of reported claims $R_\tau$ in period $\tau$, from which we can compute the new backlog $B^{(\eta)}_{\tau+1}$ at time $\tau+1$ as well as $F^{(\eta)}_{\tau}$. In the following analysis we have $R_\tau=1310$ from which we can compute the next backlog \eqref{the next backlog} for the different capacity ratios $\eta >1$.

\begin{figure}[htb!]
\begin{center}
\begin{minipage}[t]{0.44\textwidth}
\begin{center}
\includegraphics[width=\textwidth]{./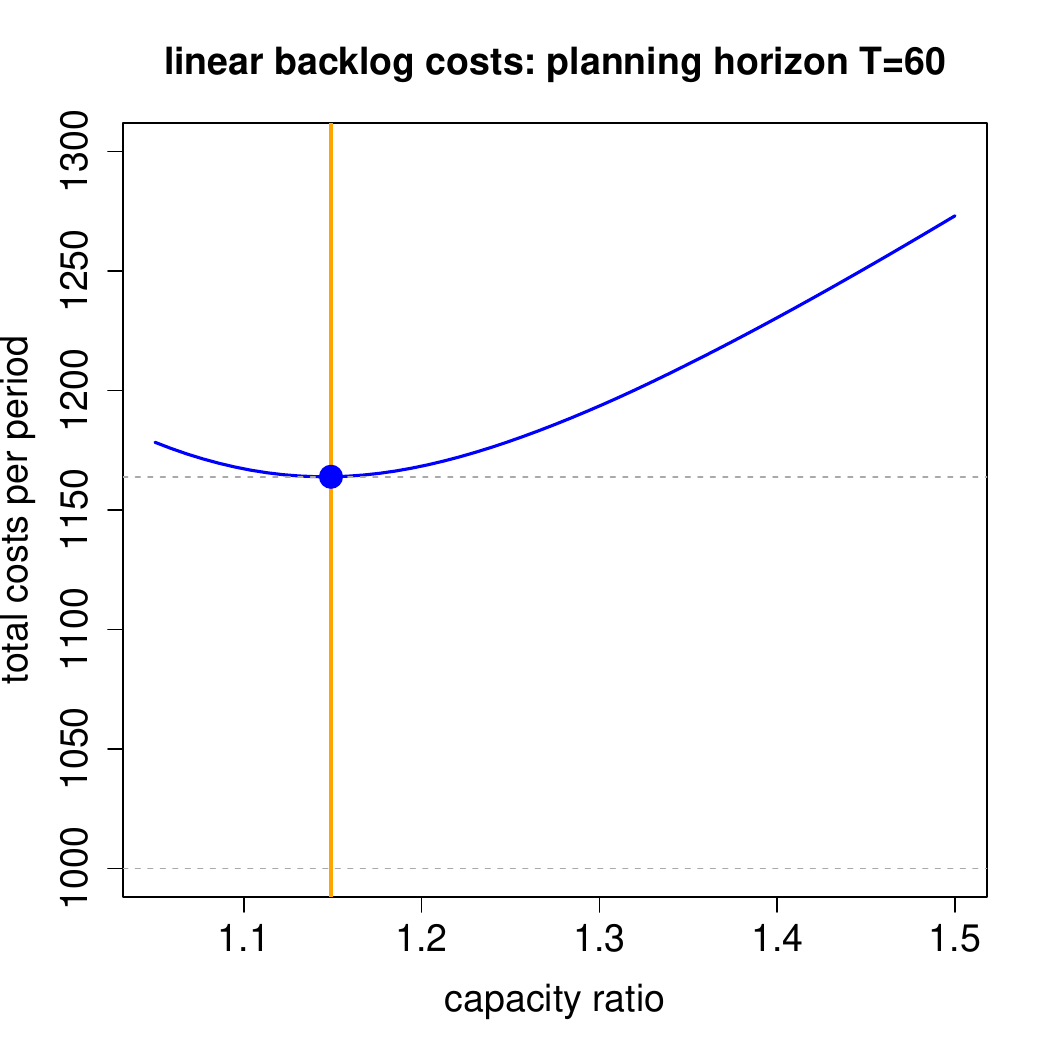}
\end{center}
\end{minipage}
\begin{minipage}[t]{0.44\textwidth}
\begin{center}
\includegraphics[width=\textwidth]{./Plots/LinearCapacity.pdf}
\end{center}
\end{minipage}
\end{center}
\vspace{-.65cm}
\caption{Optimal capacity ratios in the linear backlog costs case, conditional version, with a zero starting backlog $B^{(\eta)}_{\tau }=0$ at time $\tau$ showing:
(lhs) planning horizon $T=60$, (rhs) unconditional case taken from Figure \ref{linear cost example} (lhs), this corresponds to $T=\infty$.}
\label{linear cost example conditional}
\end{figure}

Figure \ref{linear cost example conditional} shows the resulting optimal capacity ratios in the conditional case if we start with a zero backlog $B^{(\eta)}_{\tau }=0$ at time $\tau$
for planning horizon $T=60$, 
and it is compared to the unconditional case given in Figure \ref{linear cost example} (lhs) using the same cost parameters $\kappa_g$, $\kappa_b$ and $\kappa_c$; note that the unconditional case corresponds to the infinite planning horizon. 

\begin{table}[htb]
\begin{center}
{\small
\begin{tabular}{|c|cc|}
\hline
planning horizon $T$ & optimal $\eta^\ast$ & costs $\mu^{(\ell)}(\eta^\ast, T, \calG^{(\eta^\ast)}_{\tau})$\\\hline
36 & 1.068 & 1152\\ 
60 & 1.149 & 1164\\
120 & 1.176 & 1172\\
$\infty$ & 1.203 & 1175\\
\hline
\end{tabular}}
\end{center}
\caption{Optimal capacity ratios $\eta^\ast$ for different
planning horizons $T$ and result total costs linear 
backlog cost case.}
\label{results different planning horizons}
\end{table}

Table \ref{results different planning horizons} gives the numbers for planning horizons including those in Figure \ref{linear cost example conditional}. 
With a planning horizon of 120 (monthly) periods (or 10 years) we are rather close to the unconditional case, having average optimal costs per period of 1172. For shorter planning horizons these costs are lower, this is because we start with a zero backlog $B^{(\eta)}_{\tau }=0$ at time $\tau$, thus, this is a more favorable starting position than the average over the stationary limit distribution (unconditional case). This results for a planning horizon of $T=36$ (montly) periods (or 3 years) to optimal average costs of 1152. 

Similar results with decreasing instead of increasing average costs are obtained if we start
from a large initial backlog $B^{(\eta)}_{\tau }$ at time $\tau$. 
This can be interpreted as a situation where the backlog has gone
out of control, e.g., due to a catastrophic claims event, and the management
is concerned about clearing the backlog at minimal mid-term costs.
We refrain from giving an explicit numerical example.

\section{Summary}\label{sec:summary}
We formalized the question of choosing optimal processing capacities for claims handling. On the one hand, the claims handling capacity needs to be limited because any insurance company has only finite financial resources available. 
On the other hand, the capacity should be sufficiently large because long processing delays (and large backlogs) also generate various costs. We studied this trade-off aiming at minimizing claims and claims processing costs. This problem has several features from queueing theory, but there are also some significant differences because claims are labeled by occurrence periods and arising expenses need to be allocated to occurrence periods to have a consistent and appropriate cost analysis of an insurance portfolio.

We formalized these questions and we solved a variant of this optimal cost and capacity problem. This variant describes a specific mechanism to work off a backlog, and it considers a specific super-imposed cost inflation factor for late claims processing. In this regard, there are many alternative ways to model these backlog cost items. Our choice is a realistic one that is still fairly well tractable, and the final intractable step was solved by a recurrent neural network approximation. 
This paper appears to be the first that considers this claims processing costs problem. Alternative ways to model delay-adjusted costs and other consequences of backlogs due to capacity constraints with shared capacity could be fruitful to explore. We invite interested scholars to contribute to this interesting problem.

\bigskip

{\bf Acknowledgement.}
Parts of this research was carried out while Mario W\"uthrich was a KAW guest professor at Stockholm University. 
Filip Lindskog acknowledges financial support from the Swedish Research Council, Project 2020-05065.

\newpage

\appendix

\section{Figures for neural network approximations}

\begin{figure}[htb!]
\begin{center}
\begin{minipage}[t]{0.4\textwidth}
\begin{center}
\includegraphics[width=\textwidth]{./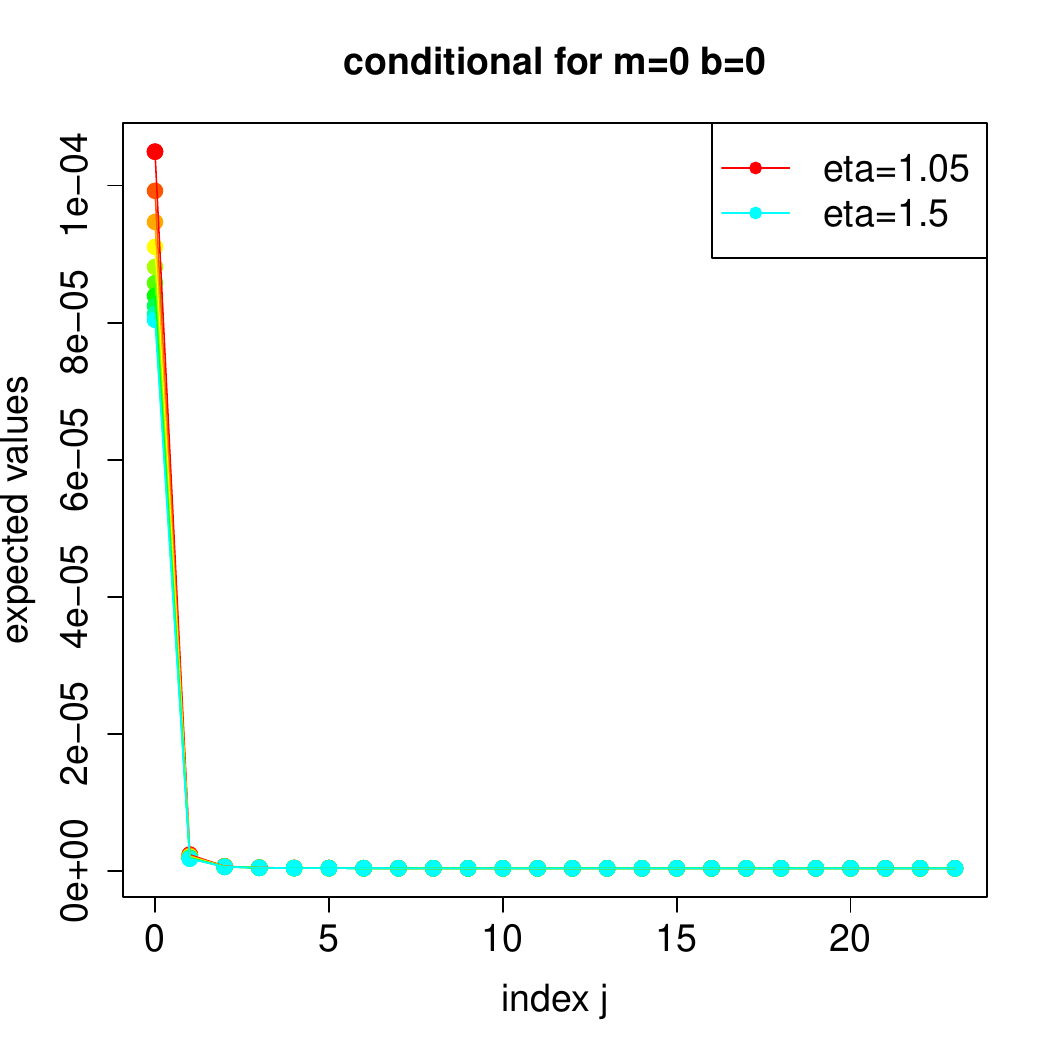}
\end{center}
\end{minipage}
\begin{minipage}[t]{0.4\textwidth}
\begin{center}
\includegraphics[width=\textwidth]{./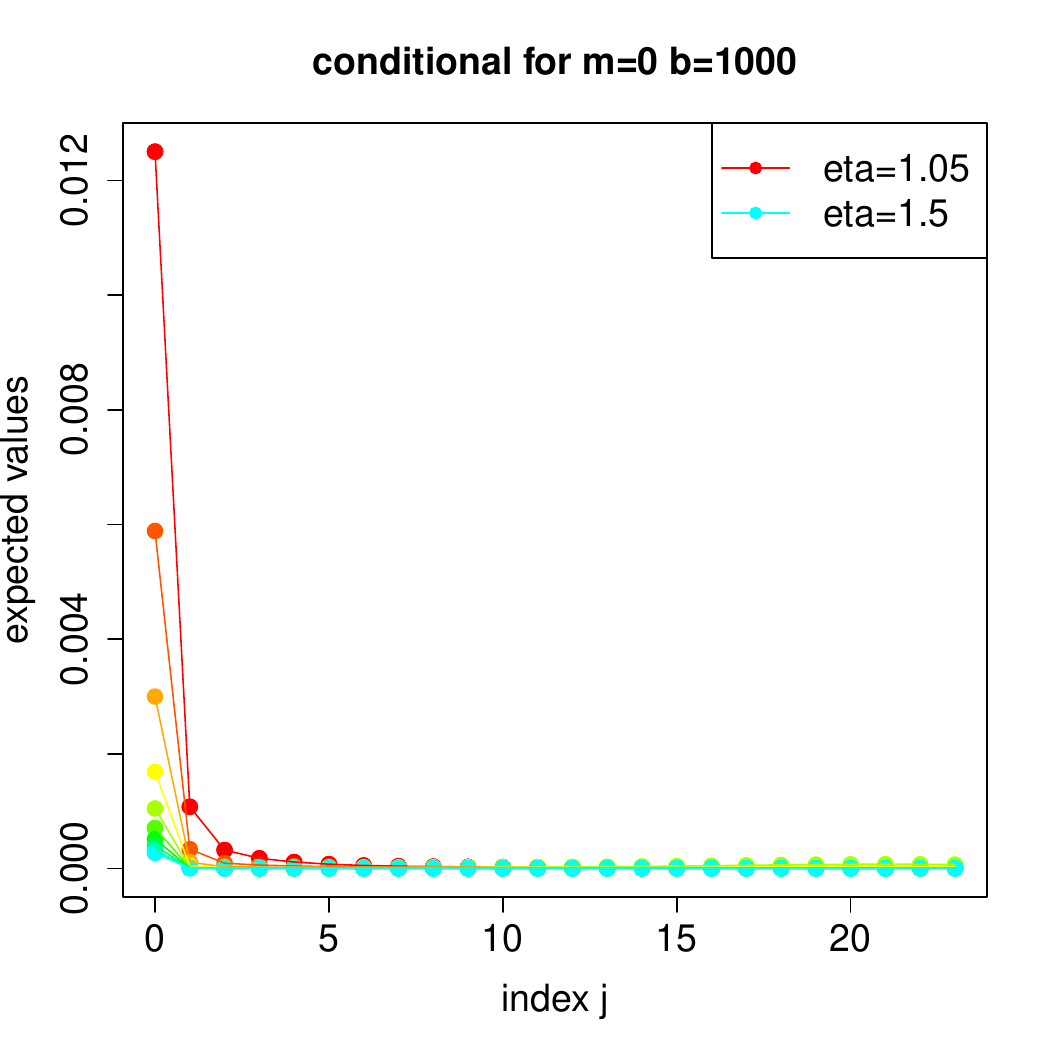}
\end{center}
\end{minipage}
\begin{minipage}[t]{0.4\textwidth}
\begin{center}
\includegraphics[width=\textwidth]{./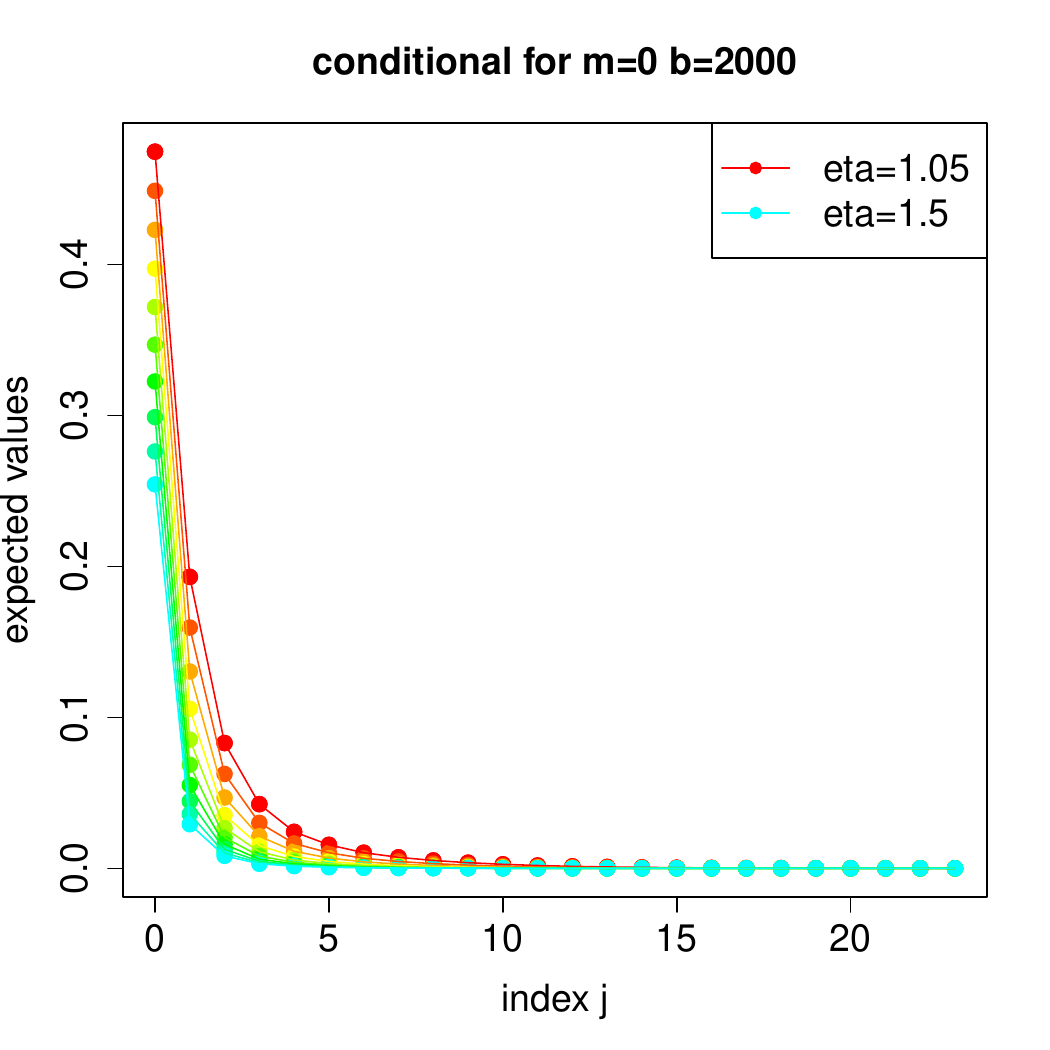}
\end{center}
\end{minipage}
\begin{minipage}[t]{0.4\textwidth}
\begin{center}
\includegraphics[width=\textwidth]{./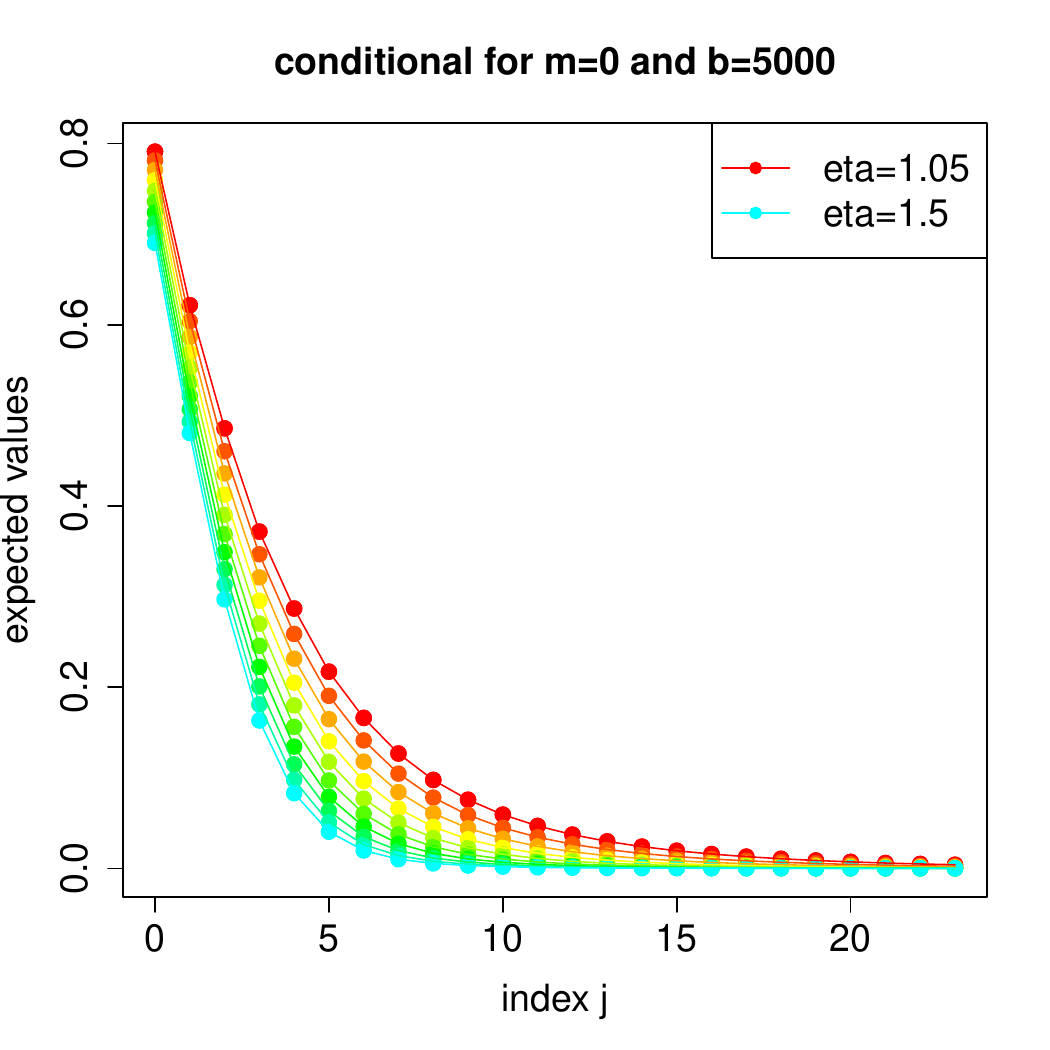}
\end{center}
\end{minipage}
\end{center}
\vspace{-.65cm}
\caption{RNN approximations of the conditional means
$h_j(b, 0;\eta)$, $j\ge 1$, for capacity
ratios $\eta\in \{1.05, 1.10, \ldots, 1.50\}$ (in different colors),
$b\in\{0,1000, 2000, 5000\}$ (different plots) and $m=0$.}
\label{figure conditional 0}
\end{figure}

\begin{figure}[htb!]
\begin{center}
\begin{minipage}[t]{0.4\textwidth}
\begin{center}
\includegraphics[width=\textwidth]{./Plots/NetworkGELU_NB_meanFG_eta0.pdf}
\end{center}
\end{minipage}
\begin{minipage}[t]{0.4\textwidth}
\begin{center}
\includegraphics[width=\textwidth]{./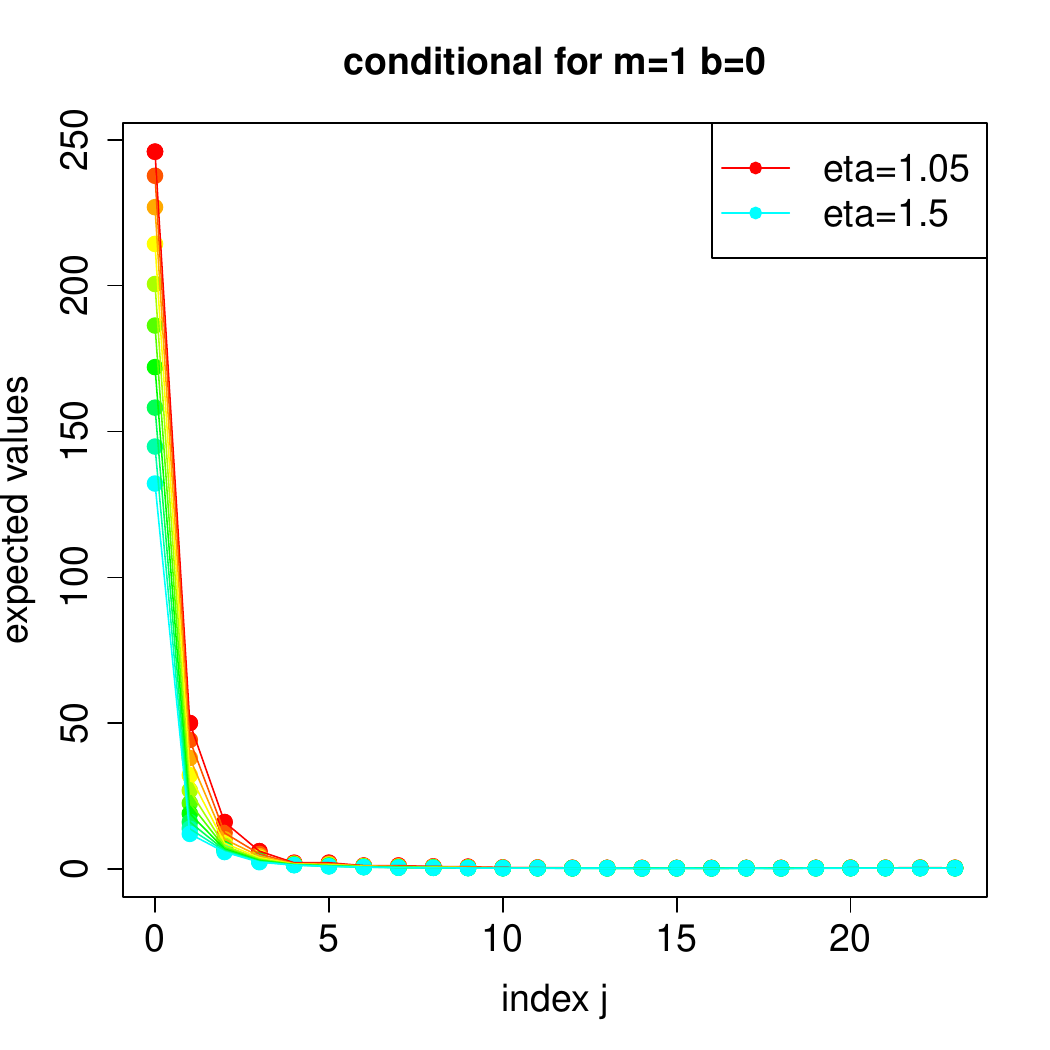}
\end{center}
\end{minipage}
\begin{minipage}[t]{0.4\textwidth}
\begin{center}
\includegraphics[width=\textwidth]{./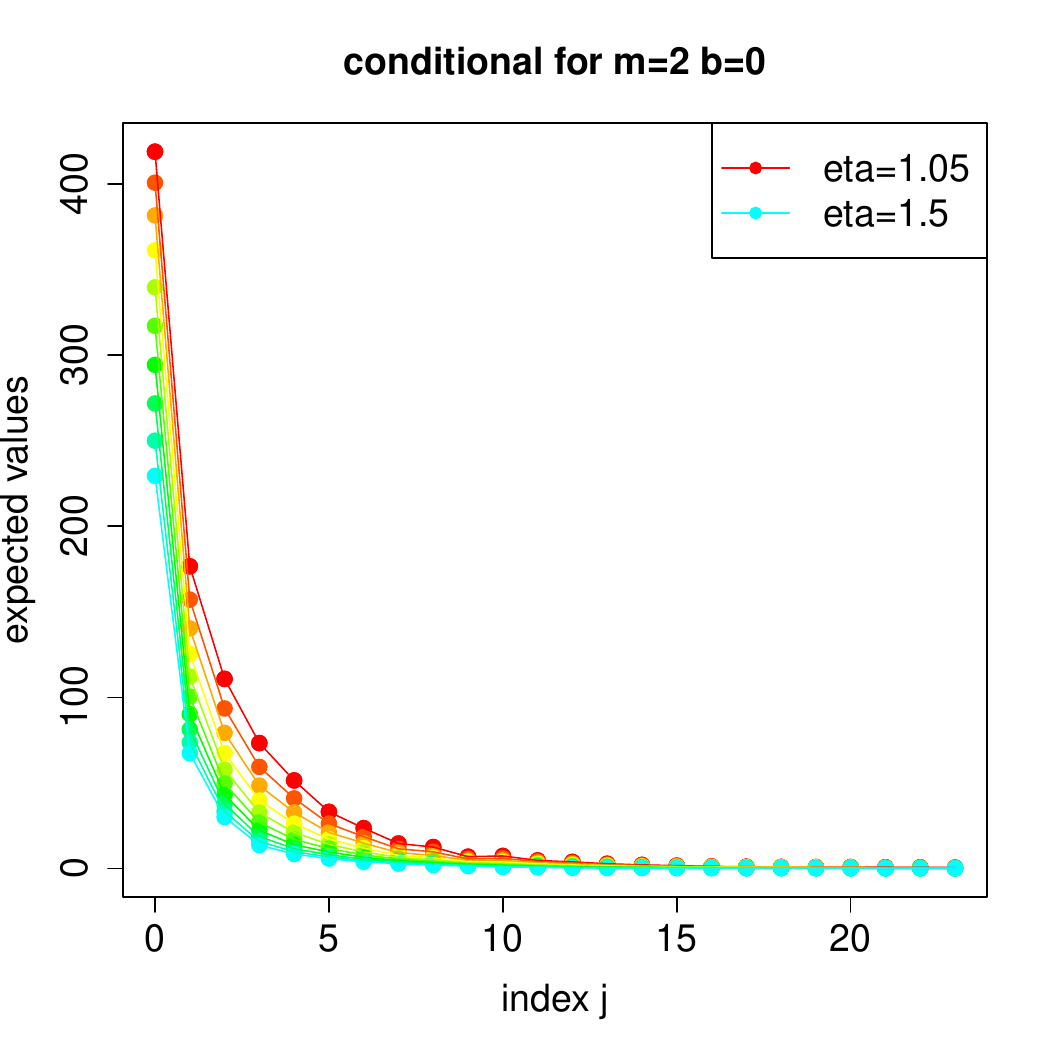}
\end{center}
\end{minipage}
\begin{minipage}[t]{0.4\textwidth}
\begin{center}
\includegraphics[width=\textwidth]{./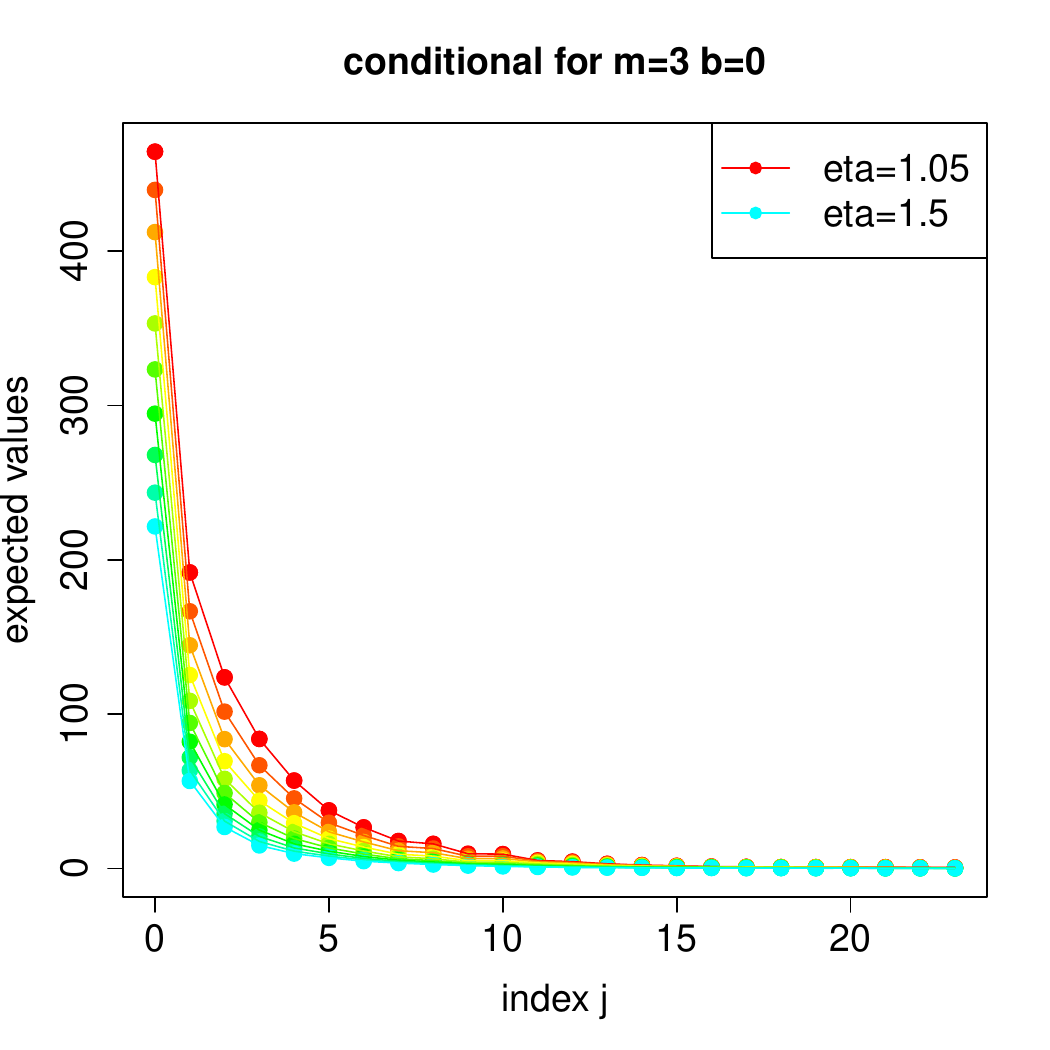}
\end{center}
\end{minipage}
\begin{minipage}[t]{0.4\textwidth}
\begin{center}
\includegraphics[width=\textwidth]{./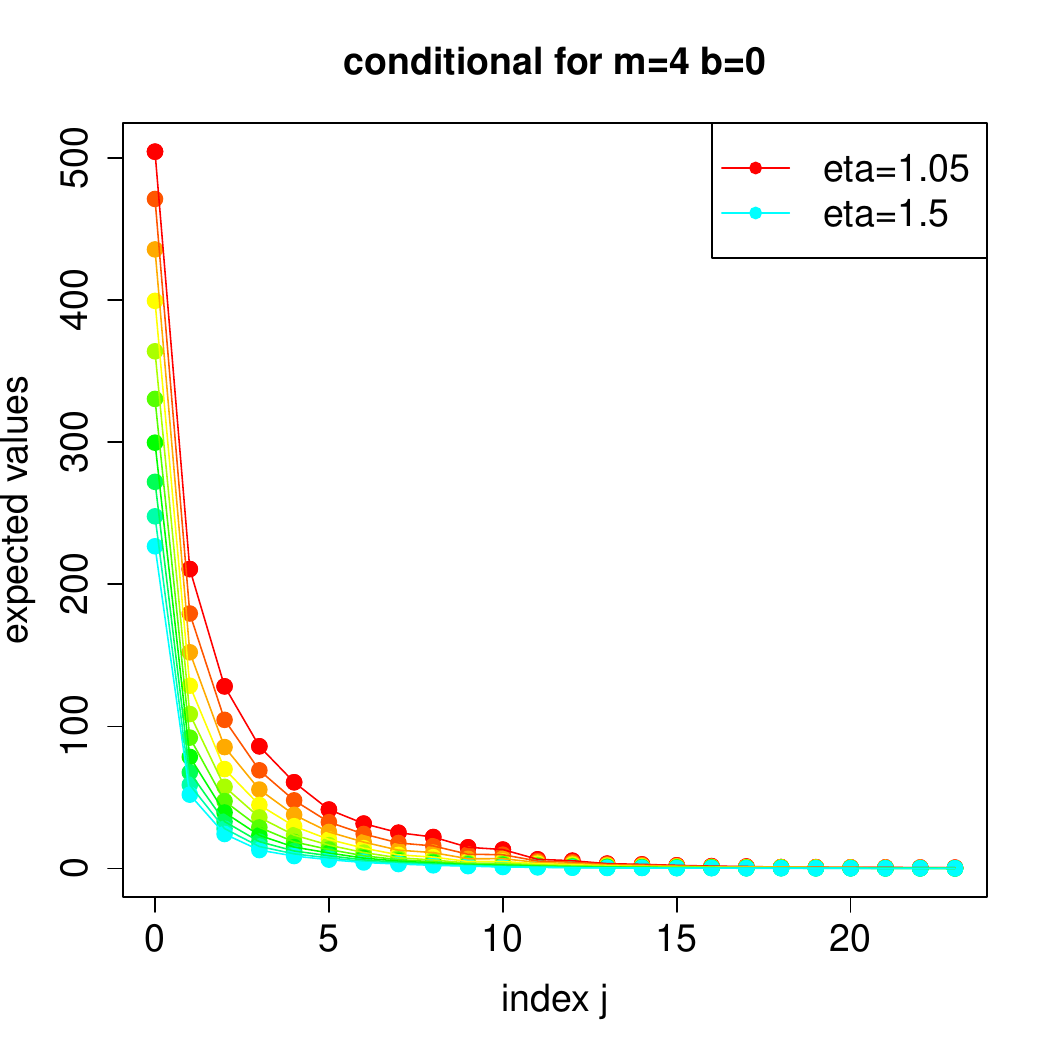}
\end{center}
\end{minipage}
\begin{minipage}[t]{0.4\textwidth}
\begin{center}
\includegraphics[width=\textwidth]{./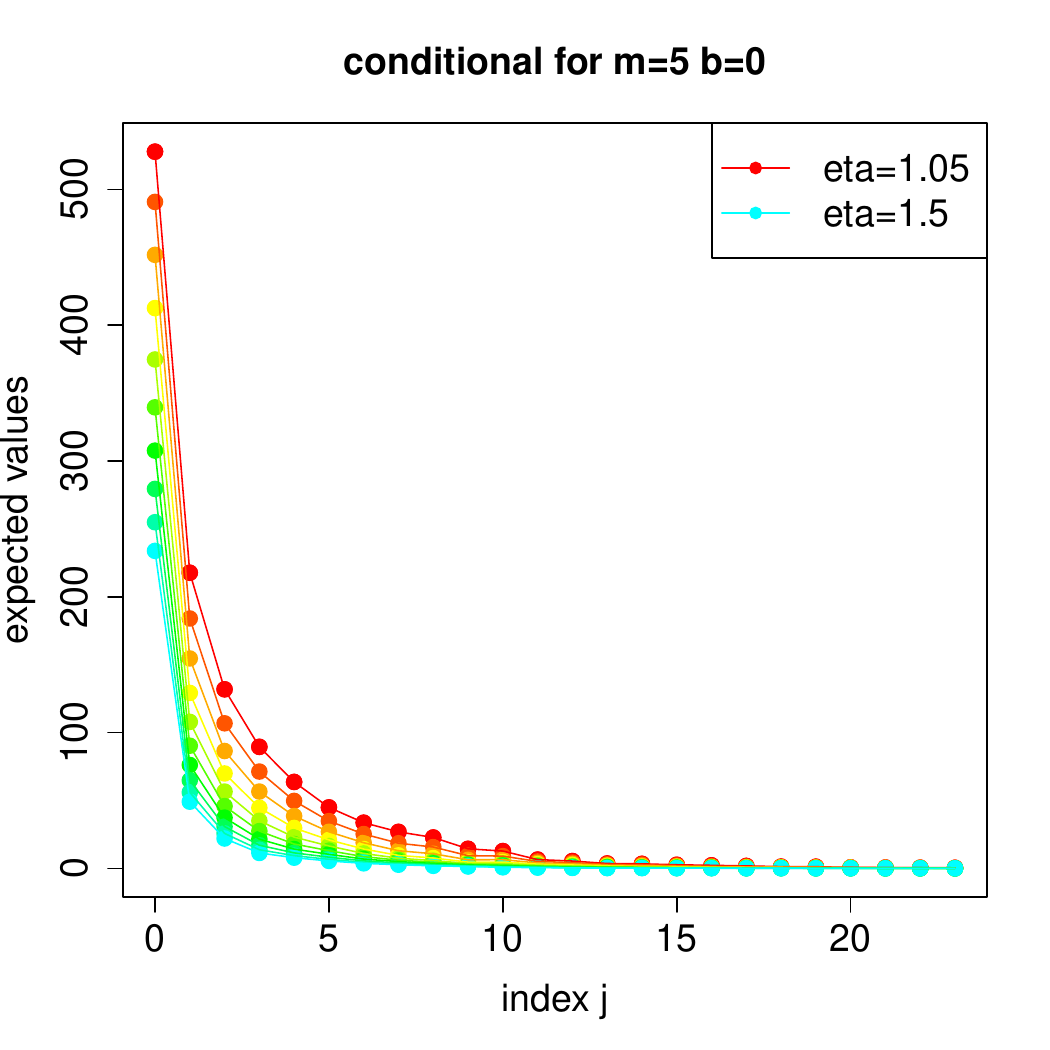}
\end{center}
\end{minipage}
\end{center}
\vspace{-.65cm}
\caption{RNN approximations of the conditional means
$h_j(b, m;\eta)$, $j\ge 0$,  for capacity
ratios $\eta\in \{1.05, 1.10, \ldots, 1.50\}$ (in different colors),
$b=0$ and $m\in \{1,\ldots, 5\}$ (different plots),
top-left is taken from Figure \ref{NB network approximation 2}.}
\label{figure conditional 1}
\end{figure}

\begin{figure}[htb!]
\begin{center}
\begin{minipage}[t]{0.4\textwidth}
\begin{center}
\includegraphics[width=\textwidth]{./Plots/NetworkGELU_NB_meanFG_1000.pdf}
\end{center}
\end{minipage}
\begin{minipage}[t]{0.4\textwidth}
\begin{center}
\includegraphics[width=\textwidth]{./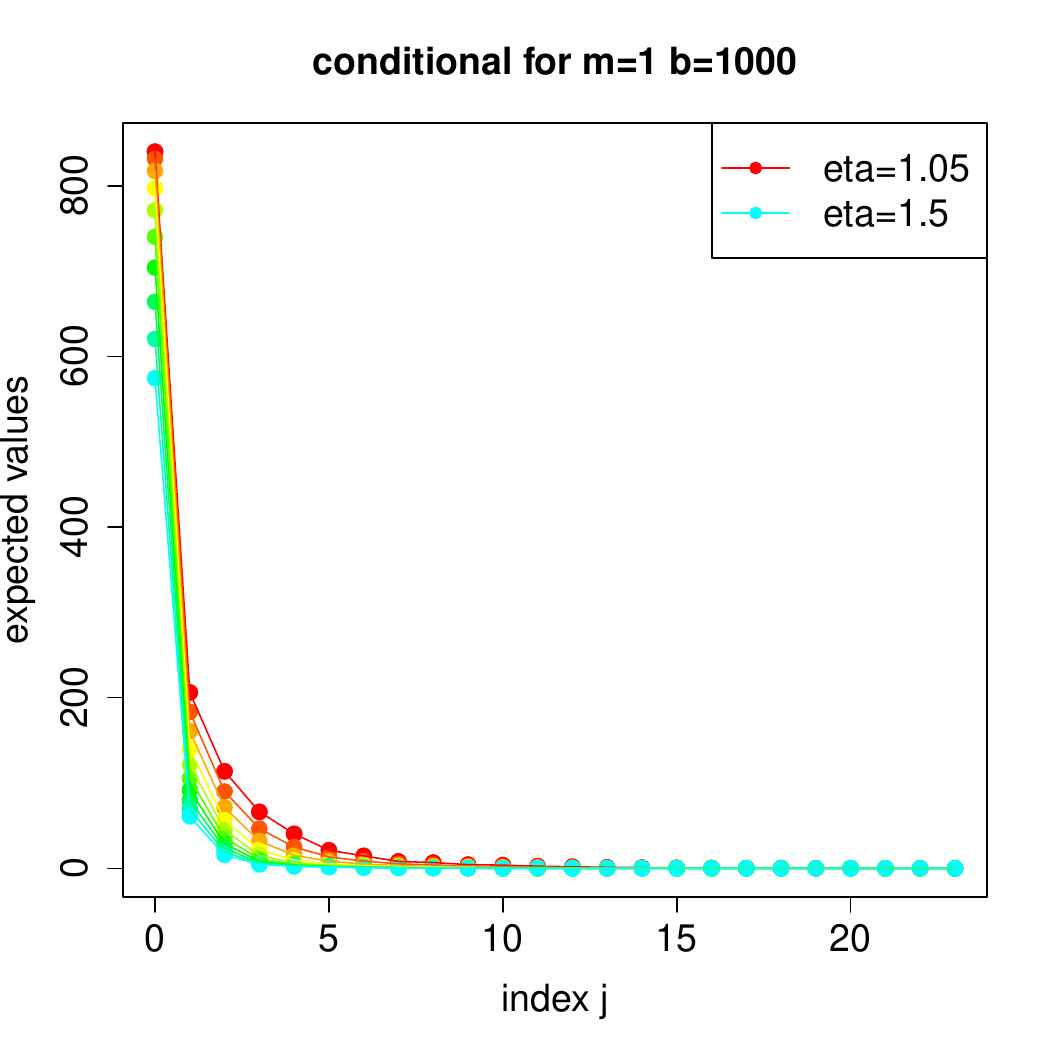}
\end{center}
\end{minipage}
\begin{minipage}[t]{0.4\textwidth}
\begin{center}
\includegraphics[width=\textwidth]{./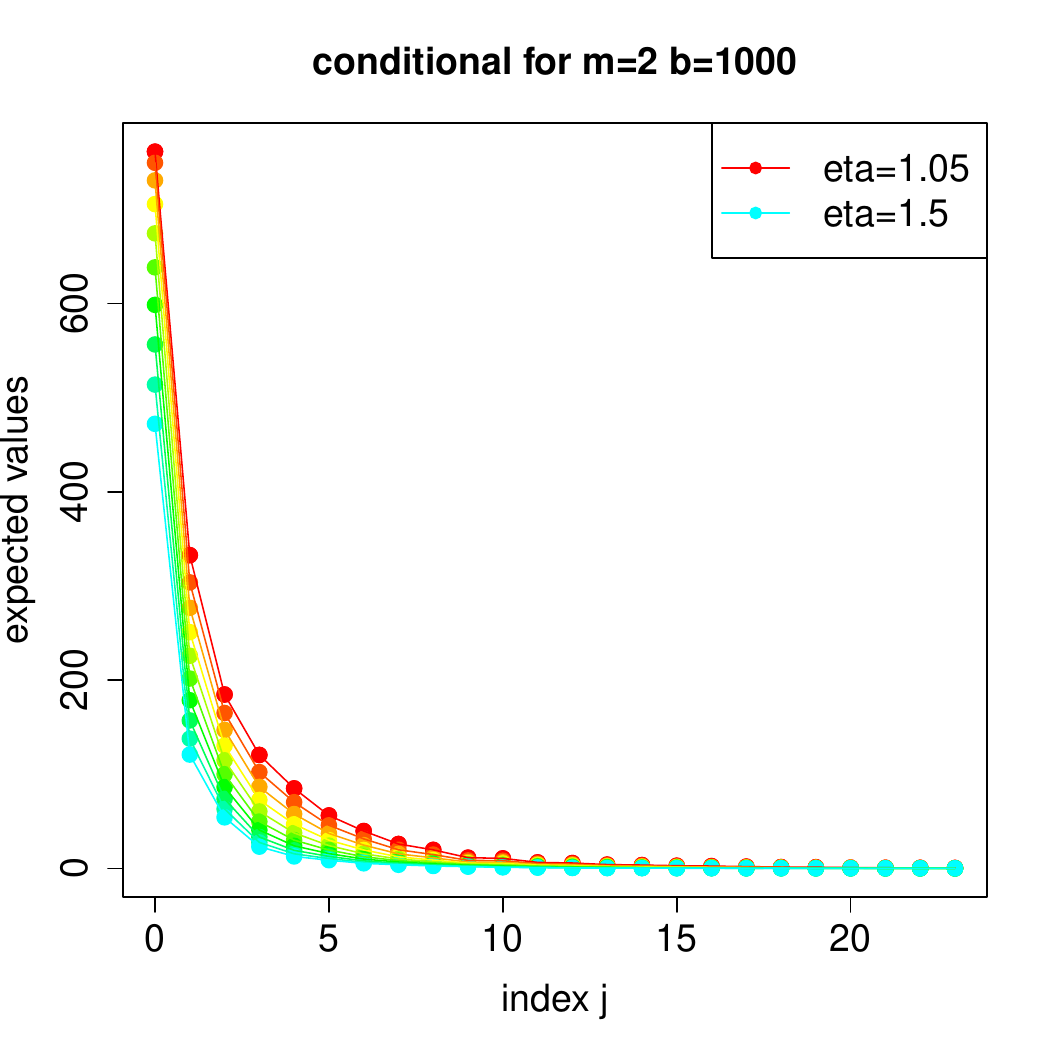}
\end{center}
\end{minipage}
\begin{minipage}[t]{0.4\textwidth}
\begin{center}
\includegraphics[width=\textwidth]{./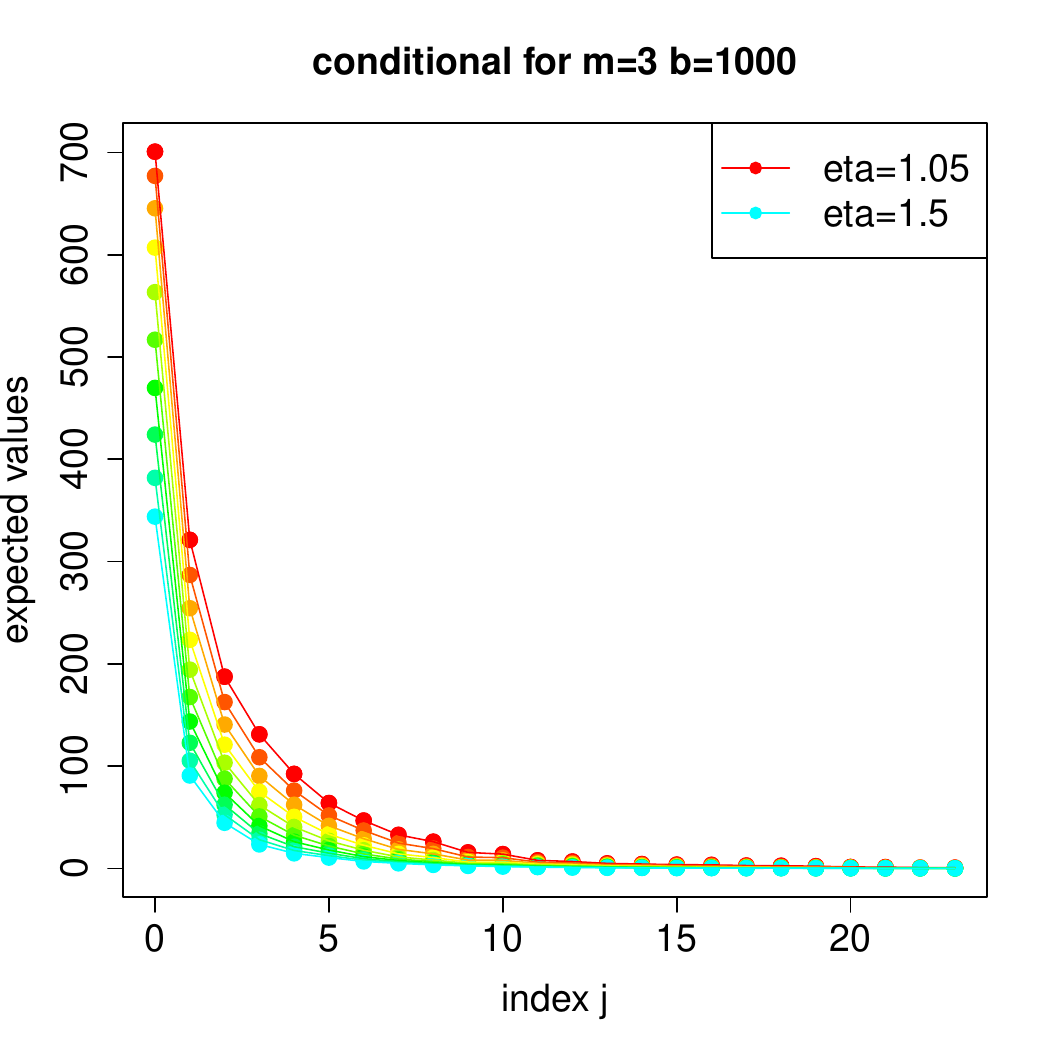}
\end{center}
\end{minipage}
\begin{minipage}[t]{0.4\textwidth}
\begin{center}
\includegraphics[width=\textwidth]{./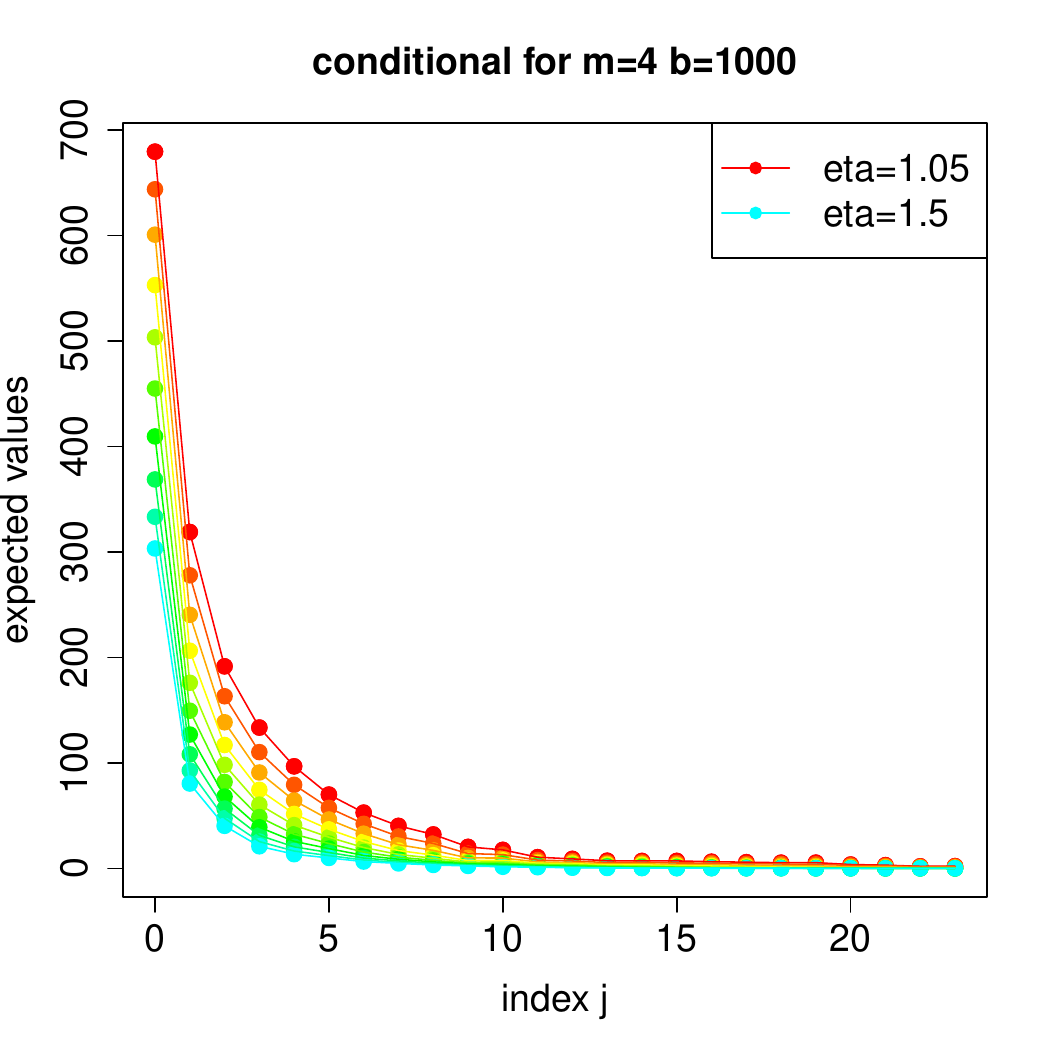}
\end{center}
\end{minipage}
\begin{minipage}[t]{0.4\textwidth}
\begin{center}
\includegraphics[width=\textwidth]{./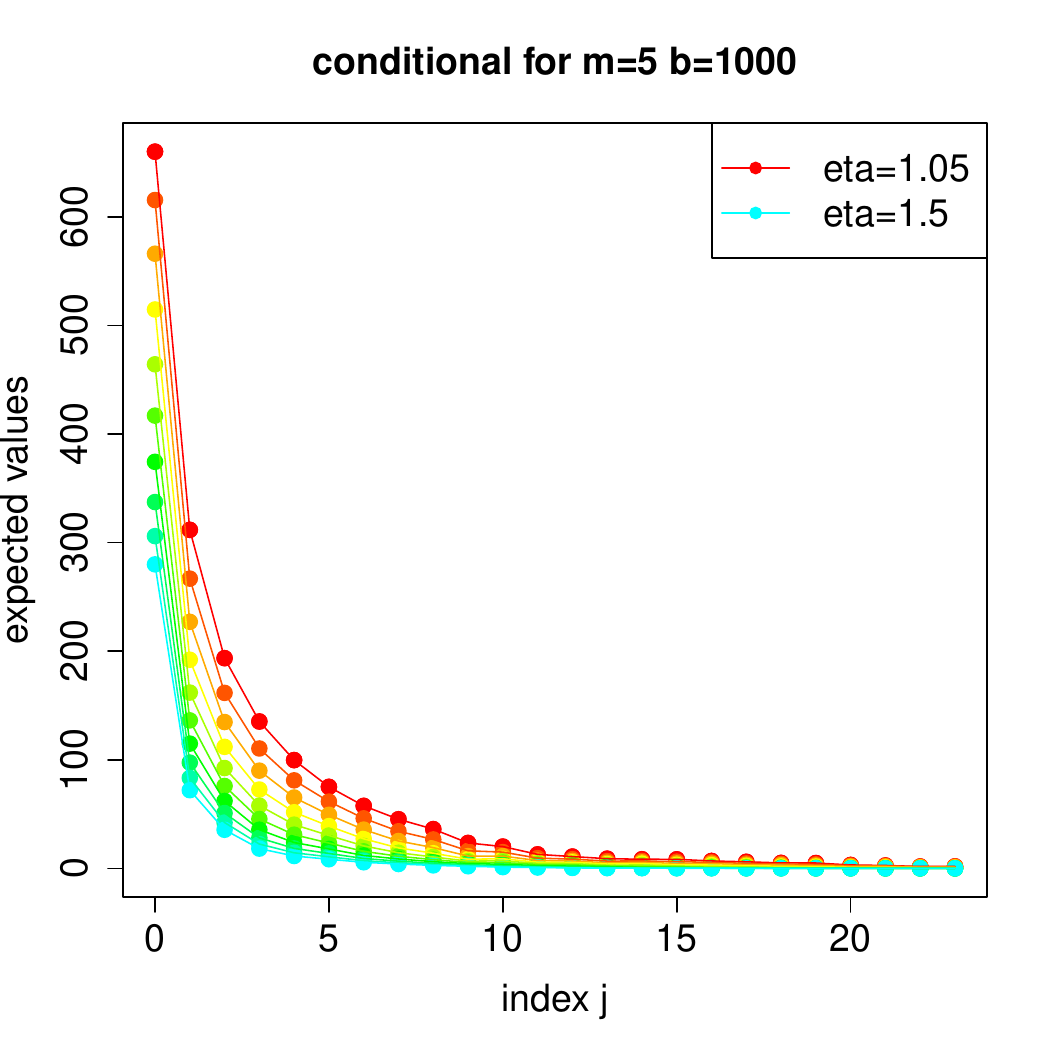}
\end{center}
\end{minipage}
\end{center}
\vspace{-.65cm}
\caption{RNN approximations of the conditional means
$h_j(b, m;\eta)$, $j\ge 0$,  for capacity
ratios $\eta\in \{1.05, 1.10, \ldots, 1.50\}$ (in different colors),
$b=1000$ and $m\in \{1,\ldots, 5\}$ (different plots),
top-left is taken from Figure \ref{NB network approximation 2}.}
\label{figure conditional 2}
\end{figure}

\begin{figure}[htb!]
\begin{center}
\begin{minipage}[t]{0.4\textwidth}
\begin{center}
\includegraphics[width=\textwidth]{./Plots/NetworkGELU_NB_meanFG_5000.pdf}
\end{center}
\end{minipage}
\begin{minipage}[t]{0.4\textwidth}
\begin{center}
\includegraphics[width=\textwidth]{./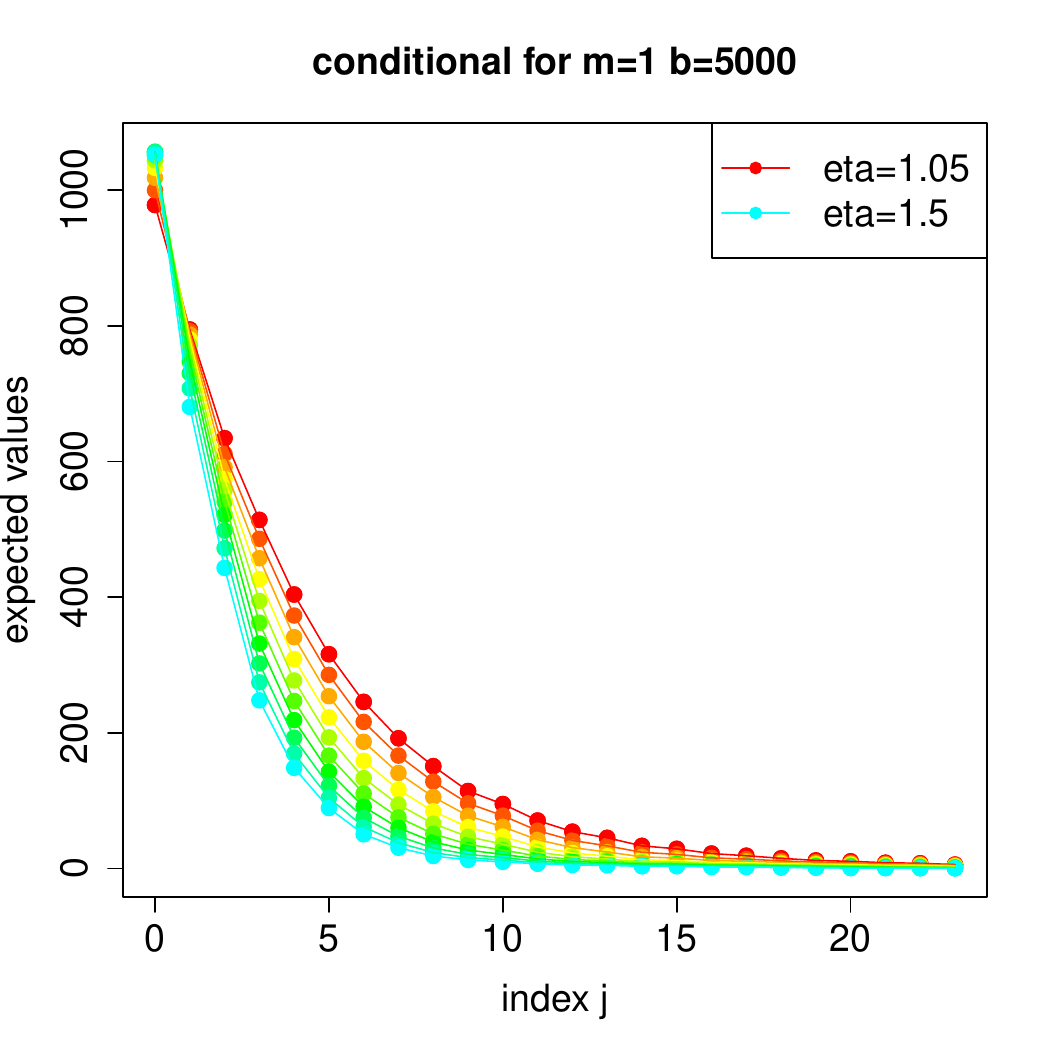}
\end{center}
\end{minipage}
\begin{minipage}[t]{0.4\textwidth}
\begin{center}
\includegraphics[width=\textwidth]{./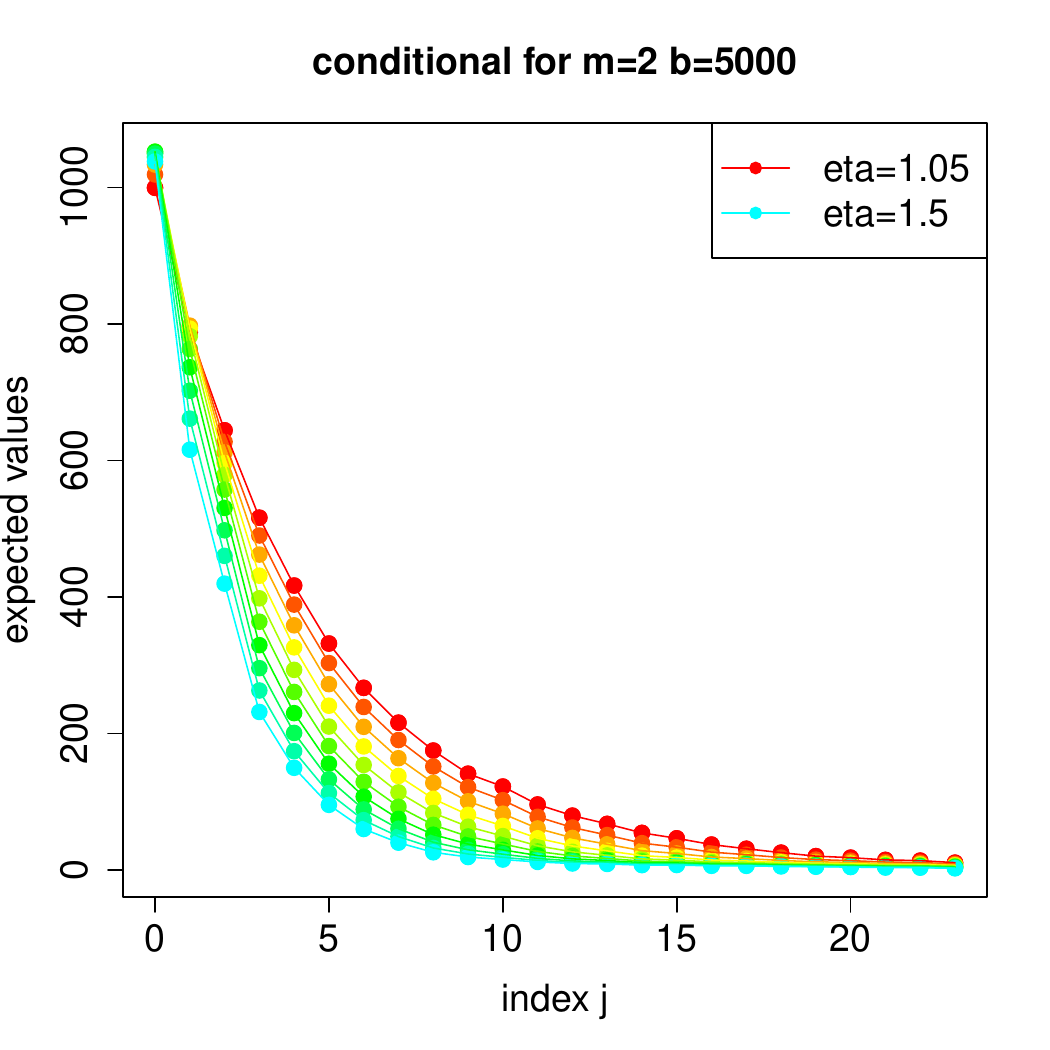}
\end{center}
\end{minipage}
\begin{minipage}[t]{0.4\textwidth}
\begin{center}
\includegraphics[width=\textwidth]{./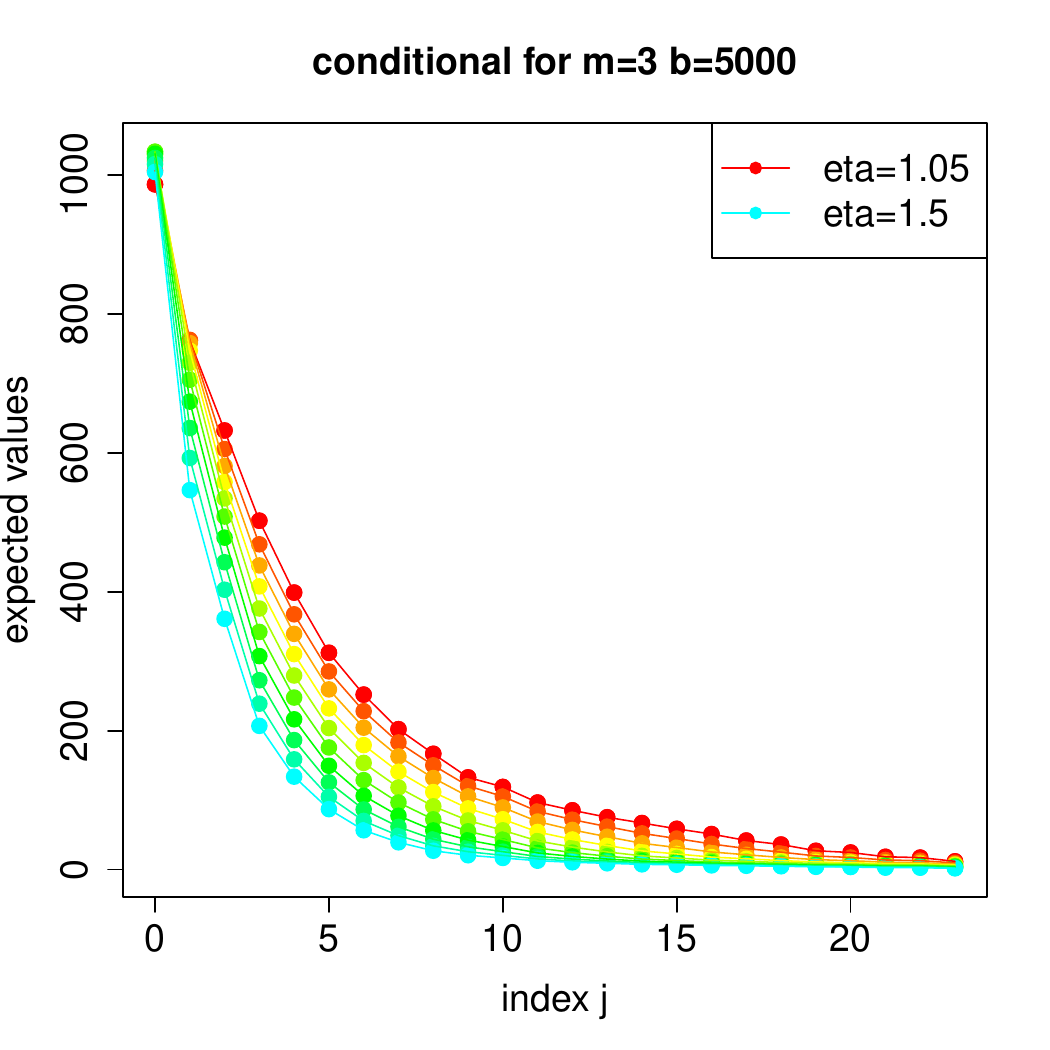}
\end{center}
\end{minipage}
\begin{minipage}[t]{0.4\textwidth}
\begin{center}
\includegraphics[width=\textwidth]{./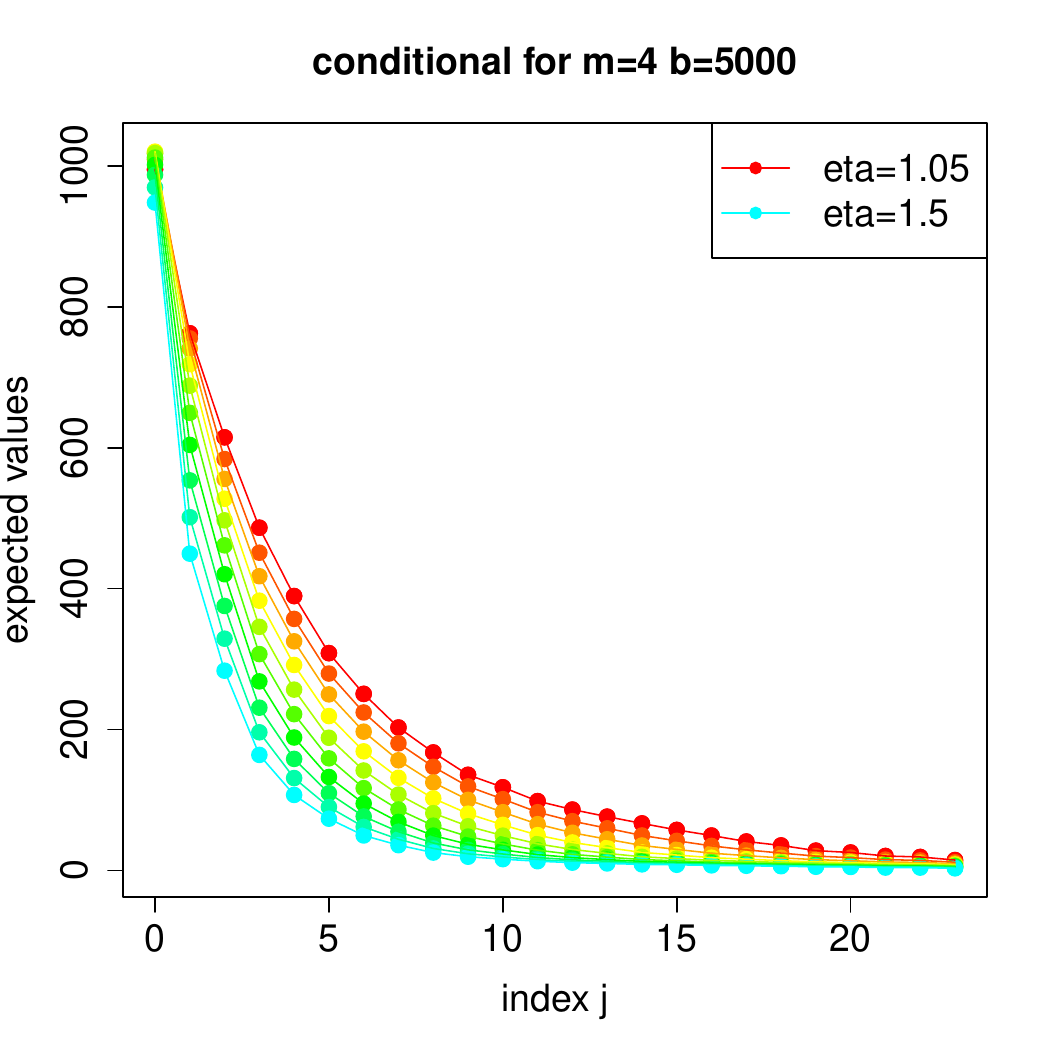}
\end{center}
\end{minipage}
\begin{minipage}[t]{0.4\textwidth}
\begin{center}
\includegraphics[width=\textwidth]{./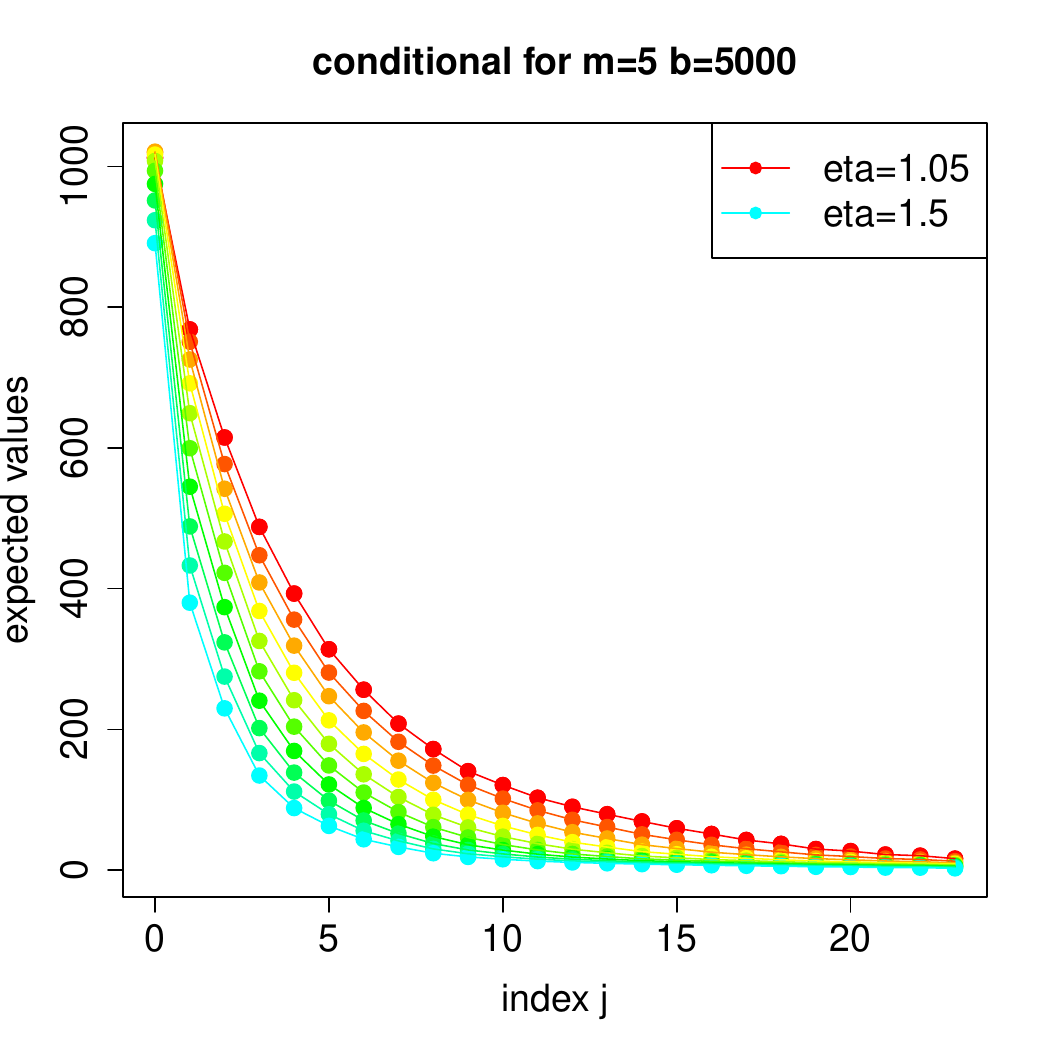}
\end{center}
\end{minipage}
\end{center}
\vspace{-.65cm}
\caption{RNN approximations of the conditional means
$h_j(b, m;\eta)$, $j\ge 0$,  for capacity
ratios $\eta\in \{1.05, 1.10, \ldots, 1.50\}$ (in different colors),
$b=5000$ and $m\in \{1,\ldots, 5\}$ (different plots),
top-left is taken from Figure \ref{NB network approximation 2}.}
\label{figure conditional 3}
\end{figure}

\end{document}